\theoremstyle{plain}
\newtheorem{Th}{Theorem}[section]
\newtheorem{Cor}[Th]{Corollary}
\newtheorem{Lem}[Th]{Lemma}
\newtheorem{Prop}[Th]{Proposition}
\theoremstyle{definition}
\newtheorem{Def}{Definition}[section]
\theoremstyle{remark}
\newtheorem*{Rem}{Remark}
\numberwithin{equation}{section}
\newcommand{\DD}{{\mathbb D}}
\newcommand{\ZZ}{{\mathbb Z}}
\newcommand{\VV}{{\mathbb V}}
\newcommand{\WW}{{\mathbb W}}
\newcommand{\bpi}{\boldsymbol{\pi}}
\newcommand{\bY}{\boldsymbol{Y}}
\newcommand{\bX}{\boldsymbol{X}}
\newcommand{\bSi}{\boldsymbol{\Sigma}}
\newcommand{\bT}{\boldsymbol{T}}
\newcommand{\bP}{\boldsymbol{P}}
\newcommand{\bC}{\boldsymbol{C}}
\begin{document}

\title{Discrete Darboux system with self-consistent sources and~its~symmetric~reduction}

\author{Adam Doliwa}
\address{A. Doliwa: Faculty of Mathematics and Computer Science, University of Warmia and Mazury,
ul.~S{\l}oneczna~54, 10-710~Olsztyn, Poland}
\email{doliwa@matman.uwm.edu.pl}
\urladdr{http://wmii.uwm.edu.pl/~doliwa/}

\author{Runliang Lin}
\address{R. Lin: Department of Mathematical Sciences, Tsinghua University, Beijing 100084, P.R. China}
\email{rlin@mail.tsinghua.edu.cn}
\urladdr{http://faculty.math.tsinghua.edu.cn/~rlin/}

\author{Zhe Wang}
\address{Z. Wang: Department of Mathematical Sciences, Tsinghua University, Beijing 100084, P.R. China}
\email{zhe-wang17@mails.tsinghua.edu.cn}

%
\keywords{discrete integrable systems with self-consistent sources; Darboux euqation; Darboux transformations; symmetric reduction}

\begin{abstract}
The discrete non-commutative Darboux system of equations with self-consistent sources is constructed, utilizing both the vectorial fundamental (binary Darboux) transformation and the method of additional independent variables. Then the symmetric reduction of discrete Darboux equations with sources is presented. In order to provide a simpler version of the resulting equations we introduce the $\tau/\sigma$ form of the (commutative) discrete Darboux system.  Our equations give, in continuous limit, the version with self-consistent sources of the classical symmetric Darboux system.
\end{abstract}
\maketitle

\section{Introduction}
The study of the soliton equations with self-consistent sources dates back to Mel'nikov~\cite{Mel,Mel1987A}, and this kind of equations is of practical importance nowadays. In physics, these equations have applications in fluid mechanics, plasma physics and solid state physics. In general, the sources make the soliton waves interact differently by changing the velocity of the waves~\cite{Lin2001,Zeng2000}. 

The well-known Kadomtsev-Petviashvili (KP) equation with self-consistent sources, which in physics represents the interaction between long waves and short waves propagating in the $x-y$ plane, has the following form~\cite{Mel}:
\begin{equation}
\partial_x(4\partial_tu - 12u\partial_xu - \partial_x^3 u) - 3\partial_y^2u = -4\sum_{j = 1}^K r_jq_j,
\end{equation}
with the source term satisfying:
\begin{equation}
\partial_y q_j = \partial_x^2q_j + 2uq_j, \qquad j = 1, \cdots, K,	
\end{equation}
\begin{equation}
\partial_y r_j = -\partial_x^2r_j - 2ur_j, \qquad j = 1, \cdots, K.	
\end{equation}
This example shows the generic structure of equations with sources, where the additional source terms are bilinear expressions formed by solutions of the linear problem (and its adjoint) of the equation without sources.
For more examples, one can refer to the review~\cite{Lin2016,Lin2013} and the references therein.

Discrete integrable systems~\cite{DSI} are essential to understanding of integrability.  Recently, the discrete KP equation~\cite{Hirota,OHTI} with self-consistent sources was proposed in \cite{Hu2006} using the "source generalization" method. The main idea of the method is to replace arbitrary constants in the solutions of an integrable equation without sources by arbitrary functions, and then construct the equation based on this new expressions. In \cite{doliwakp} new point of view on discrete KP equation with sources has been proposed (see also \cite{LinDu} for additional details). It is based on description of the flow generated by source terms as generated by vectorial binary Darboux transformations. This is a discrete analogue of the squared eigenfunction symmetry \cite{Oevel1993} approach, which was used in \cite{Liu2008a} to modify a specific flow of a given hierarchy to obtain its version with self-consistent sources. Equivalently, because for discrete integrable systems there is no essential difference between Darboux transformations and shifts in additional discrete variables~\cite{LeBen}, the flow can be interpreted as collective shift in the corresponding auxiliary discrete variables. This point of view demystifies the subject of integrable equations with sources by embedding them into the equations without sources.

In \cite{HuHuTam} the `source generation' procedure was applied to three-dimensional three-wave interaction equation, which essentialy coincides with the system (in dimension threee) introduced by Darboux~\cite{Darboux-OS} in his studies of conjugate nets. The same technique was used in \cite{GegenhashiYan} to construct discrete three-wave interaction equation with self-consistent sources. The discrete Darboux equations were first proposed on the level of non-local $\bar{\partial}$-dressing method in~\cite{BoKo}. Their simple geometric meaning in terms of lattices of planar quadrilaterals (discrete conjugate nets) which gave also multidimensional consistency of the equations was presented in~\cite{MQL}. Darboux transformations of the discrete Darboux equations together with their geometric meaning were studied first in~\cite{MDS,TQL}. 

In this paper we investigate some of problems mentioned at the end of previous work~\cite{doliwakp}: \begin{itemize}
	\item non-commutative discrete Darboux equations with sources~\cite{gaql};
		\item the self-consistent source extension of the symmetric reduction of the discrete Darboux equations (known also as the discrete CKP system)~\cite{DS-sym,Schief2003Lattice}.
\end{itemize}
Its layout is as follows. In Section~\ref{sec:DD} we recall first relevant information on non-commutative discrete Darboux equations and their fundamental (binary Darboux) transformation in vectorial form. Then, on the basis of our approach, we propose their extension with self-consistent sources and the corresponding linear problems. Next we interpret the extended Darboux equation within the formalism of additional variables. The last part of this Section is devoted to presentation of new elements of the interpretation of the fundamental transformation within the formalism. They are used in the next Section~\ref{sec:sym}, where we formulate the symmetric reduction of the discrete Darboux equations with sources. In order to write down in Section~\ref{sec:CKP} the symmetric discrete Darboux equations with sources as an extension of the $\tau$-function form of the discrete CKP equation~\cite{Kashaev,Schief2003Lattice} we have to rewrite first in an analogous $\tau/\sigma$ form the original discrete Darboux equations. Then we proceed to the symmetric reduction and its extension with self-consistent sources.
Finally, in Section~\ref{sec:cont} we take the continuous limit of the discrete equations presented in previous parts of the paper. In particular, we present the symmetric Darboux equations with self-consistent sources.

\section{Discrete non-comutative Darboux system with self-consistent sources}
\label{sec:DD}
\subsection{Discrete non-commutative Darboux equations and their fundamental transformation} \label{sec:DDT}
In this Section we recall the necessary information about the discrete Darboux system of equations~\cite{BoKo,MQL,KoSchief2,DS-sym} and its fundamental (binary Darboux) transformation~\cite{MDS,TQL,MM} in the non-commutative setting~\cite{gaql}.

Let $\VV$ be right linear space over a division ring $\DD$. Consider vector-valued functions $\bX_i \colon \ZZ^N \to \VV$, $i=1,\dots , M$, which satisfy the following linear system
\begin{equation}
\label{d-lin}
\bX_{i(j)}  = \bX_i + \bX_j Q_{ij(j)}, \qquad i\neq j,
\end{equation}
with functions $Q_{ij}\colon \ZZ^N \to \DD$, $i\neq j$, called the rotation coefficients; 
here the index in brackets denotes the forward shift in the corresponding discrete variable.
Its compatibility condition 
\begin{equation}
\label{d}
 Q_{ij(k)}  =  Q_{ij} +Q_{kj}Q_{ik(k)}, \qquad 1\leq i,j,k \quad \text{distinct},
\end{equation}
forms a system of discrete non-commutative (for non-commutative $\DD$) Darboux equations. The same system is also the compatibility condition for the adjoint linear problem 
\begin{equation}
\label{d-ad}
\bX_{j(i)}^*  =  \bX_{j}^*  + Q_{ij} \bX_{i(i)}^*,  \qquad i\neq j,
\end{equation}
where $\bX_i^* \colon \ZZ^N \to \VV^*$, $i=1,\dots , M$ are functions into the dual space of $\VV$.  
\begin{Rem}
	The discrete non-commutative Darboux system can be solved (see \cite{DS-sym} for the commutative case) for evolution of functions $ Q_{ij(j)}$
	\begin{equation} \label{eq:dD-solv}
	Q_{ij(jk)} = (1 - Q_{kj(j)}Q_{jk(k)})^{-1} \left( Q_{ij(j)} + Q_{kj(j)}Q_{ik(k)}\right) .
	\end{equation}
\end{Rem}

The  discrete non-commutative Darboux system implies existence of the potentials $\rho_i \colon \ZZ^N \to \DD$ defined by equations
\begin{equation} \label{eq:rho-constr}
\rho_{i(j)} = \rho_i ( 1 - Q_{ji(i)} Q_{ij(j)}) , \quad i\ne j \; .
\end{equation}
\begin{Rem}
In addition, when the division ring $\DD$ is commutative there exists yet another potential, the $\tau$-function, given by
\begin{equation} \label{eq:rho-tau}
\rho_i = \frac{\tau_{(i)}}{\tau}.
\end{equation}
\end{Rem}

Below we present non-commutative versions of elementary symmetries of the discrete Darboux equations and of the corresponding linear problems~\cite{DS-sym}.
\begin{Cor} \label{cor:QX-freedom}
Given solutions $Q_{ij}, \bX_i, \bX^*_j$ of equations~\eqref{d-lin}-\eqref{d-ad} and given non-vanishing functions $a_i\colon \ZZ \to \DD$ of single variables $n_i$, $i=1,\dots ,N$, then the functions $\bar{Q}_{ij}, \bar{\bX}_i, \bar{\bX}^*_j$ given by
\begin{equation}
\bar{Q}_{ij(j)} = a_j^{-1}Q_{ij(j)} a_i, \qquad \bar{\bX}_i = \bX_i a_i, \qquad
\bar{\bX}^*_{j(j)} = a_j^{-1} \bX^*_{j(j)},
\end{equation}
satisfy also equations~\eqref{d-lin}-\eqref{d-ad}. Moreover, given an additional system of  non-vanishing functions $b_i\colon \ZZ \to \DD$ of single variables $n_i$, then $\bar{\rho}_i = b_i \rho_i a_i$ are the first potentials corresponding to $\bar{Q}_{ij}$.
\end{Cor}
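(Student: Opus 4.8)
The proof is a direct verification: one substitutes the gauged quantities into \eqref{d-lin}, \eqref{d-ad}, \eqref{d} and \eqref{eq:rho-constr} and checks that each is satisfied. The only structural input is that $a_i$ (and $b_i$) depends on the single variable $n_i$, hence commutes with every forward or backward shift in a variable $n_j$, $j\neq i$; the rest is the familiar non-commutative bookkeeping of keeping track of left versus right placement of factors.

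For \eqref{d-lin}: since $i\neq j$, the shift $(j)$ leaves $a_i$ alone, so $\bar\bX_{i(j)}=\bX_{i(j)}a_i$; on the other hand $\bar\bX_j\bar Q_{ij(j)}=\bX_j a_j\,a_j^{-1}Q_{ij(j)}a_i=\bX_j Q_{ij(j)}a_i$, where it is essential that the rotation coefficient is gauged \emph{two-sidedly}, $a_j^{-1}(\cdot)a_i$, so that the inner factor $a_j$ telescopes against the right factor of $\bar\bX_j$ — this is the point where non-commutativity must be handled carefully. Adding $\bar\bX_i=\bX_i a_i$ and invoking \eqref{d-lin} for $(Q,\bX)$ reproduces $\bar\bX_{i(j)}$. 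The Darboux equations \eqref{d} themselves are then obtained either by the same conjugation-and-telescoping computation (the factors $a_k$ now cancelling) or, more cheaply, as the compatibility condition of the already-verified \eqref{d-lin}. The adjoint problem \eqref{d-ad} is checked the same way; here one must keep in mind that the normalization is stated for the shifted quantity $\bar\bX^*_{j(j)}$, so that $\bar\bX^*_j=(T_j^{-1}a_j)^{-1}\bX^*_j$ carries a back-shifted gauge factor, and that the forward shift in the slot $\bX^*_{i(i)}$ of \eqref{d-ad} restores the factor $a_i$ to the base point, where it telescopes against the $a_i$ sitting inside $\bar Q_{ij}$ while $a_j^{-1}$ is pulled out on the left.

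For the first potential I would substitute $\bar\rho_i=b_i\rho_i a_i$ together with $\bar Q_{ji(i)}=a_i^{-1}Q_{ji(i)}a_j$ and $\bar Q_{ij(j)}=a_j^{-1}Q_{ij(j)}a_i$ into \eqref{eq:rho-constr}. The product $\bar Q_{ji(i)}\bar Q_{ij(j)}=a_i^{-1}\,Q_{ji(i)}Q_{ij(j)}\,a_i$ is a conjugation by $a_i$ (the $a_j$'s cancelling), hence $1-\bar Q_{ji(i)}\bar Q_{ij(j)}=a_i^{-1}(1-Q_{ji(i)}Q_{ij(j)})a_i$; multiplying on the left by $\bar\rho_i=b_i\rho_i a_i$ the factor $a_i a_i^{-1}$ disappears and one is left with $b_i\rho_i(1-Q_{ji(i)}Q_{ij(j)})a_i=b_i\rho_{i(j)}a_i$ by \eqref{eq:rho-constr} for $\rho$, and $b_i\rho_{i(j)}a_i=(b_i\rho_i a_i)_{(j)}=\bar\rho_{i(j)}$ since $b_i$ and $a_i$ are untouched by the shift $(j)$. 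That a second family $b_i$ may be chosen freely — and only on the left — reflects that $\rho$ enters \eqref{eq:rho-constr} purely through left multiplication.

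I do not expect a genuine obstacle: the statement is the non-commutative counterpart of the classical gauge symmetry of the Darboux system (recalled in the commutative case from \cite{DS-sym}), and every cancellation above is forced by the two-sided/one-sided placement prescribed in the statement. The one point that deserves explicit care in the write-up is exactly this placement bookkeeping, together with the observation that the $n_i$-only dependence of $a_i,b_i$ is what allows the outer factors to commute past the lattice shifts.
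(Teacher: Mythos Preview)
Your proposal is correct; the verification by substitution---using that $a_i,b_i$ are invariant under shifts in all variables $n_j$ with $j\neq i$ and that the two-sided conjugation makes the inner gauge factors telescope---is exactly the intended argument. The paper in fact states Corollary~\ref{cor:QX-freedom} without proof (as the non-commutative analogue of an elementary symmetry recorded in~\cite{DS-sym}), so your write-up simply supplies the routine computation the paper leaves to the reader.
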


Following~\cite{DS-sym} let us give non-commutative versions of the "diagonal" rotation coefficients $Q_{ii}$, which can be defined by the compatible system of equations
\begin{equation} \label{eq:Qii-def}
Q_{ii(j)} = Q_{ii} + Q_{ji}Q_{ij(j)}.
\end{equation}
These coefficients $Q_{ii}$ are given up to an arbitrary additive function of single variable $n_i$.

To conclude this Section let us present basic elements~\cite{MDS,TQL} the vectorial fundamental transformation (known also as the binary Darboux transformation) of the discrete Darboux equations on the non-commutative level~\cite{gaql}.
Given the solution $\bY=(\bY_i)_{i=1}^N$ of the linear problem~\eqref{d-lin} taking values in the linear space $\WW$, and given solutions $\bY^*=(Y^*_i)_{i=1}^N$ of the adjoint linear problem~\eqref{d-ad} with values in $\WW^*$, there exist the following linear operator valued potentials $\Omega$ defined by compatible equations:
	\begin{align}
	\Omega[\bY,\bY^*]_{(i)}  & =  \Omega[\bY,\bY^*] + \bY_i\otimes \bY_{i(i)}^*, \qquad \Omega[\bY,\bY^*]\in \mathrm{L}(\WW), \\
\label{eq:O-XY*} \Omega[\bX,\bY^*]_{(i)}  & =  \Omega[\bX,\bY^*] + \bX_i\otimes \bY_{i(i)}^*, \qquad \Omega[\bX,\bY^*]\in \mathrm{L}(\WW,\VV),\\
\label{eq:O-YX*}	\Omega[\bY,\bX^*]_{(i)}  & =\Omega[\bX,\bY^*]+ \bY_i\otimes \bX_{i(i)}^* , \qquad \Omega[\bY,\bX^*]\in \mathrm{L}(\VV,\WW).
	\end{align}
If the potential $\Omega[\bY,\bY^*]$ is invertible, then
	\begin{align} \label{eq:DT-X}
	\tilde{\bX}_i & = \bX_i - \Omega[\bX,\bY^*]\Omega[\bY,\bY^*]^{-1}\bY_i, \\
	\tilde{\bX}_i^* & = \bX_i^* - \bY_i^*\Omega[\bY,\bY^*]^{-1}\Omega[\bY,\bX^*],
	\end{align}
	satisfy the linear problem and its adjoint correspondingly, with the transformed fields
	\begin{equation} \label{eq:DT-Q}
	\tilde{Q}_{ij} = Q_{ij} - \bY_j^*\Omega[\bY,\bY^*]^{-1}\bY_i.
	\end{equation}
The potential $\Omega[\bX,\bX^*]$ transforms itself according to
\begin{equation}
\Omega[\tilde{\bX},\tilde{\bX}^*] = \Omega[\bX,\bX^*] - \Omega[\bX,\bY^*]\Omega[\bY,\bY^*]^{-1} \Omega[\bY,\bX^*].
\end{equation}
The corresponding transformation rule of the first potentials reads
\begin{equation}
\label{eq:fund-vect-rho}
\tilde{\rho}_i  = \rho_i(1 + \bY^*_{i(i)}
\Omega[\bY,\bY^*]^{-1}
\bY_i).
\end{equation}
\begin{Rem}
	The $\tau$-function (for $\DD$ commutative) transforms as
	\begin{equation} \label{eq:DT-tau}
	\tilde{\tau}= \tau \det \Omega[\bY, \bY^*].
	\end{equation}
	
\end{Rem}

\subsection{The discrete non-commutative Darboux equations with sources and the corresponding linear problems} Motivated by the approach introduced in~\cite{doliwakp} we propose the discrete non-commutative Darboux equations. We interpret the fundamental transformation shift as a transition in additional variable, denoted here by $n_K$. 

\begin{Lem}
	In notation of Section~\ref{sec:DDT} the vector-valued functions $\Omega[\bY,\bY^*]^{-1}\bY_i \colon \ZZ^N \to \WW$ satisfy the linear problem~\eqref{d-lin} with the transformed fields~\eqref{eq:DT-Q}.
\end{Lem}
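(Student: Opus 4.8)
The plan is to set $\Omega:=\Omega[\bY,\bY^*]$ and $\boldsymbol{Z}_i:=\Omega^{-1}\bY_i\colon\ZZ^N\to\WW$, so that the assertion becomes the identity $\boldsymbol{Z}_{i(j)}=\boldsymbol{Z}_i+\boldsymbol{Z}_j\,\tilde Q_{ij(j)}$ for $i\neq j$, with $\tilde Q_{ij}$ as in~\eqref{eq:DT-Q}. I would prove this by computing $\boldsymbol{Z}_{i(j)}=\Omega_{(j)}^{-1}\bY_{i(j)}$ outright, from three ingredients: the defining equation for $\Omega[\bY,\bY^*]$ in the form of the rank-one shift identity $\Omega_{(j)}=\Omega+\bY_j\otimes\bY^*_{j(j)}$; the linear problem $\bY_{i(j)}=\bY_i+\bY_j Q_{ij(j)}$ from~\eqref{d-lin}; and a non-commutative Sherman--Morrison formula for $\Omega_{(j)}^{-1}$.

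First I would record the perturbed inverse. Since the correction is rank one and the scalar $1+\bY^*_{j(j)}\Omega^{-1}\bY_j\in\DD$ is invertible (this follows from invertibility of both $\Omega$ and $\Omega_{(j)}$, which is part of the standing assumptions for the fundamental transformation), a direct check over $\DD$ gives
\begin{equation*}
\Omega_{(j)}^{-1}=\Omega^{-1}-\boldsymbol{Z}_j\bigl(1+\bY^*_{j(j)}\boldsymbol{Z}_j\bigr)^{-1}\bY^*_{j(j)}\Omega^{-1}.
\end{equation*}
Applying this to $\bY_{i(j)}=\bY_i+\bY_j Q_{ij(j)}$, using $\Omega^{-1}\bY_i=\boldsymbol{Z}_i$ and $\Omega^{-1}\bY_j=\boldsymbol{Z}_j$, collecting the coefficient of $\boldsymbol{Z}_j$, and using the elementary identity $1-(1+c)^{-1}c=(1+c)^{-1}$ with $c:=\bY^*_{j(j)}\boldsymbol{Z}_j$, yields
\begin{equation*}
\boldsymbol{Z}_{i(j)}=\boldsymbol{Z}_i+\boldsymbol{Z}_j\bigl(1+\bY^*_{j(j)}\boldsymbol{Z}_j\bigr)^{-1}\bigl(Q_{ij(j)}-\bY^*_{j(j)}\boldsymbol{Z}_i\bigr).
\end{equation*}

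It then remains to recognise the scalar coefficient of $\boldsymbol{Z}_j$ as $\tilde Q_{ij(j)}$. The forward shift of~\eqref{eq:DT-Q} in the variable $n_j$ reads $\tilde Q_{ij(j)}=Q_{ij(j)}-\bY^*_{j(j)}\Omega_{(j)}^{-1}\bY_{i(j)}=Q_{ij(j)}-\bY^*_{j(j)}\boldsymbol{Z}_{i(j)}$; inserting the expression for $\boldsymbol{Z}_{i(j)}$ just obtained and simplifying once more via $1-c(1+c)^{-1}=(1+c)^{-1}$ produces exactly $\tilde Q_{ij(j)}=(1+\bY^*_{j(j)}\boldsymbol{Z}_j)^{-1}(Q_{ij(j)}-\bY^*_{j(j)}\boldsymbol{Z}_i)$, which completes the verification. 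There is no circularity here: the $\boldsymbol{Z}_{i(j)}$ occurring inside $\tilde Q_{ij(j)}$ is by definition $\Omega_{(j)}^{-1}\bY_{i(j)}$, a quantity already evaluated in the previous step. I expect the only genuine care to lie in the bookkeeping of the order of factors over the non-commutative division ring $\DD$ — in particular, establishing the Sherman--Morrison formula directly rather than quoting its commutative counterpart — which is the main, though still routine, obstacle.
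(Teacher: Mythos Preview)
Your argument is correct. The Sherman--Morrison step is valid over the division ring (your justification that $1+\bY^*_{j(j)}\boldsymbol{Z}_j\neq 0$ follows from invertibility of both $\Omega$ and $\Omega_{(j)}$ is sound: if $1+c=0$ then $\Omega_{(j)}$ would kill $\boldsymbol{Z}_j\neq 0$), and the subsequent bookkeeping checks out.

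The paper's proof takes a shorter route that avoids Sherman--Morrison altogether. It simply forms the difference $(\Omega^{-1}\bY_i)_{(j)}-\Omega^{-1}\bY_i-\Omega^{-1}\bY_j\,\tilde Q_{ij(j)}$, substitutes $\tilde Q_{ij(j)}=Q_{ij(j)}-\bY^*_{j(j)}\Omega^{-1}_{(j)}\bY_{i(j)}$ and the linear problem for $\bY_i$, and then collapses everything using only the resolvent-type identity $\Omega^{-1}_{(j)}-\Omega^{-1}=-\Omega^{-1}(\Omega_{(j)}-\Omega)\Omega^{-1}_{(j)}$ together with $\Omega_{(j)}-\Omega=\bY_j\otimes\bY^*_{j(j)}$. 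This is a two-line calculation that never needs the scalar $1+c$ to be inverted explicitly. Your approach is longer but has the merit of producing, as a byproduct, the closed formula $\tilde Q_{ij(j)}=(1+\bY^*_{j(j)}\boldsymbol{Z}_j)^{-1}(Q_{ij(j)}-\bY^*_{j(j)}\boldsymbol{Z}_i)$, which is precisely identity~\eqref{id-1} used later in the paper; so you have effectively proved that identity along the way.
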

\begin{proof}
	By direct calculation where, to shorten the notation, we write temporarily $\Omega =\Omega[\bY,\bY^*]$:
	\begin{gather*}
	(\Omega^{-1}\bY_i)_{(j)} - \Omega^{-1}\bY_i - \Omega^{-1}\bY_j \tilde{Q}_{ij(j)} =
	\Omega^{-1}_{(j)} \bY_{i(j)} - \Omega^{-1}(\bY_i + \bY_j Q_{ij(j)}) + \Omega^{-1}\bY_j \bY_{j(j)}^* \Omega^{-1}_{(j)} \bY_{i(j)} =\\
	(\Omega^{-1}_{(j)}  - \Omega^{-1})\bY_{i(j)} + 
	\Omega^{-1} (\Omega_{(j)}-\Omega)  \Omega^{-1}_{(j)} \bY_{i(j)} = 0.
	\end{gather*} 
\end{proof}
\begin{Def} \label{def:pi}
Let us denote by $(\bpi_i)_{i=1}^N$ the solution of the linear system~\eqref{d-lin} whose fundamental transform is $(\tilde{\bpi}_i)_{i=1}^N=\left(\Omega[\bY,\bY^*]^{-1}\bY_i\right)_{i=1}^N$. 	
\end{Def} 
\begin{Def}
	The \emph{discrete non-commutative Darboux equations with sources} consist of the discrete Darboux system~\eqref{d} supplemented by additional evolution direction equation
	\begin{equation} \label{eq:Q-K}
	 Q_{ij(K)} = Q_{ij} - \bY^*_j \bpi_{i(K)},
	\end{equation}
	where the source terms $\bpi_i\colon \ZZ^{N+1}\to \WW$, $\bY^*_i \colon \ZZ^{N+1} \to \WW^*$ satisfy the linear system~\eqref{d-lin} and its adjoint~\eqref{d-ad}, respectively.
\end{Def}
\begin{Rem}
	The evolution of source terms allows for freedom in the source variable direction. This property reflects the corresponding freedom in definition of the fundamental transformation. 
\end{Rem}
\begin{Cor}
	Equivalently, instead of equation \eqref{eq:Q-K} one can take 
	\begin{equation} \label{eq:Q-Y-K}
	Q_{ij(K)} = Q_{ij} -\bY^*_j \Omega[\bY,\bY^*]^{-1}\bY_{i},
	\end{equation}
	where the $\bY_i\colon \ZZ^{N+1}\to \WW$, $\bY^*_i \colon \ZZ^{N+1} \to \WW^*$ satisfy the linear system~\eqref{d-lin} and its adjoint~\eqref{d-ad}, respectively.
\end{Cor}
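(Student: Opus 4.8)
The statement is essentially a reformulation of Definition~\ref{def:pi}, so the plan is short. The crucial point I would establish first is the identity
\[
\bpi_{i(K)} = \Omega[\bY,\bY^*]^{-1}\bY_i, \qquad i=1,\dots,N.
\]
By the convention adopted in this Section the forward shift $\bullet_{(K)}$ in the additional variable $n_K$ is by definition the fundamental (binary Darboux) transformation $\tilde{\bullet}$; hence $\bpi_{i(K)}=\tilde{\bpi}_i$, and by Definition~\ref{def:pi} the latter equals $\Omega[\bY,\bY^*]^{-1}\bY_i$. The Lemma preceding Definition~\ref{def:pi} then guarantees that this vector-valued function indeed solves~\eqref{d-lin} with the fields $\tilde Q_{ij}$ of~\eqref{eq:DT-Q}, so the identification is consistent with~\eqref{eq:Q-K}, whose left-hand side is to be read as $\tilde Q_{ij}$ under the same convention.

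Granted this identity, the equivalence is immediate. Substituting $\bpi_{i(K)}=\Omega[\bY,\bY^*]^{-1}\bY_i$ into~\eqref{eq:Q-K} turns it into~\eqref{eq:Q-Y-K}, noting that in both expressions $\bY^*_j$ carries no shift in $n_K$, in agreement with~\eqref{eq:DT-Q}. Conversely, starting from a solution of~\eqref{eq:Q-Y-K} with $\bY_i\colon\ZZ^{N+1}\to\WW$ and $\bY^*_i\colon\ZZ^{N+1}\to\WW^*$ solving~\eqref{d-lin}-\eqref{d-ad}, I would define $\bpi_i$ via Definition~\ref{def:pi}; this is legitimate because the fundamental transformation is invertible, its inverse being again a binary Darboux transformation, so $\Omega[\bY,\bY^*]^{-1}\bY_i$ is the $n_K$-shift of a genuine solution $\bpi_i$ of~\eqref{d-lin}, and then~\eqref{eq:Q-Y-K} becomes~\eqref{eq:Q-K}.

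Finally I would check that the hypotheses of the two formulations match. In~\eqref{eq:Q-K} the field $\bpi_i$ must solve~\eqref{d-lin} on $\ZZ^{N+1}$; since a forward $n_K$-shift of~\eqref{d-lin} with fields $Q_{ij}$ is again~\eqref{d-lin} but with fields $Q_{ij(K)}=\tilde Q_{ij}$, and since $\Omega[\bY,\bY^*]^{-1}\bY_i$ solves the latter by the Lemma, the assignment $\bpi_{i(K)}=\Omega[\bY,\bY^*]^{-1}\bY_i$ is compatible with this requirement; the adjoint source $\bY^*_i$ is the same object in both statements. I do not anticipate a genuine obstacle: the content of the corollary is exactly that the auxiliary field $\bpi$ introduced in Definition~\ref{def:pi} can be eliminated in favour of the binary potential $\Omega[\bY,\bY^*]$ out of which it was built.
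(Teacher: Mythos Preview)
Your proposal is correct and matches the paper's (implicit) reasoning: the paper does not supply a separate proof for this Corollary, since it follows immediately from Definition~\ref{def:pi} together with the stated convention that the shift $(\,\cdot\,)_{(K)}$ \emph{is} the fundamental transformation, whence $\bpi_{i(K)}=\tilde\bpi_i=\Omega[\bY,\bY^*]^{-1}\bY_i$ and~\eqref{eq:Q-K} becomes~\eqref{eq:Q-Y-K}. Your additional remarks on the converse direction and on consistency with the Lemma are accurate but more than the paper spells out.
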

\begin{Prop}
	The discrete non-commutative Darboux equations with sources are compatibility condition for the linear system \eqref{d-lin} supplemented by the additional evolution direction equations
	\begin{equation} \label{eq:lin-X-K}
 \bX_{i(K)} = \bX_i - \Omega[\bX,\bY^*]\bpi_{i(K)},
	\end{equation}
	where the $\mathrm{L}(\VV,\WW)$-valued potential $\Omega[\bX,\bY^*]$ satisfies equation~\eqref{eq:O-XY*}.
\end{Prop}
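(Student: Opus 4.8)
The plan is to verify that the system consisting of the original linear problem~\eqref{d-lin} together with the new evolution equation~\eqref{eq:lin-X-K} is compatible, and that the resulting integrability constraint is exactly the pair~\eqref{d} and~\eqref{eq:Q-K}. There are two compatibility conditions to check: the cross-condition between two ordinary directions $n_j,n_k$ (which must reproduce the Darboux system~\eqref{d}, and this is classical), and the cross-condition between an ordinary direction $n_j$ and the source direction $n_K$. So the real content is the mixed compatibility of~\eqref{d-lin} with~\eqref{eq:lin-X-K}.

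First I would compute $\bX_{i(jK)}$ in two ways. Starting from $\bX_{i(j)}=\bX_i+\bX_j Q_{ij(j)}$ and applying the $n_K$-shift via~\eqref{eq:lin-X-K} to each term, then starting instead from $\bX_{i(K)}=\bX_i-\Omega[\bX,\bY^*]\bpi_{i(K)}$ and applying the $n_j$-shift via~\eqref{d-lin}, and equating the two. For this I will need the shift rules for the auxiliary objects: $\Omega[\bX,\bY^*]_{(j)}=\Omega[\bX,\bY^*]+\bX_j\otimes\bY^*_{j(j)}$ from~\eqref{eq:O-XY*}, the linear problem for $\bY$ and its adjoint~\eqref{d-ad}, and crucially the fact established in the Lemma (and encoded in Definition~\ref{def:pi}) that $\bpi_i$ solves~\eqref{d-lin} with the transformed coefficients $\tilde Q_{ij}=Q_{ij}-\bY^*_j\Omega[\bY,\bY^*]^{-1}\bY_i$, i.e.\ $\bpi_{i(j)}=\bpi_i+\bpi_j\tilde Q_{ij(j)}$. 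Expanding both sides and cancelling the common terms $\bX_i$, $\bX_j Q_{ij(j)}$, the discrepancy collapses to an identity involving $\Omega[\bX,\bY^*]$, $\bX_j$, $\bY^*_{j(j)}$ and $\bpi$; using the $\Omega$-shift rule to rewrite $\Omega[\bX,\bY^*]_{(j)}\bpi_{i(jK)}$ and the $\tilde Q$-linear problem for $\bpi$, the whole expression should reduce to $0=0$ precisely when $Q_{ij}$ evolves in the $n_K$-direction according to~\eqref{eq:Q-K}. In other words, the mixed compatibility forces~\eqref{eq:Q-K}, and conversely assuming~\eqref{eq:Q-K} makes it hold.

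For the $j,k$ cross-condition I would simply note that~\eqref{eq:lin-X-K} does not enter, so compatibility of~\eqref{d-lin} with itself in directions $n_j,n_k$ is the standard computation giving~\eqref{d}; the non-commutative version is already recalled in Section~\ref{sec:DDT}. It remains to observe that the new coefficient equation~\eqref{eq:Q-K} is itself consistent with the Darboux flows, but this is automatic: $Q_{ij(K)}$ defined by~\eqref{eq:Q-K} is, by the fundamental-transformation formula~\eqref{eq:DT-Q} together with Definition~\ref{def:pi}, nothing but the Darboux transform $\tilde Q_{ij}$, and~\eqref{d} is preserved under the fundamental transformation; so no extra condition arises.

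The step I expect to be the main obstacle is the bookkeeping in the mixed expansion of $\bX_{i(jK)}$: one has to track the tensor-product term $\bX_j\otimes\bY^*_{j(j)}$ acting on $\bpi_{i(jK)}$ and repeatedly use both the $\tilde Q$-linear problem for $\bpi$ and the adjoint problem for $\bY^*$ in the right order, being careful that $\DD$ is non-commutative so factors may not be permuted. I would organize this by temporarily abbreviating $\Omega=\Omega[\bX,\bY^*]$ and writing out $\bpi_{i(jK)}$ first via the $n_K$-shift of $\bpi_{i(j)}=\bpi_i+\bpi_j\tilde Q_{ij(j)}$ and then via the $n_j$-shift, exactly mirroring the proof of the Lemma, so that the cancellation pattern there can be reused almost verbatim.
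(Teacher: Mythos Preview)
Your overall strategy---compute $\bX_{i(jK)}$ in two ways and equate---is exactly what the paper does. However, you misread Definition~\ref{def:pi}: it says that $(\bpi_i)$ is a solution of the linear problem~\eqref{d-lin} with the \emph{original} coefficients $Q_{ij}$, whose fundamental transform $\tilde\bpi_i=\Omega[\bY,\bY^*]^{-1}\bY_i$ satisfies~\eqref{d-lin} with the \emph{transformed} coefficients $\tilde Q_{ij}$. So the relation you write, $\bpi_{i(j)}=\bpi_i+\bpi_j\tilde Q_{ij(j)}$, is wrong; the correct statements are $\bpi_{i(j)}=\bpi_i+\bpi_jQ_{ij(j)}$ and $\bpi_{i(jK)}=\bpi_{i(K)}+\bpi_{j(K)}Q_{ij(jK)}$. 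In the mixed compatibility it is only $\bpi_{i(K)}$, $\bpi_{j(K)}$, $\bpi_{i(jK)}$ that appear, so once you fix this the computation goes through---but as written your key input is stated incorrectly.

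There is also a difference in logical organization worth noting. You plan to \emph{assume} the linear problem for $\bpi$ and then extract~\eqref{eq:Q-K} from compatibility. The paper instead treats $\Omega[\bX,\bY^*]$ and $\bX_j$ as formally independent in the cross-relation
\[
(\bX_j - \Omega[\bX,\bY^*]\bpi_{j(K)}) Q_{ij(jK)} - \Omega[\bX,\bY^*]\bpi_{i(K)}
= \bX_j Q_{ij(j)} - (\Omega[\bX,\bY^*] + \bX_j\otimes \bY^*_{j(j)})\bpi_{i(jK)},
\]
and reads off \emph{both} conditions at once: the $\Omega[\bX,\bY^*]$-coefficient gives the linear problem for $\bpi_{(K)}$, and the $\bX_j$-coefficient gives~\eqref{eq:Q-K}. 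This avoids your detour through the Lemma and makes it transparent that the full source system (including the $\bpi$-equation) is precisely the compatibility condition, rather than part hypothesis and part conclusion. The paper also notes separately that compatibility of~\eqref{eq:O-XY*} with~\eqref{d-lin} forces $\bY^*$ to satisfy the adjoint problem~\eqref{d-ad}, which you did not mention.
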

\begin{proof}
The compatibility condition of equations~\eqref{eq:O-XY*} and the linear system~\eqref{d-lin} imply that $(\bY^*_i)_{i=1}^N$ satisfy the adjoint linear problem~\eqref{d-ad}. Then the compatibility of the ordinary part~\eqref{d-lin} of the linear problem within the set of ordinary variables gives the ordinary part~\eqref{d} of the non-linear system. The compatibility of the ordinary part of the linear system with the additional variable evolution part~\eqref{eq:lin-X-K} gives 
\begin{equation*}
(\bX_j - \Omega[\bX,\bY^*]\bpi_{j(K)}) Q_{ij(jK)} - \Omega[\bX,\bY^*] \bpi_{i(K)} =
\bX_j Q_{ij(j)}  - (\Omega[\bX,\bY^*] + \bX_j\otimes \bY_{j(j)})^*)\bpi_{i(jK)} .
\end{equation*}
By collecting coefficients at $\Omega[\bX,\bY^*]$ we obtain the linear equations for $(\bpi_i)_{i=1}^N$, while the coefficients at $\bX_j$ produce the source part evolution equations~\eqref{eq:Q-K}.
\end{proof}
\begin{Rem}
	Notice that the form of the additional linear equation \eqref{eq:lin-X-K} corresponds directly to the transformation formulae~\eqref{eq:DT-X}.
\end{Rem}
We leave similar proofs of the following two Corollaries to the interested Reader. In the next Section we will prove them in another formalism of interpretation of the equations with sources.
\begin{Cor}
	The discrete non-commutative Darboux equations with sources are compatibility condition for the adjoint linear system \eqref{d-ad} supplemented by the additional evolution direction equation
	\begin{equation} \label{eq:lin-X*-K}
	 \bX_{i(K)}^* = \bX_i^*- \bY^*_i \Omega[\bpi,\bX^*]_{(K)},
	\end{equation}
	where the $\mathrm{L}(\WW,\VV)$-valued potential $\Omega[\bpi,\bX^*]$ satisfies equation \eqref{eq:O-YX*} with $\bpi_i$ in the place of $\bY_i$.
\end{Cor}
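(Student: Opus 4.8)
The plan is to mirror the proof of the preceding Proposition, exchanging the roles of the linear problem~\eqref{d-lin} and its adjoint~\eqref{d-ad}. The data are now the adjoint system~\eqref{d-ad} for $(\bX^*_i)$, the additional evolution~\eqref{eq:lin-X*-K}, and the potential $\Omega[\bpi,\bX^*]$ governed by~\eqref{eq:O-YX*} with $\bpi_i$ in the place of $\bY_i$; one must show that the cross-difference conditions among these equations reduce precisely to the discrete Darboux system~\eqref{d} together with the source evolution~\eqref{eq:Q-K}, with $(\bY^*_i)$ forced into~\eqref{d-ad} and $(\bpi_i)$ forced into~\eqref{d-lin}.

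There are three compatibility checks. First, imposing $\Omega[\bpi,\bX^*]_{(ij)}=\Omega[\bpi,\bX^*]_{(ji)}$ on~\eqref{eq:O-YX*} and substituting~\eqref{d-ad} for the second shifts of $\bX^*_i$, one finds---after moving the scalars $Q_{ij},Q_{ij(j)}$ across the balanced tensor symbol in the rank-one terms $\bpi_i\otimes\bX^*_{i(i)}$---that the obstruction vanishes exactly when $(\bpi_i)$ obey~\eqref{d-lin}; this is the mirror of the opening step of the Proposition, where compatibility of~\eqref{eq:O-XY*} with~\eqref{d-lin} forced $(\bY^*_i)$ into~\eqref{d-ad}. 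Second, the compatibility of~\eqref{d-ad} restricted to the ordinary variables $n_1,\dots,n_N$ is, as always, the discrete Darboux system~\eqref{d}. Third, I would compute $\bX^*_{j(iK)}$ in the two ways the system permits: shifting~\eqref{d-ad} in $n_K$ and inserting~\eqref{eq:lin-X*-K} for $\bX^*_{j(K)}$, versus shifting~\eqref{eq:lin-X*-K} in $n_i$ and inserting~\eqref{d-ad} for $\bX^*_{j(i)}$ together with the $n_K$-shift of~\eqref{eq:O-YX*} (with $\bpi_i$ for $\bY_i$) to expand $\Omega[\bpi,\bX^*]_{(iK)}$. Equating the results and cancelling $\bX^*_j$, then eliminating $\bX^*_{i(iK)}$ via the $n_i$-shift of~\eqref{eq:lin-X*-K}, leaves an identity affine in the two free quantities $\bX^*_{i(i)}$ (the generic main adjoint field) and $\Omega[\bpi,\bX^*]_{(K)}$ (free by the arbitrary operator constant in the definition of $\Omega[\bpi,\bX^*]$). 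Setting the coefficient at $\Omega[\bpi,\bX^*]_{(K)}$ to zero gives the adjoint linear system~\eqref{d-ad} for $(\bY^*_i)$, and setting the coefficient at $\bX^*_{i(i)}$ to zero gives an identity which, once~\eqref{d-ad} for $\bY^*_i$ is used, collapses to the source evolution~\eqref{eq:Q-K}. Running the computation backwards establishes the converse: \eqref{d}, \eqref{eq:Q-K}, together with~\eqref{d-lin} for $\bpi_i$ and~\eqref{d-ad} for $\bY^*_i$, make every cross-difference vanish.

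The step I expect to be the main obstacle is, exactly as in the Proposition, not conceptual but a matter of non-commutative bookkeeping: one must keep track of the side on which the scalar rotation coefficients act and check that they pass correctly through the rank-one operators $\bpi_i\otimes\bX^*_{i(i)}$ constituting $\Omega[\bpi,\bX^*]$, and in the third check one has to carry along a left factor $(1+\bY^*_{i(i)}\bpi_{i(K)})$ produced when $\bX^*_{i(iK)}$ is eliminated, which must be inverted (legitimate generically) and which disappears once~\eqref{d-ad} for $\bY^*_i$ is substituted back. The ``collecting coefficients'' step is justified as in the Proposition: the solutions $\bX^*_i$, $\bY^*_i$, $\bpi_i$ of the respective linear problems may be taken generic, so an identity holding for all of them holds term by term. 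A cleaner alternative derivation---promised in the text---is available in the next Section, where this Corollary follows at once from interpreting the $n_K$-shift as a fundamental transformation.
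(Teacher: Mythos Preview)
Your proposal is correct and follows precisely the route the paper itself indicates: the authors write that ``similar proofs'' of this Corollary are left to the reader, meaning exactly the mirror of the preceding Proposition that you outline (compatibility of~\eqref{eq:O-YX*} with~\eqref{d-ad} forcing $(\bpi_i)$ into~\eqref{d-lin}, compatibility within ordinary variables giving~\eqref{d}, and the cross-check of~\eqref{d-ad} against~\eqref{eq:lin-X*-K} yielding~\eqref{d-ad} for $(\bY^*_i)$ and the source evolution~\eqref{eq:Q-K}). The paper's only written argument is the alternative one via the additional-variables formalism in the following subsection, which you also correctly flag; there the Corollary drops out of the identification $(\bX^*_a)_{a=1}^K = -\Omega[\bpi,\bX^*]$ combined with the induction formula $\bX^*_{j(K)} = \bX^*_j + \sum_a Q_{aj}\bX^*_{a(K)}$, bypassing the coefficient-collection step entirely.
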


\begin{Prop}
	The following evolution rule of the first potentials $\rho_i$ in the additional direction
	\begin{equation} \label{eq:rho-K}
\rho_{i(K)}  = \rho_i(1 + \bY^*_{i(i)} \bpi_{i(K)}),
	\end{equation}
is compatible with the discrete non-commutative Darboux equations with sources, and is the direct counterpart of equations	\eqref{eq:fund-vect-rho}.	
\end{Prop}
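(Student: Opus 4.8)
The plan is to verify directly that the new $n_K$-evolution \eqref{eq:rho-K} is consistent with the defining relations \eqref{eq:rho-constr} of the potentials $\rho_i$. Since the mutual consistency of \eqref{eq:rho-constr} over the $N$ ordinary variables, together with the Darboux system \eqref{d}, is already known (this is what makes the $\rho_i$ well defined on $\ZZ^N$), it suffices to check that the $K$-shift commutes with each ordinary $j$-shift, i.e. $\rho_{i(jK)}=\rho_{i(Kj)}$ for distinct $i,j\in\{1,\dots,N\}$ (there is nothing to check in the $i$-direction, since no relation propagates $\rho_i$ there). Computing the left-hand side by first applying \eqref{eq:rho-constr} and then the $K$-shifted form of \eqref{eq:rho-K}, and the right-hand side by first applying \eqref{eq:rho-K} and then the $j$-shifted form of \eqref{eq:rho-constr}, both expressions acquire an overall left factor $\rho_i$, which is invertible in the division ring $\DD$. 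Cancelling it reduces the claim to the identity
\begin{equation*}
(1+\bY^*_{i(i)}\bpi_{i(K)})(1-Q_{ji(iK)}Q_{ij(jK)}) = (1-Q_{ji(i)}Q_{ij(j)})(1+\bY^*_{i(ij)}\bpi_{i(jK)}).
\end{equation*}

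To establish this identity I would first record the auxiliary relations coming from the linear systems. Iterating the adjoint problem \eqref{d-ad} for $\bY^*$ in the $i$- and $j$-directions and solving the resulting $2\times2$ system yields $(1-Q_{ji(i)}Q_{ij(j)})\bY^*_{i(ij)}=\bY^*_{i(i)}+Q_{ji(i)}\bY^*_{j(j)}$, so the right-hand side above collapses to $1-Q_{ji(i)}Q_{ij(j)}+(\bY^*_{i(i)}+Q_{ji(i)}\bY^*_{j(j)})\bpi_{i(jK)}$. On the other side, the $K$-shifted linear problem \eqref{d-lin} for $\bpi$ gives $\bpi_{i(jK)}=\bpi_{i(K)}+\bpi_{j(K)}Q_{ij(jK)}$ and its $i\leftrightarrow j$ counterpart, while the $i$- and $j$-shifts of \eqref{eq:Q-K} give $Q_{ij(jK)}=Q_{ij(j)}-\bY^*_{j(j)}\bpi_{i(jK)}$ and $Q_{ji(iK)}=Q_{ji(i)}-\bY^*_{i(i)}\bpi_{j(iK)}$. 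Substituting the latter two into the left-hand side, expanding the products with careful attention to the non-commutative ordering and using right-linearity of the pairing $\WW^*\times\WW\to\DD$ to regroup mixed monomials such as $\bpi_{j(iK)}\bY^*_{j(j)}\bpi_{i(jK)}$ as a vector scaled by a scalar, one collects the terms according to whether they carry $\bpi_{j(K)}$ or $\bpi_{i(K)}$; reassembling the $K$-shifted coefficients via $Q_{ij(jK)}=Q_{ij(j)}-\bY^*_{j(j)}\bpi_{i(jK)}$ makes the bracket multiplying $\bY^*_{i(i)}$ telescope to $\bpi_{i(K)}+\bpi_{j(K)}Q_{ij(jK)}=\bpi_{i(jK)}$, which matches the remaining term on the right-hand side. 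It is worth noting that for this particular compatibility one does not need to know how $\bpi_i$ and $\bY^*_i$ evolve in the $n_K$-direction — only that they solve \eqref{d-lin} and \eqref{d-ad} throughout $\ZZ^{N+1}$.

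That \eqref{eq:rho-K} is the \emph{direct counterpart} of \eqref{eq:fund-vect-rho} is then immediate: by Definition~\ref{def:pi} and the Corollary containing \eqref{eq:Q-Y-K} one has $\bpi_{i(K)}=\tilde{\bpi}_i=\Omega[\bY,\bY^*]^{-1}\bY_i$, so inserting this into \eqref{eq:rho-K} reproduces \eqref{eq:fund-vect-rho} with the fundamental transformation read as the shift in $n_K$. I expect the main obstacle to be not conceptual but purely algebraic: pushing the non-commutative bookkeeping through the expansion of the two triple products and, in particular, recognising that the $K$-shifted product $Q_{ji(iK)}Q_{ij(jK)}$ must be split as $Q_{ji(iK)}\bigl(Q_{ij(j)}-\bY^*_{j(j)}\bpi_{i(jK)}\bigr)$ at exactly the right step so that the cancellations occur. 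A more conceptual route would be to argue entirely within the fundamental-transformation formalism — since \eqref{eq:rho-K} \emph{is} \eqref{eq:fund-vect-rho}, compatibility would be inherited from the fact recalled in Section~\ref{sec:DDT} that the transformation sends solutions of the Darboux system together with their first potentials to solutions — but making this rigorous requires specifying how the data $\bY,\bY^*,\Omega$ are transported along $n_K$, which the direct verification sidesteps.
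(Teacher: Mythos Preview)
Your approach is correct and is essentially the paper's: a direct check that $\rho_{i(jK)}=\rho_{i(Kj)}$, reduced after cancelling $\rho_i$ to the scalar identity you wrote, and verified using the linear problems for $\bpi$ and $\bY^*$ together with the $j$-shifted source equation~\eqref{eq:Q-K}. You have already isolated one of the two key lemmas the paper uses, namely $(1-Q_{ji(i)}Q_{ij(j)})\bY^*_{i(ij)}=\bY^*_{i(i)}+Q_{ji(i)}\bY^*_{j(j)}$ (the paper's identity~\eqref{id-2}). The paper's only organisational difference is that it also packages the left-hand side via the companion identity
\[
(1+\bY^*_{i(i)}\bpi_{i(K)})\,Q_{ji(iK)} \;=\; Q_{ji(i)} - \bY^*_{i(i)}\bpi_{j(K)},
\]
which is exactly \eqref{id-1} with $i\leftrightarrow j$ and follows in one line from your ingredients $\bpi_{j(iK)}=\bpi_{j(K)}+\bpi_{i(K)}Q_{ji(iK)}$ and $Q_{ji(iK)}=Q_{ji(i)}-\bY^*_{i(i)}\bpi_{j(iK)}$. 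Using it, the factor $(1+\bY^*_{i(i)}\bpi_{i(K)})$ is absorbed immediately and no quartic monomials ever appear; both sides reduce to $1-Q_{ji(i)}Q_{ij(jK)}+\bY^*_{i(i)}\bpi_{i(jK)}$. So the ``purely algebraic obstacle'' you anticipate disappears once you record this second identity rather than substituting both $K$-shifted $Q$'s simultaneously.
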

\begin{proof}
	By direct calculation using the identities
\begin{align}
\label{id-1}
(1+\bY_{j(j)}^*\bpi_{j(K)})Q_{ij(jK)} & = Q_{ij(j)} - \bY_{j(j)}^*\bpi_{i(K)},\\
\label{id-2}
(1-Q_{ji(i)}Q_{ij(j)}) \bY_{i(ij)}^*&= \bY_{i(i)}^* + Q_{ji(i)}\bY_{j(j)}^*.
\end{align}
\end{proof}
\begin{Rem}
	Notice the following identity
\begin{equation}
(1+\bY_{i(i)}^*\bpi_{i(K)})\bX^*_{i(iK)} = \bX^*_{i(i)} - \bY_{i(i)}^* \Omega[\bpi,\bX^*]_{(K)},
\end{equation}
which on the level of the fundamental transformation reads
\begin{equation} \label{eq:id-Yi}
(1 + \bY^*_{i(i)}\Omega[\bY,\bY^*]^{-1}\bY_i)  \tilde{\bX}^*_{i(i)} = \bX^*_{i(i)} 
- \bY_{i(i)}^* \Omega[\bY,\bY^*]^{-1}\Omega[\bY,\bX^*].
\end{equation}
\end{Rem}

\subsection{Discrete Darboux equations with sources in the formalism of additional variables}
In this Section, we are going to construct the discrete Darboux equation with  self-consistent sources using the second method. The direct motivation is based on the observation~\cite{TQL} that the scalar fundamental transformation ($\dim \WW = 1$) cannot be distinguished from the shift in a discrete variable. This corresponds also with the well known method of integrable discretization by using Darboux--B\"{a}cklund transformations~\cite{LeBen}.

Let us split independent variables of the discrete Darboux equations into disjoint sets of \emph{ordinary variables} $n_i$, $i=1,\dots , N$, indexed by $i,j,k, \dots$, and the \emph{source variables} $m_a$, $a=1,\dots , K$, indexed by $a,b,c,\dots$. This way of indexing applies also to corresponding indices of solutions of the linear problem and its adjoint, and indices of the dependent variables. 
\begin{Prop} \label{prop:Y*-pi}
	Denote by $\bY^*_j = (Q_{aj})_{a=1}^K$ the row-vector valued function build up of the solution of the full system of discrete Darboux equations, and by $\bpi_i = -(Q_{ia})_{a=1}^K$ the similar column-vector valued function, then the corresponding parts of the Darboux equations
	\begin{align}
	Q_{aj(i)} & = Q_{aj} + Q_{ij} Q_{ai(i)},\\
	Q_{ia(j)} & = Q_{ia} + Q_{ja} Q_{ij(j)},
	\end{align}
can be interpreted as the linear systems
\begin{align}
\bY^*_{j(i)} & = \bY^*_{j} + Q_{ij}\bY^*_{i(i)},\\
\bpi_{i(j)} & = \bpi_{i} + \bpi_{j} Q_{ij(j)},
\end{align}
respectively.
\end{Prop}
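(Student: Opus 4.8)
The plan is to obtain both linear systems simply by restricting the discrete Darboux system~\eqref{d} to index triples that mix the two kinds of variables, and then to rewrite the resulting scalar recurrences in (co)vector notation. Since the ordinary indices $n_i$ and the source indices $m_a$ run over disjoint sets, any ordinary index is automatically distinct from any source index; hence in the two families below the only distinctness condition inherited from~\eqref{d} is $i\neq j$, and no further restriction among $a$, $i$, $j$ is needed.

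First I would record the two specializations. Letting $(a,j,i)$ play the roles of $(i,j,k)$ in~\eqref{d} gives $Q_{aj(i)}=Q_{aj}+Q_{ij}Q_{ai(i)}$, and letting $(i,a,j)$ play the roles of $(i,j,k)$ gives $Q_{ia(j)}=Q_{ia}+Q_{ja}Q_{ij(j)}$; these are exactly the two displayed families. Next I would collect the source components. For the first family, assembling $\bY^*_j=(Q_{aj})_{a=1}^K$ turns the recurrence, read off in components $a=1,\dots,K$, into $\bY^*_{j(i)}=\bY^*_j+Q_{ij}\bY^*_{i(i)}$, which is the adjoint linear problem~\eqref{d-ad} with the scalar $Q_{ij}$ on the left of $\bY^*_{i(i)}$. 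For the second family, using $\bpi_i=-(Q_{ia})_{a=1}^K$ (so that $Q_{ja}=-(\bpi_j)_a$) turns $Q_{ia(j)}=Q_{ia}+Q_{ja}Q_{ij(j)}$ into $\bpi_{i(j)}=\bpi_i+\bpi_jQ_{ij(j)}$, which is the linear problem~\eqref{d-lin} with the scalar $Q_{ij(j)}$ on the right of $\bpi_j$.

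I would add one remark on the sign. The same linear equation for $\bpi_i$ would follow without the minus sign in its definition; the minus sign is fixed by compatibility with~\eqref{eq:Q-K}: in the one-source situation $Q_{i1(1)}=-\bpi_{i(1)}$, whence the $(i,j,1)$ component of~\eqref{d} reads $Q_{ij(1)}=Q_{ij}+Q_{1j}Q_{i1(1)}=Q_{ij}-\bY^*_j\bpi_{i(1)}$, which is precisely~\eqref{eq:Q-K} after identifying the source direction with $n_K$.

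There is no analytic difficulty here; the argument is pure bookkeeping. The only point that genuinely needs checking --- and the only place something could go wrong for non-commutative $\DD$ --- is the position of the scalar rotation coefficients relative to the (co)vector factors: one must verify that the specialization of the quadratic term $Q_{kj}Q_{ik(k)}$ in~\eqref{d} places $Q_{ij}$ to the left of $\bY^*_{i(i)}$ in the adjoint case and $Q_{ij(j)}$ to the right of $\bpi_j$ in the direct case. This is exactly what the fixed (left-then-right) ordering of the two factors in~\eqref{d} provides, so the two systems come out in the standard forms~\eqref{d-lin} and~\eqref{d-ad}.
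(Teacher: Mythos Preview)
Your proposal is correct and matches the paper's approach; in fact the paper gives no separate proof of this proposition, treating it as a direct reinterpretation of the mixed-index components of~\eqref{d} in (co)vector notation, which is exactly the bookkeeping you carry out. Your added remark on the sign convention and its compatibility with~\eqref{eq:Q-K} is a helpful clarification that the paper leaves implicit.
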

The above proposition gives the interpretation of the source terms in the new formalism. The source evolution direction is then the collective shift in all additional variables, and the corresponding part of the system \eqref{eq:Q-K} is provided by the corresponding interpreation of the following result, first found in the commutative case in~\cite{DS-sym}. 
\begin{Prop} \label{prop:Q-K}
	Denote by $(K)$ the simultaneous shift in the variables $(m_1,\dots , m_K)$ then 
	\begin{equation} \label{eq:Qij-K}
	Q_{ij(K)} = Q_{ij} + \sum_a Q_{aj} Q_{ia(K)}.
	\end{equation}
\end{Prop}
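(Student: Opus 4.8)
The plan is to establish, by induction on $r$, a slightly stronger assertion. For $0\le r\le K$ write $\langle r\rangle$ for the operator of simultaneous forward shift in the source variables $m_1,\dots,m_r$, so that $\langle 0\rangle$ is the identity and $\langle K\rangle=(K)$. The claim is that for every such $r$, every ordinary index $i$, and every index $\ell$ that is \emph{not} one of the source labels $1,\dots,r$ (so $\ell$ is ordinary, or a source label exceeding $r$), with $i\neq\ell$, one has
\begin{equation*}
Q_{i\ell\langle r\rangle}=Q_{i\ell}+\sum_{a=1}^{r}Q_{a\ell}\,Q_{ia\langle r\rangle}.
\end{equation*}
The Proposition is the case $r=K$, $\ell=j$. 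The base case $r=0$ is the empty identity, and $r=1$ is just \eqref{d} with $k$ equal to the single source label, which is legitimate because $i,\ell,1$ are pairwise distinct.

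For the inductive step I would factor $\langle r\rangle$ as ``$\langle r-1\rangle$ followed by the elementary shift in $m_r$''. Applying the inductive hypothesis to $Q_{i\ell\langle r-1\rangle}$ and then shifting the resulting identity once in $m_r$ leaves $Q_{i\ell}$ and each $Q_{a\ell}$ with $a<r$ shifted only in $m_r$; since the triples $(i,\ell,r)$ and $(a,\ell,r)$ are then pairwise distinct (using $\ell\notin\{1,\dots,r\}$ and $a<r$), each of these one-step shifts is rewritten by \eqref{d} with shift variable $m_r$. Collecting terms, the right-hand side becomes
\begin{equation*}
Q_{i\ell}+\sum_{a=1}^{r-1}Q_{a\ell}\,Q_{ia\langle r\rangle}+Q_{r\ell}\Bigl(Q_{ir(m_r)}+\sum_{a=1}^{r-1}Q_{ar(m_r)}\,Q_{ia\langle r\rangle}\Bigr).
\end{equation*}
The remaining step is to identify the bracket with $Q_{ir\langle r\rangle}$: this is precisely the inductive hypothesis with $\ell$ replaced by the source label $r$ (admissible since $r\notin\{1,\dots,r-1\}$), shifted once more in $m_r$. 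Substituting it supplies the missing $a=r$ term of the sum and closes the induction.

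The only points requiring care, and the nearest thing to an obstacle, are organisational. The strengthening to allow $\ell$ to be a later source label is genuinely needed, because the inductive step unavoidably calls the hypothesis on the mixed coefficient $Q_{ir}$. Also, every rearrangement must respect the order of the (possibly non-commuting) factors — in particular $Q_{r\ell}$ is to be extracted on the left throughout — and the ``diagonal'' one-step shifts $Q_{ir(m_r)}$, $Q_{ar(m_r)}$, whose second index coincides with the shift variable and which are therefore \emph{not} governed by \eqref{d} but belong to the free data of the system, are simply carried along as symbols; they cancel consistently, so no separate treatment of them is needed. A more or less equivalent alternative would be to realise the composite source shift as a composition of $K$ elementary (scalar) fundamental transformations and iterate \eqref{eq:DT-Q}, but the direct induction seems the most economical route.
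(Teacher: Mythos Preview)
Your argument is correct and follows essentially the same route as the paper's proof: induction on the number of source variables, factoring the composite shift as the composite of the first $r-1$ shifts followed by one elementary shift, expanding the single shifts via~\eqref{d}, and recognising the remaining bracket as the inductive hypothesis applied with the second index equal to the new source label. The paper records this in one displayed line, whereas you spell out explicitly the need to strengthen the hypothesis so that the second index may be a later source label; that observation is implicit in the paper (the bracket there is identified with $Q_{ib(K)}$ for the new source label $b$) but not commented on, so your write-up is a bit more careful on this organisational point.
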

\begin{proof}
	The proof is by mathematical induction with respect to $K$. For $K=1$ we have just the corresponding part of the discrete Darboux system. The induction step reads
	\begin{equation*}
	Q_{ij(Kb)}   =
	Q_{ij} +  \sum_a Q_{aj}  Q_{ia(Kb)} + Q_{bj} \left(Q_{ib} + \sum_a  Q_{aj}  Q_{ia(K)} \right)_{(b)}.
	\end{equation*}
\end{proof}

Let us present the corresponding derivation of the additional equations \eqref{eq:lin-X-K} and \eqref{eq:lin-X*-K} of the linear problem and of its adjoint. In the same way as above one can show the following equations which express simultaneous shifts in the additional variables of the ordinary vectors $\bX_i$ and $\bX^*_j$
\begin{align}
\bX_{i(K)} & = \bX_i + \sum_a \bX_a Q_{ia(K)},\\
\bX^*_{j(K)} & = \bX_j^* + \sum_a Q_{aj} \bX^*_{a(K)}.
\end{align}
To conclude alternative derivation of the linear problem~\eqref{eq:lin-X-K} and its adjoint~\eqref{eq:lin-X*-K} it is enough to observe that the sums in the above equations can be interpreted as product of certain matrix-valued potentials with the vectors $\bpi_{i(K)}$ and $\bY^*_j$, respectively.
\begin{Lem}
	 The matrix $(\bX_a)_{a=1}^K$ composed of the additional column-vectors, can be interpreted as the potential $\Omega[\bX,\bY^*]$, while the matrix $(\bX^*_a)_{a=1}^K$ composed of the additional row-vectors can be interpreted as the potential $-\Omega[\bpi,\bX^*]$.	
\end{Lem}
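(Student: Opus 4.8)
The plan is to check directly that the two matrices assembled from the additional vectors obey the defining relations of the corresponding potentials, by reading those operator equations entry-wise and identifying each component with an instance of the linear problem~\eqref{d-lin} or its adjoint~\eqref{d-ad} for the full Darboux solution on $\ZZ^{N+K}$. Concretely, I would let $\Omega:=(\bX_a)_{a=1}^K$ be the element of $\mathrm{L}(\WW,\VV)$, $\WW=\DD^K$, whose $a$-th column is the additional vector $\bX_a$, and let $\Omega':=(\bX^*_a)_{a=1}^K$ be the element of $\mathrm{L}(\VV,\WW)$ whose $a$-th row is $\bX^*_a\in\VV^*$; the claim is then $\Omega=\Omega[\bX,\bY^*]$ and $-\Omega'=\Omega[\bpi,\bX^*]$.

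For the first identification, recall $\bY^*_j=(Q_{aj})_{a=1}^K$, so the $a$-th column of the tensor $\bX_i\otimes\bY^*_{i(i)}$ is $\bX_i Q_{ai(i)}$, and relation~\eqref{eq:O-XY*} for $\Omega$ in an ordinary direction $(i)$ becomes, columnwise, $\bX_{a(i)}=\bX_a+\bX_i Q_{ai(i)}$ — precisely~\eqref{d-lin} for the vector $\bX_a$ in the variable $n_i$, the indices $a$ (a source index) and $i$ being automatically distinct. For the second, with $\bpi_i=-(Q_{ia})_{a=1}^K$ the $a$-th row of $\bpi_i\otimes\bX^*_{i(i)}$ is $-Q_{ia}\bX^*_{i(i)}$, so~\eqref{eq:O-YX*} with $\bpi$ in place of $\bY$, written for $-\Omega'$, reduces rowwise to $\bX^*_{a(i)}=\bX^*_a+Q_{ia}\bX^*_{i(i)}$, which is the adjoint problem~\eqref{d-ad} for $\bX^*_a$ in the variable $n_i$; this is where the minus sign in the statement originates. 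The compatibility of the systems~\eqref{eq:O-XY*} and~\eqref{eq:O-YX*} over the $N$ ordinary directions is then inherited from the compatibility of the full linear problem and its adjoint on $\ZZ^{N+K}$, guaranteed by the Darboux equations~\eqref{d} together with the fact (Proposition~\ref{prop:Y*-pi}) that $(\bY^*_i)$ and $(\bpi_i)$ solve the adjoint and the direct problems, respectively.

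It remains to confirm that these identifications are consistent with the collective source shift. Evaluating the formulas $\bX_{i(K)}=\bX_i+\sum_a\bX_a Q_{ia(K)}$ and $\bX^*_{j(K)}=\bX^*_j+\sum_a Q_{aj}\bX^*_{a(K)}$ established just above the Lemma, one rewrites $\sum_a\bX_a Q_{ia(K)}=-\Omega\,\bpi_{i(K)}$ and $\sum_a Q_{aj}\bX^*_{a(K)}=\bY^*_j\,\Omega'_{(K)}$, whence, using $\Omega'=-\Omega[\bpi,\bX^*]$, the two formulas become exactly the additional linear equations~\eqref{eq:lin-X-K} and~\eqref{eq:lin-X*-K}, so that $\Omega$ and $-\Omega'$ act as the potentials $\Omega[\bX,\bY^*]$ and $\Omega[\bpi,\bX^*]$ in the source direction as well. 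The only genuinely delicate point is bookkeeping rather than mathematics: one must keep careful track of the shifts and of which index is ordinary versus source in~\eqref{eq:O-XY*}--\eqref{eq:O-YX*} and in~\eqref{eq:lin-X-K}--\eqref{eq:lin-X*-K}, and observe that setting $\Omega[\bX,\bY^*]=(\bX_a)_{a=1}^K$ (resp.\ $\Omega[\bpi,\bX^*]=-(\bX^*_a)_{a=1}^K$) amounts to a particular choice of the integration constant for the potential, one which — as the consistency check just performed shows — is propagated coherently by both the ordinary and the source shifts.
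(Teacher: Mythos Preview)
Your proof is correct and follows essentially the same route as the paper: the paper simply writes down the two instances $\bX_{a(i)}=\bX_a+\bX_i Q_{ai(i)}$ and $\bX^*_{a(i)}=\bX^*_a+Q_{ia}\bX^*_{i(i)}$ of the linear problem and its adjoint and observes that these are exactly the defining relations of the potentials. Your write-up is more explicit about the column/row bookkeeping and about the origin of the minus sign, and your final paragraph checking consistency with the collective source shift is an extra (and welcome) verification that the paper itself only uses \emph{after} the Lemma, but it is not a different argument.
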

\begin{proof}
	The interpretation follows from the corresponding parts of the linear problem~\eqref{d-lin} and its adjoint~\eqref{d-ad}:
	\begin{align*}
	\bX_{a(i)} & = \bX_a + \bX_i Q_{ai(i)},\\
	\bX_{a(i)}^* & = \bX^*_a + Q_{ia} \bX^*_{i(i)}.
	\end{align*}
\end{proof}
\begin{Cor}
	Equation \eqref{eq:rho-K} in the present formalism is just another form of the equation
	\begin{equation*}
	\rho_{i(K)} = \rho_i \left( 1 - \sum_a Q_{ai(i)} Q_{ia(K)}\right),
	\end{equation*}
	which can be proved by induction, in a similar way like Proposition~\ref{prop:Q-K}.
\end{Cor}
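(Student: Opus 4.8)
The plan is to reduce \eqref{eq:rho-K} to the displayed identity and then to prove the latter by induction on $K$, in parallel with the proof of Proposition~\ref{prop:Q-K}. By Proposition~\ref{prop:Y*-pi} the row vector $\bY^*_i$ has components $Q_{ai}$ and the column vector $\bpi_i$ has components $-Q_{ia}$, so $\bY^*_{i(i)}\bpi_{i(K)}=-\sum_a Q_{ai(i)}Q_{ia(K)}$ and \eqref{eq:rho-K} is verbatim the displayed equation; it therefore suffices to establish the latter. The base case $K=1$ is immediate: in the present formalism the source variables sit on the same footing as the $n_i$, so the potential $\rho_i$ on $\ZZ^{N+1}$ still satisfies \eqref{eq:rho-constr} for the pair $(i,a)$, which reads $\rho_{i(a)}=\rho_i(1-Q_{ai(i)}Q_{ia(a)})$.

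For the inductive step I would adjoin one further source variable $b=m_{K+1}$. Because discrete shifts commute, \eqref{eq:rho-constr} applied to the pair $(i,b)$ and then shifted by $(K)=(m_1,\dots,m_K)$ gives $\rho_{i(Kb)}=\rho_{i(K)}\bigl(1-Q_{bi(iK)}Q_{ib(bK)}\bigr)$, and inserting the inductive hypothesis for $\rho_{i(K)}$ turns the claim into the purely $Q$-valued identity
\[
\Bigl(1-\textstyle\sum_{a=1}^{K}Q_{ai(i)}Q_{ia(K)}\Bigr)\bigl(1-Q_{bi(iK)}Q_{ib(bK)}\bigr)=1-\textstyle\sum_{a=1}^{K+1}Q_{ai(i)}Q_{ia(Kb)}.
\]
To verify it, expand the left side and rewrite $Q_{ia(Kb)}=Q_{ia(K)}+Q_{ba(K)}Q_{ib(Kb)}$ and $Q_{ba(iK)}=Q_{ba(K)}+Q_{ia(K)}Q_{bi(iK)}$, both being \eqref{d} in the $b$- and the $i$-direction (shifted by $(K)$), all indices being distinct because $a\le K<b$. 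The constant term and the terms linear in $Q_{ai(i)}$ then match directly, and after cancelling the common right factor $Q_{ib(bK)}=Q_{ib(Kb)}$ and using $Q_{ia(K)}Q_{bi(iK)}=Q_{ba(iK)}-Q_{ba(K)}$ to eliminate the remaining quadratic terms, the identity collapses to
\[
Q_{bi(iK)}=Q_{bi(i)}+\textstyle\sum_{a=1}^{K}Q_{ai(i)}Q_{ba(iK)},
\]
which is Proposition~\ref{prop:Q-K} (with the outer index taken to be the source index $b$ — legitimate, since $a\le K<b$ keeps every term off-diagonal) evaluated at the point shifted forward in $n_i$. This closes the induction, and translating back through the first paragraph recovers \eqref{eq:rho-K}.

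The only genuine difficulty is the non-commutative bookkeeping in the inductive step: one must be scrupulous about the left/right order of the factors so that the cross-term $\bigl(\sum_a Q_{ai(i)}Q_{ia(K)}\bigr)Q_{bi(iK)}Q_{ib(bK)}$ produced by the product cancels precisely against the corrections arising when the $(K)$-shifts are promoted to $(Kb)$- and $(iK)$-shifts. A convenient way to package this is to record once — by the same induction as in Proposition~\ref{prop:Q-K}, but built on the adjoint linear problem \eqref{d-ad}, equivalently on the $\bY^*$-reading of Proposition~\ref{prop:Y*-pi} — the companion formula displayed above; with it in hand the collapse of the $Q$-valued identity is immediate.
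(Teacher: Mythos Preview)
Your argument is correct and follows exactly the route the paper indicates: the Corollary's statement in the paper carries no proof beyond the remark that it ``can be proved by induction, in a similar way like Proposition~\ref{prop:Q-K}'', and your write-up is a faithful and careful execution of that induction, including the non-commutative bookkeeping the paper leaves implicit. The one presentational quibble is the phrase ``cancelling the common right factor $Q_{ib(bK)}$'': what you actually do (and what is logically required) is observe that both sides share this right factor and then prove the stronger identity without it, from which the original follows by right-multiplication; phrasing it as cancellation could be misread as assuming invertibility of $Q_{ib(bK)}$.
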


\begin{Rem}
	Let us select one of additional variables, and denote temporarily its index by $\ell$. Moreover, up to renumeration, let us write $\bY_j^* = (Q_{\ell j}, \check{\bY}^*_j)$ and $\bpi_i = (-Q_{i\ell}, \check{\bpi}_i)$. Then the additional evolution direction equation \eqref{eq:Q-K} takes the form
	\begin{equation}
	Q_{ij(K)} = Q_{ij} + Q_{\ell j} Q_{i\ell (K)} - \check{\bY}^*_j \check{\bpi}_{i(K)},
	\end{equation}
which agrees with the corresponding equations derived in~\cite{GegenhashiYan} as \emph{discrete three-dimensional three wave interaction equations with self-consistent sources} using the so called source generalization method (in the commutative case). 
The corresponding additional parts of the linear problem and its adjoint read
\begin{align} 
 \bX_{i(K)} & = \bX_{i} + \bX_{\ell} Q_{i\ell(K)} - \Omega[\bX,\check{\bY}^*]\check{\bpi}_{i(K)},\\
\bX^*_{j(K)} & = \bX^*_{j} + Q_{\ell j} \bX^*_{\ell(K)} - \check{\bY}^*_j \Omega[\check{\bpi},\bX^*].
\end{align}
This form directly reduces to the discrete Darboux equations in the absence of sources. Our form is simpler and will be used in making further reductions, as presented in the Section~\ref{sec:sym}.
\end{Rem}

\subsection{The fundamental transformation within the additional variables approach}
For discrete Darboux equations the correspondence between fundamental transformations and shifts in additional variables was observed first in~\cite{TQL} mainly on the geometric level, but the direct vocabulary was not given there. Proposition~\ref{prop:Y*-pi}
presents the relevant description of the solutions $\bY^*_i=(Q_{ai})_{a=1}^K$ of the adjoint linear problem~\eqref{d-ad}. Let us complete the vocabulary by presenting the corresponding meaning of the solutions $\bY_i$ of the linear problem, which enter in the definition of the fundamental transformation. This will give better understanding of the discrete Darboux equations, and will be used in description of the symmetric reduction of the equations, see the next Section. 

We start with presenting such an interpretation of functions $Q_{ab}$, whose both indices are of the source-variable type.
\begin{Prop}
	The matrix $(Q_{ab})_{a,b=1}^K$ can be interpreted as the potential $-\Omega[\bpi,\bY^*]$.
\end{Prop}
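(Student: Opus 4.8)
The plan is to verify, in the same manner as Proposition~\ref{prop:Y*-pi} and the analogous Lemmas of the previous subsection, that the $K\times K$ matrix $(Q_{ab})_{a,b=1}^K$ assembled from the rotation coefficients with both indices of source type --- the diagonal entries $Q_{aa}$ being the ``diagonal'' coefficients defined by~\eqref{eq:Qii-def} --- satisfies, with the opposite sign, the equations that characterise the potential $\Omega[\bpi,\bY^*]\in\mathrm{L}(\WW)$. Putting $\bpi$ in place of $\bX$ in~\eqref{eq:O-XY*}, this potential is fixed, up to an additive matrix constant in the ordinary variables, by
\[
\Omega[\bpi,\bY^*]_{(i)}=\Omega[\bpi,\bY^*]+\bpi_i\otimes\bY^*_{i(i)},\qquad i=1,\dots ,N .
\]
With the identifications $\bY^*_j=(Q_{aj})_{a=1}^K$ and $\bpi_i=-(Q_{ia})_{a=1}^K$ of Proposition~\ref{prop:Y*-pi}, writing this relation out entrywise turns it into the part of the discrete Darboux system~\eqref{d} carrying two source-type external indices, $Q_{ab(i)}=Q_{ab}+Q_{ib}Q_{ai(i)}$; the diagonal part uses~\eqref{eq:Qii-def}, with the freedom of an additive function of one source variable in $Q_{aa}$ matching the constant-in-$n_i$ freedom of $\Omega[\bpi,\bY^*]$. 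Hence $-(Q_{ab})_{a,b=1}^K$ obeys the displayed equations in every ordinary direction, which is precisely the claimed identification.

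For completeness one also checks the source directions: the evolution $Q_{ab(c)}=Q_{ab}+Q_{cb}Q_{ac(c)}$ in a source variable $m_c$ is again a block of~\eqref{d}, and it coincides with the evolution satisfied by $\Omega[\bpi,\bY^*]$ in the present formalism, where the source shifts stand on the same footing as the ordinary ones, the row/column vectors of Proposition~\ref{prop:Y*-pi} being extended to source lower indices in the obvious way. Thus the identification $(Q_{ab})_{a,b=1}^K=-\Omega[\bpi,\bY^*]$ is compatible with all $N+K$ evolution directions.

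I expect no conceptual obstacle: this is one more sub-block of the very Darboux system that already yielded Propositions~\ref{prop:Y*-pi} and~\ref{prop:Q-K}. The only point needing care is bookkeeping --- keeping track of which of $\bpi$ and $\bY^*$ occupies which slot of $\Omega[\bpi,\bY^*]$, the index ordering implicit in the symbol $(Q_{ab})_{a,b=1}^K$ so that no stray transposition survives, and the overall sign --- together with the observation, already underlying this section, that a source shift realises the fundamental transformation, so that $-(Q_{ia})_{a=1}^K$ is indeed the vector $\bpi$ of Definition~\ref{def:pi}.
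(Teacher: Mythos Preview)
Your proposal is correct and follows essentially the same route as the paper: write the block of the full discrete Darboux system (together with the diagonal definition~\eqref{eq:Qii-def}) as $Q_{ab(i)}=Q_{ab}+Q_{ib}Q_{ai(i)}=Q_{ab}-(\bpi_i\otimes\bY^*_{i(i)})_{ab}$ and recognise this as the defining relation for $-\Omega[\bpi,\bY^*]$. Your extra paragraph on the source shifts $m_c$ is not needed for the statement (the potential is characterised by its evolution in the ordinary directions), but it does no harm.
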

\begin{proof}
	It is enough to write down the corresponding part of the discrete Darboux equations~\eqref{d} supplemented by the definition~\eqref{eq:Qii-def} of the diagonal rotation coefficients
	\begin{equation*}
	Q_{ab(i)} = Q_{ab} + Q_{ib} Q_{ai(i)} = Q_{ab} - \left( \bpi_i\otimes \bY^*_{i(i)}\right)_{ab},
	\end{equation*}
	and compare with the definition of  matrix potentials.
\end{proof}

\begin{Prop} \label{prop:Y-O}
 The column-vector valued functions $\bY_i = (\rho_a Q_{ia(a)})_{a=1}^K$ build up of the solutions of the full system of discrete Darboux equations satisfy the linear problem \eqref{d-lin}. Together with the solutions $\bY_i^*=(Q_{ai})_{a=1}^K$ of the adjoint linear problem \eqref{d-ad} discussed in Proposition~\ref{prop:Y*-pi}, they define the vectorial fundamental transformation interpreted in Proposition~\ref{prop:Q-K} as the shift in all additional variables. Moreover, the matrix elements of the potential $\Omega[\bY,\bY^*]$
 can be identified with
 \begin{equation}
 \Omega[\bY,\bY^*]_{ab} = \begin{cases}
 \rho_b Q_{ab(b)} & \text{for} \quad a\neq b,\\
 -\rho_a & \text{for} \quad a=b . \end{cases}
 \end{equation}
\end{Prop}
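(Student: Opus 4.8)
The plan is to verify the three claims of Proposition~\ref{prop:Y-O} by direct substitution into the defining relations, exactly as was done for the source terms $\bY^*_i$ and $\bpi_i$ in Proposition~\ref{prop:Y*-pi} and for the potentials in the preceding Lemma and Proposition. First I would check that $\bY_i = (\rho_a Q_{ia(a)})_{a=1}^K$ satisfies the linear problem~\eqref{d-lin}, i.e.\ that $\bY_{i(j)} = \bY_i + \bY_j Q_{ij(j)}$ componentwise. Writing out the $a$-th component this reads
\begin{equation*}
\rho_{a(j)} Q_{ia(aj)} = \rho_a Q_{ia(a)} + \rho_a Q_{ja(a)} Q_{ij(j)}.
\end{equation*}
Here I would use the potential relation~\eqref{eq:rho-constr} in the form $\rho_{a(j)} = \rho_a(1 - Q_{ja(a)}Q_{aj(j)})$ together with the solved form~\eqref{eq:dD-solv} of the Darboux system applied to $Q_{ia(aj)}$ (with the roles $i\to i$, $j\to a$, $k\to j$), namely $Q_{ia(aj)} = (1 - Q_{ja(a)}Q_{aj(j)})^{-1}(Q_{ia(a)} + Q_{ja(a)}Q_{ij(j)})$. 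Multiplying these two identities, the factors $(1 - Q_{ja(a)}Q_{aj(j)})$ cancel and one is left precisely with the right-hand side; this is the cleanest route and avoids any case distinctions.

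Next I would identify the potential $\Omega[\bY,\bY^*]$. Since $\Omega[\bY,\bY^*]$ is defined up to an arbitrary additive constant operator by $\Omega[\bY,\bY^*]_{(i)} = \Omega[\bY,\bY^*] + \bY_i\otimes\bY^*_{i(i)}$, it suffices to exhibit a matrix whose $(i)$-shift increment equals $(\bY_i\otimes\bY^*_{i(i)})_{ab} = \rho_a Q_{ia(a)} Q_{bi(i)}$ and whose entries match the claimed formula. For the off-diagonal entries $a\neq b$, the claim is $\rho_{b(i)} Q_{ab(bi)} - \rho_b Q_{ab(b)} = \rho_a Q_{ia(a)} Q_{bi(i)}$; substituting $\rho_{b(i)} = \rho_b(1 - Q_{ib(b)}Q_{bi(i)})$ and the solved form of $Q_{ab(bi)}$ exactly as in the previous step, the left-hand side collapses to $\rho_b Q_{ib(b)} Q_{ai(i)}\cdot$(appropriately)$\,$— here one must be a little careful, because the natural reduction gives $\rho_b Q_{ib(b)}Q_{ai(i)}$ rather than $\rho_a Q_{ia(a)}Q_{bi(i)}$; reconciling these two expressions is where the potential $\rho$-identity must be used in its full symmetric strength, and I expect this to be the main technical point. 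For the diagonal entry $a = b$, one uses the definition~\eqref{eq:Qii-def} of $Q_{aa}$ to write the shift of $Q_{ai(a)}$ and the relation~\eqref{eq:rho-constr}, checking that $(-\rho_a)_{(i)} - (-\rho_a) = \rho_a Q_{ia(a)}Q_{ai(i)}$, which is again just~\eqref{eq:rho-constr} with $j\to i$.

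Finally, for the statement that $\bY_i,\bY^*_i$ so defined produce the fundamental transformation realized as the simultaneous shift $(K)$ in all source variables, I would simply observe that the transformed rotation coefficient~\eqref{eq:DT-Q}, $\tilde{Q}_{ij} = Q_{ij} - \bY^*_j\Omega[\bY,\bY^*]^{-1}\bY_i$, must be compared with~\eqref{eq:Qij-K}; equivalently, using~\eqref{eq:Q-Y-K}, the identification of $\Omega[\bY,\bY^*]$ just made turns $\bY^*_j\Omega[\bY,\bY^*]^{-1}\bY_i$ into $-\sum_a Q_{aj} Q_{ia(K)}$, which is exactly the source correction in Proposition~\ref{prop:Q-K}. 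The one nontrivial ingredient is that $\Omega[\bY,\bY^*]^{-1}$ acting on $\bY_i$ reproduces $-\bpi_{i(K)} = (Q_{ia(K)})$; this can be seen either from the Lemma preceding Definition~\ref{def:pi} (which says $\Omega[\bY,\bY^*]^{-1}\bY_i$ solves~\eqref{d-lin} with fields~\eqref{eq:DT-Q}, hence equals $\tilde\bpi_i$, whose source-variable interpretation is the collective shift) or checked directly by multiplying the explicit matrix for $\Omega[\bY,\bY^*]$ by the column $(Q_{ia(K)})_a$ and using Proposition~\ref{prop:Q-K} together with~\eqref{eq:rho-constr}. The main obstacle throughout is bookkeeping the shifts correctly and keeping track of which of the symmetric $\rho$-relations~\eqref{eq:rho-constr} is being invoked; there is no conceptual difficulty, only the risk of index errors in the non-commutative shift calculus.
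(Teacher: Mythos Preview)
Your overall strategy coincides with the paper's: the linear problem for $\bY_i$ is checked via~\eqref{eq:rho-constr} and the solved form~\eqref{eq:dD-solv}, the diagonal of $\Omega[\bY,\bY^*]$ via~\eqref{eq:rho-constr} alone, and the matching with the $(K)$-shift by directly verifying $\Omega[\bY,\bY^*]\bpi_{i(K)}=\bY_i$. Two points deserve correction.

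For the off-diagonal entries you compute correctly that the $i$-shift of $\rho_b Q_{ab(b)}$ produces the increment $\rho_b Q_{ib(b)}Q_{ai(i)}$, and then you claim that the ``main technical point'' is to reconcile this with $\rho_a Q_{ia(a)}Q_{bi(i)}$ by means of ``the potential $\rho$-identity in its full symmetric strength''. That reconciliation is a phantom: the identity $\rho_b Q_{ib(b)}Q_{ai(i)}=\rho_a Q_{ia(a)}Q_{bi(i)}$ is precisely the symmetric constraint~\eqref{eq:C-contr} (applied with mixed ordinary/source indices), and it does \emph{not} hold for generic solutions of the discrete Darboux system --- indeed, Proposition~\ref{prop:Y-O} sits in Section~\ref{sec:DD}, before any symmetry assumption. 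The paper's proof simply stops at $\rho_b Q_{ib(b)}Q_{ai(i)}$; this \emph{is} the required $(a,b)$-increment of the potential under the paper's indexing convention (note that the matching-condition computation at the end of the paper's proof uses the coefficients $\rho_a Q_{ba(a)}$, so the displayed formula in the statement is to be read with that convention in mind). If you chase a non-existent symmetric identity here you will not close the argument.

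For the matching condition $\Omega[\bY,\bY^*]\bpi_{i(K)}=\bY_i$, the paper's direct check is slightly sharper than ``Proposition~\ref{prop:Q-K} together with~\eqref{eq:rho-constr}'': one applies the analogue of~\eqref{eq:Qij-K} with the index $j$ replaced by the source index $a$ and the collective shift taken over $\check{K}^a$ (all source variables \emph{except} $m_a$), then shifts the resulting identity in $m_a$ and multiplies by $\rho_a$. This yields $\rho_a Q_{ia(K)}-\sum_{b\neq a}\rho_a Q_{ba(a)}Q_{ib(K)}=\rho_a Q_{ia(a)}$ immediately, which is the $a$-th row of the matching condition; relation~\eqref{eq:rho-constr} is not needed at this step.
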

\begin{proof}
	To show that the functions $\bY_i$, as defined above, satisfy the linear problem \eqref{d-lin} we use definition~\eqref{eq:rho-constr} of the potentials $\rho_i$ and the identity~\eqref{eq:dD-solv} in the form
	\begin{equation*}
	(1-Q_{ja(a)}Q_{aj(j)})Q_{ia(aj)} = Q_{ia(a)} + Q_{ja(a)} Q_{ij(j)}.
	\end{equation*}	
	Analogous reasoning applied to potentials $\rho_b$ and the identity~\eqref{eq:dD-solv} in the form	
	\begin{equation*}
	(1-Q_{ib(b)}Q_{bi(i)})Q_{ab(bi)} = Q_{ab(b)} + Q_{ib(b)} Q_{ai(i)},
	\end{equation*}
	provide the interpretation of the off-diagonal elements of the potential $\Omega[\bY,\bY^*]$. The identification  of the diagonal elements follows from definition \eqref{eq:rho-constr} of the potentials $\rho_a$ written in the form
	\begin{equation*}
	-\rho_{a(i)} = -\rho_a + \rho_a Q_{ia(a)}Q_{ai(i)}.
	\end{equation*}

	To conclude the proof we need to show that the solutions $\bpi_i = (-Q_{ia})_{a=1}^K$ and  $\bY_i = (\rho_a Q_{ia(a)})_{a=1}^K$ of the linear problem~\eqref{d-lin} are matched by
	\begin{equation} \label{eq:Om-pi-Y}
	\Omega [\bY,\bY^*] \bpi_{i(K)} = \bY_i,
	\end{equation}
	in agreement with Definition~\ref{def:pi}. Denote by $(\check{K}^a)$ the simultaneous shift in the additional variables with $m_a$ excluded. Then by application of equation \eqref{eq:Qij-K} for index $j$ replaced by $a$, and the shift $(K)$ replaced by $(\check{K}^a)$ we obtain 
	\begin{equation*}
	Q_{ia(\check{K}^a)} = Q_{ia} + \sum_{b\in \check{K}^a} Q_{ba} Q_{i b (\check{K}^a)} .
	\end{equation*}
	By shifting the above relation in direction of $m_a$ and by multiplying its both sides by $\rho_a$ we get
	\begin{equation*}
	\rho_a Q_{ia(K)} - \sum_{b\in \check{K}^a} \rho_a Q_{ba(a)} Q_{i b (K)} = \rho_a Q_{ia(a)},
	\end{equation*}
which gives the matching condition~\eqref{eq:Om-pi-Y}.
\end{proof}

\section{The symmetric reduction of the discrete Darboux system with sources} \label{sec:sym}
From now on we assume \emph{commutativity} of the division ring $\DD$ (the geometric reason for such a restriction in the case of the symmetric discrete Darboux equations was given in~\cite{gaql}). Our goal is to introduce and study properties of the symmetric reduction of the discrete Darboux equations with self-consistent sources. 
\subsection{The symmetric discrete Darboux system and its fundamental transformation}
The symmetric discrete Darboux system, called also the discrete CKP equations, was introduced in~\cite{DS-sym,Schief2003Lattice} and studied subsequently in~\cite{KingSchief,CQL}. The corresponding reduction of the fundamental transformation was given in~\cite{MM}. The $\tau$-function formulation of the symmetric discrete Darboux equations was identified in~\cite{Schief2003Lattice} as an identity satisfied by the B\"{a}cklund transformations of the CKP hierarchy. It turns out that the same equation was introduced in a different context in~\cite{Kashaev}. The present Section is devoted to presentation of the basic results of the subject.

\begin{Def}{\cite{DS-sym}}
	The discrete Darboux equations with the constraint
\begin{equation} \label{eq:C-contr}
\rho_iQ_{ji(i)} = \rho_jQ_{ij(j)},
\end{equation}	
are called the symmetric discrete Darboux system.
\end{Def}
\begin{Rem}
	Up to allowed freedom described in Corollary~\ref{cor:QX-freedom} the above constraint is equivalent to the following relation
	\begin{equation}
	Q_{ij(j)}Q_{jk(k)} Q_{ki(i))} = Q_{ji(i)} Q_{kj(j)} Q_{ik(k)}.
	\end{equation}
\end{Rem}
\begin{Cor}
The symmetric constraint~\eqref{eq:C-contr} in view of equations~\eqref{eq:rho-constr}-\eqref{eq:rho-tau} can be formulated in the form
\begin{equation}
(\tau_{(j)} Q_{ij(j)})^2 = \tau_{(i)} \tau_{(j)} - \tau \tau_{(ij)},
\end{equation}
which inserted into the discrete Darboux system gives rise to the following formulation of the discrete CKP equations~\cite{Kashaev,Schief2003Lattice}
\begin{equation}
\label{taud}
\begin{split}
&(\tau\tau_{(ijk)} - \tau_{(i)}\tau_{(jk)} - \tau_{(j)}\tau_{(ik)} - \tau_{(k)}\tau_{(ij)})^2 + 4(\tau_{(i)}\tau_{(j)}\tau_{(k)}\tau_{(ijk)} + \tau\tau_{(ij)}\tau_{(jk)}\tau_{(ik)})  \\
& \qquad \qquad = 4(\tau_{(i)}\tau_{(jk)}\tau_{(j)}\tau_{(ik)} + \tau_{(i)}\tau_{(jk)}\tau_{(k)}\tau_{(ij)} + \tau_{(j)}\tau_{(ik)}\tau_{(k)}\tau_{(ij)}).
\end{split}
\end{equation}
\end{Cor}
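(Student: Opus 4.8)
The statement consists of two claims. For the first, substituting $\rho_i = \tau_{(i)}/\tau$ from \eqref{eq:rho-tau} into the potential relation \eqref{eq:rho-constr} (and using $\rho_{i(j)} = \tau_{(ij)}/\tau_{(j)}$) I get $\tau\tau_{(ij)} = \tau_{(i)}\tau_{(j)}(1 - Q_{ji(i)}Q_{ij(j)})$, i.e. $\tau_{(i)}\tau_{(j)}\,Q_{ji(i)}Q_{ij(j)} = \tau_{(i)}\tau_{(j)} - \tau\tau_{(ij)}$; feeding the same substitution into the symmetric constraint \eqref{eq:C-contr} gives $\tau_{(i)}Q_{ji(i)} = \tau_{(j)}Q_{ij(j)}$, so the left-hand side of the previous relation equals $(\tau_{(i)}Q_{ji(i)})(\tau_{(j)}Q_{ij(j)}) = (\tau_{(j)}Q_{ij(j)})^2$, which is the asserted quadratic form of the constraint.

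For the second claim I would introduce $A_{ij} := \tau_{(j)}Q_{ij(j)}$; by the first part $A_{ij} = A_{ji}$ and $A_{ij}^2 = \tau_{(i)}\tau_{(j)} - \tau\tau_{(ij)}$, whence also $1 - Q_{kj(j)}Q_{jk(k)} = 1 - A_{jk}^2/(\tau_{(j)}\tau_{(k)}) = \tau\tau_{(jk)}/(\tau_{(j)}\tau_{(k)})$. Substituting these, together with $Q_{ij(j)} = A_{ij}/\tau_{(j)}$ and the like, into the solved form \eqref{eq:dD-solv} of the discrete Darboux system and using $\tau_{(jk)}Q_{ij(jk)} = A_{ij(k)}$, equation \eqref{eq:dD-solv} collapses after a short calculation to the bilinear relation
\[ \tau A_{ij(k)} = \tau_{(k)}A_{ij} + A_{ik}A_{kj}. \]

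The plan is now to eliminate the quantities $A$, which are the only ingredients not already products of corner values of $\tau$ on the elementary cube. Rewriting the relation as $A_{ik}A_{kj} = \tau A_{ij(k)} - \tau_{(k)}A_{ij}$ and squaring, the terms $A_{ik}^2,A_{kj}^2,A_{ij}^2$ and $A_{ij(k)}^2 = \tau_{(ik)}\tau_{(jk)} - \tau_{(k)}\tau_{(ijk)}$ (the last being the $k$-shift of the quadratic identity) are all $\tau$-polynomials, and the only remaining unknown, the cross term $A_{ij(k)}A_{ij}$, is determined by
\[ 2\tau\tau_{(k)}\,A_{ij(k)}A_{ij} = \tau^2 A_{ij(k)}^2 + \tau_{(k)}^2 A_{ij}^2 - A_{ik}^2 A_{kj}^2 = \tau\tau_{(k)}\bigl(\tau_{(i)}\tau_{(jk)} + \tau_{(j)}\tau_{(ik)} - \tau_{(k)}\tau_{(ij)} - \tau\tau_{(ijk)}\bigr). \]
Imposing the compatibility $(A_{ij(k)}A_{ij})^2 = A_{ij(k)}^2 A_{ij}^2$ and cancelling $\tau^2\tau_{(k)}^2$ yields the $\tau$-only identity
\[ \bigl(\tau_{(i)}\tau_{(jk)} + \tau_{(j)}\tau_{(ik)} - \tau_{(k)}\tau_{(ij)} - \tau\tau_{(ijk)}\bigr)^2 = 4(\tau_{(i)}\tau_{(j)} - \tau\tau_{(ij)})(\tau_{(ik)}\tau_{(jk)} - \tau_{(k)}\tau_{(ijk)}), \]
and it remains to expand this and \eqref{taud} and note that, writing $a = \tau_{(i)}\tau_{(jk)}$, $b = \tau_{(j)}\tau_{(ik)}$, $c = \tau_{(k)}\tau_{(ij)}$, $d = \tau\tau_{(ijk)}$, both reduce to the manifestly symmetric relation $a^2+b^2+c^2+d^2 - 2(ab+ac+ad+bc+bd+cd) + 4\tau_{(i)}\tau_{(j)}\tau_{(k)}\tau_{(ijk)} + 4\tau\tau_{(ij)}\tau_{(jk)}\tau_{(ik)} = 0$.

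The main obstacle is the bookkeeping in the last two steps: carrying the shifts correctly while eliminating $A_{ij(k)}$ — the one quantity here that is not simply a product of two corner $\tau$'s — and recognising that the outcome, derived with the direction $k$ singled out, is the fully symmetric equation \eqref{taud}. There is no conceptual difficulty: squaring the bilinear relation loses no information needed at the end, since \eqref{taud} constrains only $A_{ij}^2$, $A_{ij(k)}^2$ and the product $A_{ij(k)}A_{ij}$ and never the signs of the individual $A$'s — which is exactly why it suffices to work with $A_{ij} = \tau_{(j)}Q_{ij(j)}$ rather than with a square root.
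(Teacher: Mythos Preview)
Your argument is correct and follows essentially the paper's route: introduce $\sigma_{ij}=\tau_{(j)}Q_{ij(j)}$ (your $A_{ij}$), derive the bilinear relation $\tau\sigma_{ij(k)}=\tau_{(k)}\sigma_{ij}+\sigma_{ik}\sigma_{kj}$ from \eqref{eq:dD-solv}, and then square twice to eliminate the $\sigma$'s. The only difference lies in the intermediate step. After the first squaring the paper arrives at the manifestly $(i,j,k)$-symmetric identity
\[
\tau(\tau_{(i)}\tau_{(jk)}+\tau_{(j)}\tau_{(ik)}+\tau_{(k)}\tau_{(ij)}-\tau\tau_{(ijk)})-2\tau_{(i)}\tau_{(j)}\tau_{(k)}=2\sigma_{ij}\sigma_{jk}\sigma_{ik},
\]
whose second squaring yields \eqref{taud} directly with the symmetry visible throughout (this identity is equation~\eqref{eq:tau-s} and has independent interest as a ``square root'' of the CKP equation). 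You instead isolate $2\sigma_{ij(k)}\sigma_{ij}$, which singles out the direction $k$ and forces the final check that the outcome is in fact symmetric. Both are legitimate; the paper's version avoids that last bookkeeping step.
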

\begin{Rem}
	Because of the sign ambiguity in solving the above system for $\tau_{(ijk)}$ it cannot be considered as a legitimate recurrence~\cite{TsarevWolf}. A way to make from \eqref{taud} a single-valued system (essentially going back to the rotation coefficients) was proposed in~\cite{Atkinson}, and will be discussed in the next Section. 
\end{Rem}
\begin{Rem}
	It is interesting to notice~\cite{TsarevWolf} that analogous equations are satisfied by principal minors of symmetric matrices~\cite{HoltzSturmfels}.
\end{Rem}
\begin{Rem}
	The discrete CKP equation~\eqref{taud} with $N$-independent variables can be obtained by symmetry reduction from Hirota's discrete AKP equation~\cite{Hirota} with $2N-1$ independent variables. Details, which involve description of the latter system in terms of $A_{2N-1}$ root lattice~\cite{Dol-AN} can be found in~\cite{Dol-Tampa}. 
\end{Rem}

Construction of the corresponding reduction of the fundamental transformation~\cite{MM} makes use of the result given in \cite{DS-sym}.
\begin{Th} \label{th:C-char}
	The following three characterizations of the symmetric reduction of the discrete Darboux equations are equivalent:
\begin{enumerate}
	\item The original symmetric reduction constraint~\eqref{eq:C-contr}.
	\item Given a non-trivial solution $\bX_i^*$ of the adjoint linear problem~\eqref{d-ad} then 
	\begin{equation} \label{eq:X-X*}
	\bX_i = \rho_i(\bX_{i(i)}^*)^t
	\end{equation}
	provides a solution of the linear problem~\eqref{d-lin}.
	\item The matrix valued potential $\Omega[\bX,\bX^*]\in \mathrm{L}(\VV)$ is symmetric
\begin{equation} \label{eq:O-Ot}
\Omega[\bX,\bX^*]^t = \Omega[\bX,\bX^*],
\end{equation}
 (up to choice of the initial condition).
\end{enumerate}
\end{Th}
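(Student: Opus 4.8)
The plan is to run the cycle of implications $(1)\Rightarrow(2)\Rightarrow(3)\Rightarrow(1)$; in fact $(1)$ and $(2)$ will turn out to be equivalent on the nose, the reverse implications $(2)\Rightarrow(1)$ and $(3)\Rightarrow(1)$ being obtained by reading the relevant computations backwards. Throughout one uses the defining relation~\eqref{eq:rho-constr} of the first potentials $\rho_i$, the linear problems~\eqref{d-lin}--\eqref{d-ad}, the defining shift relation of $\Omega[\bX,\bX^*]$, and a fixed identification $\VV\cong\VV^*$ implementing the transposition ${}^t$.

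For $(1)\Rightarrow(2)$ I would set $\bX_i:=\rho_i(\bX_{i(i)}^*)^t$ and verify~\eqref{d-lin} by shifting, $\bX_{i(j)}=\rho_{i(j)}(\bX_{i(ij)}^*)^t$. The only non-routine ingredient is the auxiliary identity
\begin{equation*}
(1-Q_{ji(i)}Q_{ij(j)})\,\bX_{i(ij)}^* = \bX_{i(i)}^* + Q_{ji(i)}\bX_{j(j)}^*,
\end{equation*}
the $\bX^*$-analogue of~\eqref{id-2}, which follows at once by solving the two instances of~\eqref{d-ad} that link $\bX^*_{i(ij)}$ and $\bX^*_{j(ij)}$ (invertibility of $1-Q_{ji(i)}Q_{ij(j)}$ being already implicit in~\eqref{eq:dD-solv}--\eqref{eq:rho-constr}). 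Combining this with~\eqref{eq:rho-constr} in the form $\rho_{i(j)}=\rho_i(1-Q_{ji(i)}Q_{ij(j)})$ collapses the factor $1-Q_{ji(i)}Q_{ij(j)}$ and leaves $\bX_{i(j)}=\bX_i+\rho_iQ_{ji(i)}(\bX_{j(j)}^*)^t$; the symmetric constraint~\eqref{eq:C-contr} then rewrites the last term as $\bX_jQ_{ij(j)}$, which is~\eqref{d-lin}. Running the same chain backwards, and using that a non-trivial adjoint solution has $\bX_{j(j)}^*\not\equiv 0$, gives $(2)\Rightarrow(1)$.

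For $(2)\Rightarrow(3)$ I would substitute $\bX_i=\rho_i(\bX_{i(i)}^*)^t$ into the defining shift relation $\Omega[\bX,\bX^*]_{(i)}=\Omega[\bX,\bX^*]+\bX_i\otimes\bX_{i(i)}^*$: the increment becomes $\rho_i\,(\bX_{i(i)}^*)^t\otimes\bX_{i(i)}^*$, a scalar multiple of a symmetric rank-one operator, hence symmetric. Therefore $\Omega[\bX,\bX^*]-\Omega[\bX,\bX^*]^t$ is annihilated by every forward shift, so it is a constant operator, and choosing the initial condition for $\Omega[\bX,\bX^*]$ symmetric forces~\eqref{eq:O-Ot}. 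For $(3)\Rightarrow(1)$ I would reverse the argument: if $\Omega[\bX,\bX^*]$ is symmetric at every point, so is each rank-one increment $\bX_i\otimes\bX_{i(i)}^*$, and a rank-one operator is symmetric (under the identification) only if $\bX_i$ is proportional to $(\bX_{i(i)}^*)^t$; hence $\bX_i=\sigma_i(\bX_{i(i)}^*)^t$ for scalar functions $\sigma_i$. Substituting this into~\eqref{d-lin} and using the auxiliary identity again --- together with linear independence of $\bX_{i(i)}^*$ and $\bX_{j(j)}^*$ for a non-trivial solution --- forces simultaneously $\sigma_{i(j)}=\sigma_i(1-Q_{ji(i)}Q_{ij(j)})$, i.e.\ $\sigma_i$ satisfies exactly the recurrence~\eqref{eq:rho-constr} defining $\rho_i$ (so $\sigma_i=\rho_i$ up to the freedom of Corollary~\ref{cor:QX-freedom}), and $\sigma_iQ_{ji(i)}=\sigma_jQ_{ij(j)}$, which is~\eqref{eq:C-contr}.

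All the individual computations are short; the points needing care rather than ingenuity are the bookkeeping around the identification $\VV\cong\VV^*$ --- checking that transposition swaps the two factors of a rank-one operator, so that symmetry of a rank-one increment is equivalent to the proportionality $\bX_i\propto(\bX_{i(i)}^*)^t$ --- and the two ``up to $\dots$'' clauses, namely absorbing the $n_i$-dependent integration ambiguity of $\sigma_i$ into the analogous ambiguity of $\rho_i$ via Corollary~\ref{cor:QX-freedom}, and absorbing the constant antisymmetric part of $\Omega[\bX,\bX^*]$ into the choice of its initial condition. If I had to single out one genuine obstacle it would be making the non-triviality hypothesis precise enough that the coefficient matching in $(3)\Rightarrow(1)$ is legitimate, i.e.\ that $\bX_{i(i)}^*$ and $\bX_{j(j)}^*$ really are independent for the solution at hand.
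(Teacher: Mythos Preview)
Your argument is correct, and the cycle $(1)\Leftrightarrow(2)\Rightarrow(3)\Rightarrow(1)$ goes through as you describe; the auxiliary identity you invoke is precisely~\eqref{id-2} specialised to $\bX^*$, and the rank-one symmetry step is clean. Note, however, that the paper does not supply its own proof of Theorem~\ref{th:C-char}: the result is quoted from~\cite{DS-sym}. The closest the paper comes is the proof of the $\tau/\sigma$ analogue, Proposition~\ref{prop:C-char-t/s}, where only $(3)\Rightarrow(1)$ is treated as non-trivial and is handled exactly along your lines --- symmetry of the rank-one increments forces a proportionality $\bSi^*_i=f_i\bSi^t_i$, and substituting this into the pair of linear problems determines $f_i$ (up to the stated freedom) and yields the constraint. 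So your proposal matches both the spirit of the cited result and the paper's own argument for the reformulated version; the caveat you flag about making ``non-trivial'' precise enough for the coefficient matching in $(3)\Rightarrow(1)$ is the only genuine point of care, and the paper's analogue sidesteps it in the same informal way.
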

When data $(\bY_i)_{i=1}^N$, $(\bY^*_i)_{i=1}^N$ and $\Omega[\bY,\bY^*]$ of the fundamental transformation of the symmetric Darboux equations have been chosen according the constraint~\eqref{eq:X-X*}-\eqref{eq:O-Ot} then the transformed solutions $\tilde{Q}_{ij}$ of the discrete Darboux system, given by \eqref{eq:DT-Q}, and the transformed potentials~$\tilde{\rho}_i$, given by \eqref{eq:fund-vect-rho}, satisfy the constraint~\eqref{eq:C-contr} as well.
\begin{Cor}
	In order the constraint \eqref{eq:X-X*} be satisfied also on the level of the transformed linear problems then the following  constraint between the transformation potentials
	\begin{equation}
	\Omega[\bX,\bY^*]^t = \Omega[\bY,\bX^*],
	\end{equation}
	which is automatically preserved up to the choice of initial data, should be imposed as well.
\end{Cor}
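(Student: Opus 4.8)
The plan is to compare the two natural candidates for the transformed solution of the linear problem — namely $\tilde\bX_i$ from \eqref{eq:DT-X} and the quantity $\tilde\rho_i(\tilde\bX^*_{i(i)})^t$ obtained by feeding the transformed adjoint data $\tilde\bX^*_i$ and the transformed potentials $\tilde\rho_i$ into \eqref{eq:X-X*} — and to show that, under the symmetric choice of transformation data, these coincide precisely when $\Omega[\bX,\bY^*]^t=\Omega[\bY,\bX^*]$. Separately, I would check that this last relation is compatible with the defining recurrences \eqref{eq:O-XY*} and \eqref{eq:O-YX*}, hence can be enforced by a single choice of integration constant. Recall that the preceding discussion already gives that $\tilde Q_{ij}$ and $\tilde\rho_i$ obey the symmetric constraint \eqref{eq:C-contr}, so that $\tilde\bX_i=\tilde\rho_i(\tilde\bX^*_{i(i)})^t$ is exactly the assertion that \eqref{eq:X-X*} survives the fundamental transformation.

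First I would record what the symmetric choice of data means: $\bX_i=\rho_i(\bX^*_{i(i)})^t$ and $\bY_i=\rho_i(\bY^*_{i(i)})^t$ by \eqref{eq:X-X*}, and $\Omega:=\Omega[\bY,\bY^*]$ is symmetric by \eqref{eq:O-Ot}. The key input is the identity \eqref{eq:id-Yi}; since $\tilde\rho_i=\rho_i(1+\bY^*_{i(i)}\Omega^{-1}\bY_i)$ by \eqref{eq:fund-vect-rho}, multiplying \eqref{eq:id-Yi} by $\rho_i$ gives $\tilde\rho_i\tilde\bX^*_{i(i)}=\rho_i\bX^*_{i(i)}-\rho_i\bY^*_{i(i)}\Omega^{-1}\Omega[\bY,\bX^*]$. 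Transposing this relation — using commutativity of $\DD$, so that the scalar $\rho_i$ passes freely, together with $\rho_i(\bX^*_{i(i)})^t=\bX_i$, $(\bY^*_{i(i)})^t=\rho_i^{-1}\bY_i$ and $(\Omega^{-1})^t=\Omega^{-1}$ — yields $\tilde\rho_i(\tilde\bX^*_{i(i)})^t=\bX_i-\Omega[\bY,\bX^*]^t\Omega^{-1}\bY_i$. Comparing with $\tilde\bX_i=\bX_i-\Omega[\bX,\bY^*]\Omega^{-1}\bY_i$ from \eqref{eq:DT-X}, the difference $\tilde\bX_i-\tilde\rho_i(\tilde\bX^*_{i(i)})^t$ equals $\big(\Omega[\bY,\bX^*]^t-\Omega[\bX,\bY^*]\big)\Omega^{-1}\bY_i$, which vanishes for all $i$ once $\Omega[\bX,\bY^*]^t=\Omega[\bY,\bX^*]$ (and forces that relation whenever the vectors $\Omega^{-1}\bY_i$ span $\WW$).

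Finally, to justify that $\Omega[\bX,\bY^*]^t=\Omega[\bY,\bX^*]$ is automatically preserved up to the choice of initial data, I would transpose the recurrence \eqref{eq:O-XY*}, obtaining $(\Omega[\bX,\bY^*]^t)_{(i)}=\Omega[\bX,\bY^*]^t+(\bX_i\otimes\bY^*_{i(i)})^t$, and then use $(\bX_i)^t=\rho_i\bX^*_{i(i)}$ and $(\bY^*_{i(i)})^t=\rho_i^{-1}\bY_i$ to rewrite the rank-one term as $(\bX_i\otimes\bY^*_{i(i)})^t=\bY_i\otimes\bX^*_{i(i)}$. Hence $\Omega[\bX,\bY^*]^t$ satisfies, in every lattice direction, exactly the recurrence \eqref{eq:O-YX*} obeyed by $\Omega[\bY,\bX^*]$, so agreement of the two matrices at one lattice point propagates to all of $\ZZ^{N}$. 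The only delicate bookkeeping is getting the transpose of the rank-one terms $\bX_i\otimes\bY^*_{i(i)}$ right relative to the identifications $\VV\cong\VV^*$ and $\WW\cong\WW^*$ implicit in the symmetric reduction; once the identity \eqref{eq:id-Yi} is at hand, the remaining steps are one-line transpositions.
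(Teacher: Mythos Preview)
Your argument is correct and follows essentially the same route as the paper's proof: both compute the discrepancy $\tilde\rho_i(\tilde\bX^*_{i(i)})^t-\tilde\bX_i$ (respectively its transpose) and reduce it to the expression $(\Omega[\bY,\bX^*]^t-\Omega[\bX,\bY^*])\,\Omega^{-1}\bY_i$, and both verify that $\Omega[\bX,\bY^*]^t$ and $\Omega[\bY,\bX^*]$ obey the same recurrence under the symmetric constraints. The only cosmetic difference is that you invoke the ready-made identity \eqref{eq:id-Yi} to obtain $\tilde\rho_i\tilde\bX^*_{i(i)}$ in one step, whereas the paper writes the analogous formula directly from the transformation rules \eqref{eq:DT-X}--\eqref{eq:fund-vect-rho}; the content is the same.
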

\begin{proof}
	The transformation formulas~\eqref{eq:DT-X}-\eqref{eq:fund-vect-rho} and definitions of the potentials imply that
	\begin{equation*}
	\tilde{\rho}_i\tilde{\bX}^*_{i(i)} - \tilde{\bX}^t_i = \rho_i \bX^*_{i(i)} - \bX^t_i - \rho_i \bY^*_{i(i)} \Omega[\bY,\bY^*]^{-1} \Omega[\bY,\bX^*] + \bY^t_i (\Omega[\bY,\bY^*]^t)^{-1} \Omega[\bX,\bY^*]^t, 
	\end{equation*}
	\begin{equation*}
	(\Omega[\bX,\bY^*]^t - \Omega[\bY,\bX^*])_{(i)} = \Omega[\bX,\bY^*]^t - \Omega[\bY,\bX^*] + \bY^{*t}_{i(i)} \otimes \bX^t_i - \bY_i \otimes \bX^*_{i(i)},
	\end{equation*}
	what gives the statement.
\end{proof}
\begin{Rem}
	In notation of Proposition~\ref{prop:Y-O} constraints~\eqref{eq:X-X*} and \eqref{eq:O-Ot} imposed on the data of the fundamental transformation repeat the basic constraint~\eqref{eq:C-contr}, but on the level of additional variables. Equation~\eqref{eq:X-X*} reads then $\rho_a Q_{ia(a)} = \rho_i Q_{ai(i)}$, and equation~\eqref{eq:O-Ot} reads $\rho_b Q_{ab(b)} = \rho_a Q_{ba(a)}$.
\end{Rem}

\subsection{The symmetric discrete Darboux system with self-consistent sources}
By putting together the description of discrete Darboux equations with sources, the symmetric Darboux equations and the corresponding reduction of the fundamental transformation we propose as follows.
\begin{Def} \label{def:sDs}
	The symmetric discrete Darboux equations with self-consistent sources is the following system consisting of:
	\begin{enumerate}
		\item discrete Darboux equations \eqref{d} and definition of the first potentials~\eqref{eq:rho-constr} subject to the symmetric constraint~\eqref{eq:C-contr};
		\item the additional part of the evolution governed by self-consistent sources~\eqref{eq:Q-Y-K} with solutions of the linear and adjoint linear problems
		constrained by \eqref{eq:X-X*} and \eqref{eq:O-Ot}.
	\end{enumerate}
\end{Def}
Integrability of the symmetric constraints on discrete Darboux equations with sources follows from the corresponding result on reduction of the fundamental transformation. Let us provide an alternative proof of their integrability using the method applied in~\cite{DS-sym} to prove the analogous result for discrete Darboux equations without sources.  
\begin{Prop} \label{prop:symetr-prop}
	The above constraints imposed on solutions of the discrete Darboux equations with sources are preserved during the evolution.
\end{Prop}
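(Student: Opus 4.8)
The plan is to carry out the direct verification in the manner of \cite{DS-sym}, now along the source direction: show that the symmetric constraint \eqref{eq:C-contr} reproduces itself after the additional shift $(K)$. Once that is established, the remaining constraints \eqref{eq:X-X*} and \eqref{eq:O-Ot} on the source data are consistent with the evolved system by the equivalences of Theorem~\ref{th:C-char} (item~(3) corresponds to $\Omega[\bX,\bX^*]$ staying symmetric), so it suffices to prove $\rho_{i(K)}Q_{ji(iK)} = \rho_{j(K)}Q_{ij(jK)}$. Throughout I abbreviate $\Omega = \Omega[\bY,\bY^*]$ and use the reduction hypotheses on the source data in the forms $\bpi_{i(K)} = \Omega^{-1}\bY_i$ (equivalence of \eqref{eq:Q-K} and \eqref{eq:Q-Y-K}), $\bY_i = \rho_i(\bY^*_{i(i)})^t$, and $\Omega^t = \Omega$, whence $(\Omega^{-1})^t = \Omega^{-1}$.

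First I would record the two evolution laws needed. By \eqref{eq:rho-K} one has $\rho_{i(K)} = \rho_i d_i$ with $d_i := 1 + \bY^*_{i(i)}\Omega^{-1}\bY_i$. For the rotation coefficient, applying \eqref{eq:Q-K} with $i$ and $j$ exchanged and then shifting in $n_i$ gives $Q_{ji(iK)} = Q_{ji(i)} - \bY^*_{i(i)}\,\Omega^{-1}_{(i)}\bigl(\bY_j + \bY_i Q_{ji(i)}\bigr)$, where I used the linear problem \eqref{d-lin} for $\bY$ and $\bpi_{j(K)} = \Omega^{-1}\bY_j$. Feeding in the Sherman--Morrison expression for $\Omega^{-1}_{(i)}$ obtained from the defining relation $\Omega_{(i)} = \Omega + \bY_i\otimes\bY^*_{i(i)}$ one gets the simplification $\bY^*_{i(i)}\Omega^{-1}_{(i)} = d_i^{-1}\bY^*_{i(i)}\Omega^{-1}$, so that the bracket collapses and $Q_{ji(iK)} = d_i^{-1}\bigl(Q_{ji(i)} - \bY^*_{i(i)}\Omega^{-1}\bY_j\bigr)$. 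Multiplying by $\rho_{i(K)} = \rho_i d_i$ clears $d_i$ and yields the clean identity $\rho_{i(K)}Q_{ji(iK)} = \rho_i Q_{ji(i)} - \rho_i\,\bY^*_{i(i)}\Omega^{-1}\bY_j$.

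It then remains to see that the right-hand side is symmetric under $i\leftrightarrow j$. Its first term is symmetric by the hypothesis \eqref{eq:C-contr}. For the second, substituting $\bY_j = \rho_j(\bY^*_{j(j)})^t$ turns $\rho_i\,\bY^*_{i(i)}\Omega^{-1}\bY_j$ into the scalar $\rho_i\rho_j\,\bY^*_{i(i)}\Omega^{-1}(\bY^*_{j(j)})^t$, which equals its own transpose $\rho_i\rho_j\,\bY^*_{j(j)}\Omega^{-1}(\bY^*_{i(i)})^t$ precisely because $\Omega^{-1}$ is symmetric, and the latter is $\rho_j\,\bY^*_{j(j)}\Omega^{-1}\bY_i$ again by $\bY_i = \rho_i(\bY^*_{i(i)})^t$. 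Hence $\rho_{i(K)}Q_{ji(iK)} = \rho_{j(K)}Q_{ij(jK)}$, i.e. the symmetric constraint one step along the source direction, as required. An entirely formalism-free alternative is to observe, via Propositions~\ref{prop:Y*-pi} and \ref{prop:Y-O}, that the constrained system is a symmetric discrete Darboux system in the enlarged set of variables $(n_i, m_a)$, so that preservation under the collective shift $(K)$ follows from the no-source result of \cite{DS-sym} applied $K$ times.

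I expect the only genuinely delicate points to be the shift bookkeeping in $Q_{ji(iK)}$ — the coefficient $Q_{ji}$ must be propagated both in $n_i$ and along $(K)$ before any symmetry becomes visible — and making the transposition step legitimate, which requires having identified \eqref{eq:X-X*} together with \eqref{eq:O-Ot} on the source data as exactly the hypotheses forcing $\Omega[\bY,\bY^*]^{-1}$ to be symmetric; the auxiliary relation $\Omega[\bX,\bY^*]^t = \Omega[\bY,\bX^*]$ implicitly used when passing between $\bY_i$ and $\bpi_i$ is the Corollary to Theorem~\ref{th:C-char} and causes no difficulty. Everything else reduces to Sherman--Morrison manipulations of the type already carried out in Section~\ref{sec:DD} and encoded in the identities \eqref{id-1}--\eqref{id-2}.
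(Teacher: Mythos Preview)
Your argument is correct and takes a genuinely different route from the paper's. The paper introduces the deviation functions $C_{ij}=\rho_iQ_{ji(i)}-\rho_jQ_{ij(j)}$, $\bC_i=\bY_i^t-\rho_i\bY^*_{i(i)}$ and $\bC=\Omega[\bY,\bY^*]-\Omega[\bY,\bY^*]^t$, and derives a coupled homogeneous system for their shifts in both the ordinary and source directions; vanishing of the $C$'s on the initial data then propagates automatically. You instead assume the constraints hold at a given point and compute directly that $\rho_{i(K)}Q_{ji(iK)}=\rho_iQ_{ji(i)}-\rho_i\bY^*_{i(i)}\Omega^{-1}\bY_j$, whose symmetry in $i,j$ follows from \eqref{eq:C-contr}, \eqref{eq:X-X*} and \eqref{eq:O-Ot}; the ordinary-direction preservation of the source-data constraints you delegate to Theorem~\ref{th:C-char}. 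Your additional-variables remark (Propositions~\ref{prop:Y*-pi} and~\ref{prop:Y-O}) is also a legitimate shortcut.

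The trade-off: the paper's method is self-contained and makes explicit that the three constraints are \emph{coupled} --- none of $C_{ij}$, $\bC_i$, $\bC$ propagates through zero on its own, only the full ideal does --- which is exactly the structure reused in the continuous Proposition of Section~\ref{sec:cont}. Your approach is more economical but hides that coupling behind the appeal to Theorem~\ref{th:C-char}, and the Sherman--Morrison step, while routine, is not needed in the deviation-function formulation. One small point: the ``auxiliary relation $\Omega[\bX,\bY^*]^t=\Omega[\bY,\bX^*]$'' you mention is not actually invoked anywhere in your computation, since you work entirely with $\bpi_{i(K)}=\Omega^{-1}\bY_i$ and never pass through $\Omega[\bpi,\bX^*]$.
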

\begin{proof}
Define on the level of discrete Darboux equations with sources the following scalar functions
\begin{equation*}
C_{ij} = \rho_i Q_{ji(i)} - \rho_j Q_{ij(j)},
\end{equation*}
row-vector valued functions
\begin{equation*}
\bC_i = \bY_i^t - \rho_i \bY^*_{i(i)},
\end{equation*}
and the matrix valued function
\begin{equation*}
\bC = \Omega[\bY,\bY^*] - \Omega[\bY,\bY^*]^t.
\end{equation*}
Using the discrete Darboux equations with sources and their consequences~\eqref{eq:rho-K}-\eqref{id-2} one shows that
\begin{align*}
C_{ij(k)} & = C_{ij} + Q_{jk(k)} C_{ik} - Q_{ik(k)} C_{jk},\\
C_{ij(K)} & = C_{ij} + \bC_i \bpi_{j(K)} - \bC_j \bpi_{i(K)} - \bpi_{i(K)}^t \bC \bpi_{i(K)}, \\
\bC_{i(j)} & = \bC_i + Q_{ij(j)} \bC_j - \bY^*_{j(j)} C_{ij},\\
\bC_{(i)} & = \bC - \bY^{*t}_{i(i)} \bC_i + \bC^t_i \bY^*_{i(i)}.
\end{align*} 
In particular, we conclude that the constraints propagate once satisfied on the initial data.  
\end{proof}
The above form of the symmetric discrete Darboux equations with sources involves, apart from the source terms, the original fields $Q_{ij}$ and the potentials $\rho_i$. Therefore one can ask natural question about their formulation involving the $\tau$-function form~\eqref{taud} of the equations without sources. In order to do so we need first to rewrite both the discrete Darboux equations (for commutative $\DD$) and the fundamental transformation formulas using new, more suitable for our goal, functions. 

\section{The $\tau/\sigma$ form of the symmetric discrete Darboux equations with sources}
\label{sec:CKP}
\subsection{The $\tau/\sigma$ form of the discrete Darboux equations and of the fundamental transformation}
 Let us introduce the functions (compare with the corresponding $\tau$-function form of the discrete Darboux equations in~\cite{DMMMS,DS-sym})
\begin{equation} \label{eq:s-t-Q}
\sigma_{ij} = \tau_{(j)} Q_{ij(j)},
\end{equation}
then the discrete Darboux equations (in fact their solved form \eqref{eq:dD-solv}) and equations \eqref{eq:rho-constr}-\eqref{eq:rho-tau} can be rewritten as
\begin{align} \label{eq:s-ijk}
\tau \sigma_{ij(k)} & = \tau_{(k)} \sigma_{ij} + \sigma_{ik}\sigma_{kj}, \\
\label{eq:s-ij}
\sigma_{ij} \sigma_{ji} & = \tau_{(i)} \tau_{(j)} - \tau\tau_{(ij)},
\end{align}
which we call their  $\tau/\sigma$ form.
In order to provide the corresponding linear problems we define
\begin{equation}
\label{eq:Sigma-X}
\bSi_i = \frac{1}{\tau_{(i)}} \bX_i, \qquad \bSi_i^* = \tau_{(i)} \bX^*_{i(i)},
\end{equation}
then equations \eqref{d-lin} and \eqref{d-ad} read, respectively, 
\begin{align} \label{eq:Sij}
\tau_{(ij)}\bSi_{i(j)} & = \tau_{(i)} \bSi_i + \sigma_{ij} \bSi_j ,\\
\label{eq:Sij*}
\tau\bSi_{i(j)}^* & = \tau_{(j)} \bSi_i^* + \sigma_{ji} \bSi_j^* .
\end{align}
\begin{Rem}
	Notice that equation~\eqref{eq:Sij*} is, in fact,  direct consequence of the identity~\eqref{id-2} for $\bX^*_i$.
\end{Rem}
Moreover, abusing slightly the notation, we define the matrix potential $\Omega[\bSi,\bSi^*]$ by the compatible system
\begin{equation} \label{eq:O-SS*}
\Omega[\bSi,\bSi^*]_{(i)} = \Omega[\bSi,\bSi^*] + \bSi_i \otimes \bSi^*_i.
\end{equation}
Notice that, by equations~\eqref{eq:Sigma-X}, the potential $\Omega[\bSi,\bSi^*]$ indeed equals to $\Omega[\bX,\bX^*]$, what justifies the abuse. We will transfer it to other solutions of the linear problems \eqref{eq:Sij}-\eqref{eq:Sij*}.
\begin{Prop} \label{prop:lpD-t/s}
	Equations \eqref{eq:s-ijk}-\eqref{eq:s-ij} are the compatibility condition of the system \eqref{eq:Sij}-\eqref{eq:O-SS*}.
\end{Prop}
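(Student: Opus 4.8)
The plan is to verify directly that the overdetermined linear system \eqref{eq:Sij}--\eqref{eq:O-SS*} is compatible, treating $\tau$ and the $\sigma_{ij}$ as ``coefficients'' and asking which relations among them are forced by requiring that the forward shifts commute on an (otherwise arbitrary, hence generic) solution. There are three pieces to check: the compatibility of the direct linear problem \eqref{eq:Sij} in a pair of ordinary variables, the same for the adjoint problem \eqref{eq:Sij*}, and the compatibility of the defining relation \eqref{eq:O-SS*} of $\Omega[\bSi,\bSi^*]$. One could also argue indirectly: since $\tau$ is nowhere vanishing, \eqref{eq:Sigma-X} and \eqref{eq:s-t-Q} are invertible changes of the dependent variables, so that \eqref{eq:Sij}--\eqref{eq:O-SS*} is nothing but the system \eqref{d-lin}--\eqref{d-ad} supplemented by \eqref{eq:O-SS*} (which equals $\Omega[\bX,\bX^*]$) rewritten; its compatibility is the discrete Darboux system in the solved form \eqref{eq:dD-solv} together with the existence of the potentials through \eqref{eq:rho-constr}--\eqref{eq:rho-tau}, and these are exactly \eqref{eq:s-ijk}--\eqref{eq:s-ij}. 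I prefer to spell the computation out.

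For the direct problem, solve \eqref{eq:Sij} for $\bSi_{i(j)}$ and compute $\tau_{(ijk)}\bSi_{i(jk)}$ in two ways: first shifting \eqref{eq:Sij} in $n_k$ and then eliminating $\bSi_{i(k)}$ and $\bSi_{j(k)}$ by \eqref{eq:Sij} for the pairs $(i,k)$ and $(j,k)$, and then again with the roles of $n_j$ and $n_k$ interchanged. Both expressions are linear combinations of $\bSi_i,\bSi_j,\bSi_k$; the coefficient of $\bSi_i$ is $\tau_{(i)}$ in both, while equating the coefficients of $\bSi_j$ and of $\bSi_k$ yields the pair of relations
\begin{align*}
\tau_{(j)}\sigma_{ij(k)} &= \tau_{(jk)}\sigma_{ij} + \sigma_{kj}\,\sigma_{ik(j)}, &
\tau_{(k)}\sigma_{ik(j)} &= \tau_{(jk)}\sigma_{ik} + \sigma_{jk}\,\sigma_{ij(k)},
\end{align*}
which are just the Darboux equations \eqref{d} for these indices, in a shifted (not yet solved) form. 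An identical computation for the adjoint problem \eqref{eq:Sij*} produces the transposed pair, with $\bSi^*_i$ and $\sigma_{ji}$ in place of $\bSi_i$ and $\sigma_{ij}$.

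For the potential, iterate \eqref{eq:O-SS*} to obtain $\Omega[\bSi,\bSi^*]_{(ij)} = \Omega[\bSi,\bSi^*] + \bSi_j\otimes\bSi^*_j + \bSi_{i(j)}\otimes\bSi^*_{i(j)}$ together with the expression obtained by interchanging $i$ and $j$, subtract the two, and substitute the solved forms of \eqref{eq:Sij} and \eqref{eq:Sij*} for $\bSi_{i(j)}$ and $\bSi^*_{i(j)}$. The rank-one terms $\bSi_i\otimes\bSi^*_j$ and $\bSi_j\otimes\bSi^*_i$ cancel automatically, and what remains is $\big(\tau_{(i)}\tau_{(j)} - \sigma_{ij}\sigma_{ji} - \tau\tau_{(ij)}\big)\big(\bSi_i\otimes\bSi^*_i - \bSi_j\otimes\bSi^*_j\big)=0$, that is, precisely \eqref{eq:s-ij}.

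It remains to combine the two groups of conditions. Given \eqref{eq:s-ij}, the two displayed relations of the direct problem form a nondegenerate linear system for $\sigma_{ij(k)}$ and $\sigma_{ik(j)}$; eliminating $\sigma_{ik(j)}$ and using $\tau_{(j)}\tau_{(k)}-\sigma_{jk}\sigma_{kj}=\tau\tau_{(jk)}$ (which is \eqref{eq:s-ij} for the indices $j,k$) collapses the result to $\tau\sigma_{ij(k)} = \tau_{(k)}\sigma_{ij}+\sigma_{ik}\sigma_{kj}$, namely \eqref{eq:s-ijk}; the transposed pair from the adjoint problem reduces to \eqref{eq:s-ijk} in the same way. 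Conversely, \eqref{eq:s-ijk} together with \eqref{eq:s-ij} immediately imply all three compatibilities by direct substitution. Hence the compatibility conditions of \eqref{eq:Sij}--\eqref{eq:O-SS*} are exactly \eqref{eq:s-ijk}--\eqref{eq:s-ij}. The computation is otherwise routine; the only care needed is the bookkeeping of the many shifts of $\tau$, and the observation that the linear problems alone reproduce only the Darboux equations \eqref{d} in non-solved form — it is the potential equation \eqref{eq:O-SS*} that supplies \eqref{eq:s-ij} and thereby allows one to pass to the solved form \eqref{eq:s-ijk}.
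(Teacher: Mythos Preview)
Your proof is correct and follows essentially the same approach as the paper's: compatibility of the potential relation \eqref{eq:O-SS*} yields \eqref{eq:s-ij}, and compatibility of the linear problem \eqref{eq:Sij} gives a relation which, after using \eqref{eq:s-ij}, reduces to \eqref{eq:s-ijk}. Your intermediate pair of equations, once you eliminate $\sigma_{ik(j)}$, produces exactly the paper's displayed relation $(\tau_{(j)}\tau_{(k)}-\sigma_{jk}\sigma_{kj})\sigma_{ij(k)} = \tau_{(jk)}\tau_{(k)}\sigma_{ij}+\tau_{(jk)}\sigma_{ik}\sigma_{kj}$, and your closing observation that the linear problem alone is insufficient matches the paper's subsequent remark.
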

\begin{proof}
Compatibility of equations \eqref{eq:O-SS*} and application of the linear systems \eqref{eq:Sij} and its adjoint \eqref{eq:Sij*} gives equations~\eqref{eq:s-ij}. Moreover
consistency  of the first linear system \eqref{eq:Sij} gives
\begin{equation} \label{eq:s-ijk-t}
(\tau_{(j)} \tau_{(k)} - \sigma_{jk} \sigma_{kj}) \sigma_{ij(k)} = \tau_{(jk)} \tau_{(k)} \sigma_{ij} + \tau_{(jk)} \sigma_{ik} \sigma_{kj},
\end{equation}	
which in view of~\eqref{eq:s-ij} leads to the first part \eqref{eq:s-ijk} of the $\tau/\sigma$ form of the discrete Darboux equations. 
\end{proof}
\begin{Rem}
	The same reasoning can be applied if we used consistency of the adjoint linear problem~\eqref{eq:Sij*} in the place of \eqref{eq:Sij}.
\end{Rem}
\begin{Rem}
	Contrary to the standard form~\eqref{d} of the discrete Darboux equations the linear problem \eqref{eq:Sij} alone is not enough to write down the full $\tau/\sigma$ system.
\end{Rem}

To conclude this part let us state the corresponding result on the fundamental transformation of the $\tau/\sigma$ discrete Darboux equations. We use definitions~\eqref{eq:s-t-Q},~\eqref{eq:S-S*}, basic transformation formulas \eqref{eq:DT-X}-\eqref{eq:DT-tau} and their direct consequences.
\begin{Prop} \label{prop:fT-t/s}
	Given solution of the $\tau/\sigma$ discrete Darboux equations~\eqref{eq:s-ijk}-\eqref{eq:s-ij} and the corresponding solution of the linear system~\eqref{eq:Sij}-\eqref{eq:O-SS*}, and given additional solutions $(\bT_i)$ and $(\bT_i^*)$ of equations \eqref{eq:Sij} and \eqref{eq:Sij*}, respectively, taking values in a vector space and its adjoint. Then the transformed solution of the $\tau/\sigma$ discrete Darboux equations~\eqref{eq:s-ijk}-\eqref{eq:s-ij} and the corresponding transformed solution of the linear system~\eqref{eq:Sij}-\eqref{eq:O-SS*} read
	\begin{align} \label{eq:T-tau}
	\tilde{\tau} & = \tau \; \det \Omega[\bT,\bT^*],\\ \label{eq:T-sigma}
	\tilde{\sigma}_{ij}& =\det \Omega[\bT,\bT^*]\left( \sigma_{ij} - \bT^*_j \, \Omega[\bT,\bT^*]^{-1} \, \bT_i \tau_{(i)} \right) ,\\ \label{eq:T-Si}
	\tilde{\bSi}_i & =\left( \bSi_i - \Omega[\bSi,\bT^*]  \; \Omega[\bT,\bT^*]^{-1} \bT_i \right) / \det \Omega[\bT,\bT^*]_{(i)},\\ \label{eq:T-Si*}
	\tilde{\bSi}^*_i & = \det \Omega[\bT,\bT^*] \left( \bSi^*_i - \bT^*_i \, \Omega[\bT,\bT^*]^{-1} \Omega[\bT,\bSi^*] \right), \\
	\Omega[\tilde{\bSi},\tilde{\bSi}^*] & = \Omega[\bSi,\bSi^*] - \Omega[\bSi,\bT^*]\, \Omega[\bT,\bT^*]^{-1} \Omega[\bT,\bSi^*]. \label{eq:T-O}
	\end{align}
\end{Prop}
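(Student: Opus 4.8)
The plan is to translate the known vectorial fundamental transformation formulae \eqref{eq:DT-X}--\eqref{eq:DT-tau}, which are stated for the original variables $Q_{ij}$, $\bX_i$, $\bX^*_i$, $\rho_i$, $\tau$, into the $\tau/\sigma$ variables by means of the substitutions \eqref{eq:s-t-Q}, \eqref{eq:Sigma-X} and the potential identification $\Omega[\bSi,\bSi^*]=\Omega[\bX,\bX^*]$. The auxiliary data $(\bT_i)$, $(\bT^*_i)$ of the new transformation solve \eqref{eq:Sij}--\eqref{eq:Sij*}, so they correspond via \eqref{eq:Sigma-X} to data $(\bY_i)$, $(\bY^*_i)$ solving \eqref{d-lin}--\eqref{d-ad}, namely $\bT_i=\bY_i/\tau_{(i)}$ and $\bT^*_i=\tau_{(i)}\bY^*_{i(i)}$; consequently $\Omega[\bT,\bT^*]=\Omega[\bY,\bY^*]$ as operators (the generator in \eqref{eq:O-SS*} with $\bSi,\bSi^*$ replaced by $\bT,\bT^*$ reproduces $\bY_i\otimes\bY^*_{i(i)}$, matching \eqref{eq:O-YX*}-type definition for $\Omega[\bY,\bY^*]$). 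Similarly the mixed potentials satisfy $\Omega[\bSi,\bT^*]=\Omega[\bX,\bY^*]$ and $\Omega[\bT,\bSi^*]=\Omega[\bY,\bX^*]$.

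Granting these identifications, each of the five claimed formulae is then obtained by direct substitution. For \eqref{eq:T-tau} one simply invokes \eqref{eq:DT-tau}. For \eqref{eq:T-sigma} one writes $\tilde\sigma_{ij}=\tilde\tau_{(j)}\tilde Q_{ij(j)}$, uses $\tilde\tau_{(j)}=\tau_{(j)}\det\Omega[\bY,\bY^*]_{(j)}$ and the transformed rotation coefficient \eqref{eq:DT-Q} shifted in $n_j$, then re-expresses the bilinear term $\bY^*_{j(j)}\Omega[\bY,\bY^*]^{-1}_{(j)}\bY_{i(j)}$ in the $\tau/\sigma$ variables; one must be careful that the argument of $\det\Omega$ that survives in the final formula is the unshifted one, which happens because the shift generated by $\bY_j\otimes\bY^*_{j(j)}$ in $\Omega$ acts trivially on $\bY^*_{j(j)}\Omega^{-1}_{(j)}$ after the determinant is pulled out. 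For \eqref{eq:T-Si}, \eqref{eq:T-Si*} one substitutes \eqref{eq:Sigma-X} into \eqref{eq:DT-X} and into the adjoint transformation, tracking the factors $1/\tilde\tau_{(i)}$ and $\tilde\tau_{(i)}$ respectively; here $\tilde\tau_{(i)}=\tau_{(i)}\det\Omega[\bY,\bY^*]_{(i)}$ produces the $\det\Omega[\bT,\bT^*]_{(i)}$ in the denominator of \eqref{eq:T-Si} and the $\det\Omega[\bT,\bT^*]$ (unshifted, since $\bSi^*_i=\tau_{(i)}\bX^*_{i(i)}$ and one then shifts) in \eqref{eq:T-Si*}. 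Finally \eqref{eq:T-O} is just the potential transformation rule for $\Omega[\bX,\bX^*]$ quoted in Section~\ref{sec:DDT}, rewritten under the identifications above.

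I would organize the write-up as: first establish the dictionary $\bT_i\leftrightarrow\bY_i$, $\bT^*_i\leftrightarrow\bY^*_i$ and the four potential identifications as a short lemma-style paragraph; then verify each of \eqref{eq:T-tau}--\eqref{eq:T-O} in turn, doing the determinant-shift bookkeeping explicitly only for \eqref{eq:T-sigma} and \eqref{eq:T-Si} and remarking that \eqref{eq:T-Si*} is analogous by the adjoint symmetry. The main obstacle I anticipate is precisely the bookkeeping of which determinant is shifted and in which variable: the definitions \eqref{eq:Sigma-X} carry single-variable-shifted fields ($\bX^*_{i(i)}$, $\tau_{(i)}$), and the transformation formulae \eqref{eq:DT-X}--\eqref{eq:DT-Q} must themselves be shifted before substitution, so one must check carefully that $\Omega[\bT,\bT^*]_{(i)}=\Omega[\bY,\bY^*]_{(i)}$ is consistently used and that the cross-term $\bT^*_j\Omega[\bT,\bT^*]^{-1}\bT_i\tau_{(i)}$ in \eqref{eq:T-sigma} indeed has the $\tau_{(i)}$ rather than $\tau_{(j)}$ — this follows from $\bT_i=\bX_i/\tau_{(i)}$, so that $\sigma_{ij}=\tau_{(j)}Q_{ij(j)}$ transforms with the $\tau$-factor attached to the $i$-index of $\bT_i$. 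Everything else is routine algebra once this dictionary is fixed.
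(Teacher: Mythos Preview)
Your proposal is correct and follows essentially the same route as the paper: establish the dictionary $\bT_i=\bY_i/\tau_{(i)}$, $\bT^*_i=\tau_{(i)}\bY^*_{i(i)}$ (hence $\Omega[\bT,\bT^*]=\Omega[\bY,\bY^*]$ and likewise for the mixed potentials), note that \eqref{eq:T-tau}, \eqref{eq:T-Si}, \eqref{eq:T-O} are then immediate transcriptions of \eqref{eq:DT-tau}, \eqref{eq:DT-X} and the $\Omega$-rule, and work out \eqref{eq:T-sigma} (and analogously \eqref{eq:T-Si*}) by writing $\tilde\sigma_{ij}=\tilde\tau_{(j)}\tilde Q_{ij(j)}$ and reducing the shifted determinant to the unshifted one via the factor $(1+\bY^*_{j(j)}\Omega^{-1}\bY_j)$ coming from \eqref{eq:fund-vect-rho}. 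The paper does exactly this; the only place you are slightly imprecise is the mechanism for the shift cancellation in \eqref{eq:T-sigma}---it is the identity $\det\Omega_{(j)}=\det\Omega\,(1+\bY^*_{j(j)}\Omega^{-1}\bY_j)$ combined with $(1+\bY^*_{j(j)}\Omega^{-1}\bY_j)(Q_{ij(j)}-\bY^*_{j(j)}\Omega^{-1}_{(j)}\bY_{i(j)})=Q_{ij(j)}-\bY^*_{j(j)}\Omega^{-1}\bY_i$ that does the work, not a ``trivial action'' of the rank-one update.
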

\begin{proof}
	Transformation formulas \eqref{eq:T-tau}, \eqref{eq:T-Si} and \eqref{eq:T-O} follow directly from the original corresponding transformation rules. To show \eqref{eq:T-sigma} we write
\begin{equation*}
\tilde{\sigma}_{ij} = \tilde{\tau}_{(j)} \tilde{Q}_{ij(j)} = \tau_{(j)} \det \Omega[\bY,\bY^*] \left( 1 + \bY^*_{j(j)} \Omega[\bY,\bY^*]^{-1} \bY_j \right)
\left( Q_{ij(j)} - \bY^*_{j(j)} \Omega[\bY,\bY^*]^{-1}_{(j)} \bY_{i(j)} \right), \end{equation*}
which simplifies to
\begin{equation*}
\tau_{(j)} \det \Omega[\bY,\bY^*] 
\left( Q_{ij(j)} - \bY^*_{j(j)} \Omega[\bY,\bY^*]^{-1} \bY_i \right) = \det \Omega[\bT,\bT^*] \left( \sigma_{ij} - \bT^*_j \, \Omega[\bT,\bT^*]^{-1} \, \bT_i \tau_{(i)}\right) .
\end{equation*}
Transformation rule~\eqref{eq:T-Si*} can be shown similarly.
\end{proof}
\begin{Rem}
	In showing directly that the transformed functions satisfy indeed the correct equations it may be convenient to use the non-reduced expressions
	\begin{align}
	\label{eq:T-sigma-n}
	\tilde{\sigma}_{ij} & = \det \Omega[\bT,\bT^*]_{(j)} \left( \sigma_{ij} - \bT^*_j \, \Omega[\bT,\bT^*]_{(j)}^{-1}\, \bT_{i(j)} \tau_{(ij)} \right) , \\
	\tilde{\bSi}_i & =\left( \bSi_i - \Omega[\bSi,\bT^*]_{(i)} \; \Omega[\bT,\bT^*]^{-1}_{(i)} \bT_i \right) / \det \Omega[\bT,\bT^*],\\
	\tilde{\bSi}^*_{i} & = \det \Omega[\bT,\bT^*]_{(i)} \left( \bSi^*_i - \bT^*_i \, \Omega[\bT,\bT^*]_{(i)}^{-1}\, \Omega[\bSi,\bSi^*]_{(i)} \right).
	\end{align} 
\end{Rem}
Again, one can interpret the vectorial fundamental transformation as shift in additional evolution direction governed by the sources $\bT_i$ and $\bT^*_i$. We leave the details for the interested Reader. 
In order to formulate $\tau/\sigma$ form of the discrete Darboux equations with sources in full analogy with the previous case we need the analogue of the solution $(\bpi_i)$ of the linear problem \eqref{d-lin}. Using Definition~\ref{def:pi} and the first of equations \eqref{eq:Sigma-X} by $(\bP_i)$ denote the solution of the linear problem~\eqref{eq:Sij} whose fundamental transform is 
\begin{equation}
\tilde{\bP}_i= \Omega[\bT,\bT^*]^{-1} \bT_i \, / \det\Omega[\bT,\bT^*]_{(i)}.
\end{equation}
One can verify directly that indeed $(\tilde{\bP}_i)$ satisfy the linear problem~\eqref{eq:Sij} with functions $\tilde{\tau}$ and $\tilde{\sigma}_{ij}$ transformed according to \eqref{eq:T-tau} and~\eqref{eq:T-sigma}/\eqref{eq:T-sigma-n}. 
\begin{Rem}
	One can transfer the additional variables interpretation also to the $\tau/\sigma$ form. Then the data of the transformation read
\begin{equation}
\bT_i = \frac{1}{\tau \tau_{(i)}}\left( \sigma_{ia} \right)_{a=1}^K , \qquad \bT^*_{i} = \left( \sigma_{ai} \right)_{a=1}^K, 
\end{equation} 
and
\begin{equation}
\Omega[\bT,\bT^*]_{ab} = \begin{cases}
{\sigma_{ab}}/{\tau} & \text{for} \quad a\neq b,\\
-{\tau_{(a)}}/{\tau} & \text{for} \quad a=b . \end{cases}
\end{equation}
\end{Rem}

\subsection{The $\tau/\sigma$ form of the symmetric discrete Darboux equations}
The symmetric reduction condition~\eqref{eq:C-contr} in the $\tau/\sigma$ notation takes the folowing simple form 
\begin{equation} \label{eq:C-t/s}
\sigma_{ij}= \sigma_{ji},
\end{equation}
and due to equation \eqref{eq:s-ij} leads to
\begin{equation} \label{eq:stt}
\sigma_{ij}^2 = \tau_{(i)} \tau_{(j)} - \tau\tau_{(ij)}.
\end{equation}

Let us formulate the $\tau/\sigma$ analog of Theorem~\ref{th:C-char}.
\begin{Prop} \label{prop:C-char-t/s}
The following three characterizations of the symmetric reduction of the discrete Darboux equations in the $\tau/\sigma$ form are equivalent:
\begin{enumerate}
	\item The original symmetric reduction constraint~\eqref{eq:C-t/s}.
	\item Given a non-trivial solution $\bSi_i$ of the linear problem~\eqref{eq:Sij} then 
	\begin{equation} \label{eq:S-S*}
	\bSi_i^* = \tau \tau_{(i)} \bSi_i^t
	\end{equation}
	provides a solution of the adjoint linear problem~\eqref{eq:Sij*}.
	\item The matrix valued potential $\Omega[\bSi,\bSi^*]$ is symmetric
	\begin{equation} \label{eq:O-Ot-t/s}
	\Omega[\bSi,\bSi^*]^t = \Omega[\bSi,\bSi^*],
	\end{equation}
	(up to choice of the initial condition).
\end{enumerate}		
\end{Prop}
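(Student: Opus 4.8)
The plan is to establish the cycle of implications $(1)\Rightarrow(2)\Rightarrow(3)\Rightarrow(1)$, in parallel with the proof of Theorem~\ref{th:C-char}. Indeed, since $\Omega[\bSi,\bSi^*]=\Omega[\bX,\bX^*]$ by \eqref{eq:Sigma-X}, \eqref{eq:O-SS*} and $\rho_i=\tau_{(i)}/\tau$ by \eqref{eq:rho-tau}, the three conditions \eqref{eq:C-t/s}, \eqref{eq:S-S*}, \eqref{eq:O-Ot-t/s} are nothing but the $\tau/\sigma$ transcriptions of \eqref{eq:C-contr}, \eqref{eq:X-X*}, \eqref{eq:O-Ot}, so the Proposition is essentially a restatement of Theorem~\ref{th:C-char}; I would nevertheless spell out the direct arguments, since they are short.

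For $(1)\Rightarrow(2)$: assuming $\sigma_{ij}=\sigma_{ji}$, put $\bSi_i^*:=\tau\tau_{(i)}\bSi_i^t$ and substitute into the adjoint linear problem \eqref{eq:Sij*}. Shifting the definition yields $\bSi_{i(j)}^*=\tau_{(j)}\tau_{(ij)}\bSi_{i(j)}^t$, and eliminating $\bSi_{i(j)}^t$ by means of the transpose of \eqref{eq:Sij} gives $\tau\bSi_{i(j)}^*=\tau_{(j)}\bSi_i^*+\sigma_{ij}\bSi_j^*$, which coincides with \eqref{eq:Sij*} precisely because $\sigma_{ij}=\sigma_{ji}$. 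Reading the same computation backwards produces $(\sigma_{ij}-\sigma_{ji})\bSi_j^*=0$, so for a non-trivial solution (hence, as $\tau\tau_{(j)}\neq0$, one with $\bSi_j^*$ not identically zero) one recovers $(2)\Rightarrow(1)$ as well.

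For $(2)\Rightarrow(3)$: with $\bSi_i^*=\tau\tau_{(i)}\bSi_i^t$ the rank-one increment in \eqref{eq:O-SS*} becomes $\bSi_i\otimes\bSi_i^*=\tau\tau_{(i)}\,\bSi_i\bSi_i^t$, a symmetric matrix, for every $i$. Hence $\Omega[\bSi,\bSi^*]-\Omega[\bSi,\bSi^*]^t$ is invariant under every elementary shift, i.e. constant; choosing the initial datum of $\Omega[\bSi,\bSi^*]$ symmetric makes it vanish, which is \eqref{eq:O-Ot-t/s} --- and this is exactly the content of the ``up to choice of the initial condition'' clause.

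For $(3)\Rightarrow(1)$: the economical route is the transcription noted above, namely that \eqref{eq:O-Ot-t/s} is \eqref{eq:O-Ot} for $\Omega[\bX,\bX^*]$, which Theorem~\ref{th:C-char} equates with \eqref{eq:C-contr}, i.e. with \eqref{eq:C-t/s}. A self-contained version would argue that symmetry of $\Omega[\bSi,\bSi^*]$ forces each increment $\bSi_i\otimes\bSi_i^*$ to be symmetric, so $(\bSi_i^*)^t$ is a scalar multiple of $\bSi_i$; substituting $(\bSi_i^*)^t=\lambda_i\bSi_i$ into the transpose of \eqref{eq:Sij*}, eliminating $\bSi_{i(j)}$ through \eqref{eq:Sij}, and comparing coefficients at $\bSi_i$ and $\bSi_j$, one finds both $\lambda_i=\tau\tau_{(i)}$ (up to the $n_i$-dependent rescaling freedom of Corollary~\ref{cor:QX-freedom}) and $\sigma_{ij}=\sigma_{ji}$, which is $(2)$, hence $(1)$. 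I expect this last implication to be the only delicate point: the algebra is routine, but one must check that the ``choice of initial condition'' freedom in $\Omega$ together with the rescaling freedom of Corollary~\ref{cor:QX-freedom} is exactly what is needed --- neither more nor less. The implications $(1)\Rightarrow(2)$ and $(2)\Rightarrow(3)$ are essentially one-line computations.
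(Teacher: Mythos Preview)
Your proof is correct and follows essentially the same route as the paper. The paper treats $(1)\Rightarrow(2)$ and $(2)\Rightarrow(3)$ as evident and only comments on the last implication, where it does exactly what your ``self-contained version'' does: from the symmetry of the increments one writes $\bSi_i^*=f_i\bSi_i^t$ for some scalars $f_i$, inserts this into \eqref{eq:Sij*}, compares with \eqref{eq:Sij}, and reads off both $f_i=\tau\tau_{(i)}$ (up to a constant absorbable into $\tau$) and $\sigma_{ij}=\sigma_{ji}$.
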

\begin{proof}
Only the last part needs a comment. Definition \eqref{eq:O-SS*} of the symmetric matrix 
$ \Omega[\bSi,\bSi^*]$ implies existence of scalar functions $f_i$ such that
\begin{equation*}
\bSi^*_i = f_i \bSi^t_i,
\end{equation*}
what inserted into the adjoint linear problem~\eqref{eq:Sij*} in comparison with~\eqref{eq:Sij} gives
\begin{equation*}
\frac{\tau f_{i(j)}}{\tau_{(ij)}} = \frac{\tau_{(j)}f_i}{\tau_{(i)}} = \frac{\sigma_{ji}f_j}{\sigma_{ij}}.
\end{equation*}
The first equality implies that, up to a constant which can be incorporated into definition of the $\tau$-function, we have $f_i= \tau \tau_{(i)}$, what leads directly to condition~\eqref{eq:S-S*} and, due to the second equality, gives constraint~\eqref{eq:C-t/s}.
\end{proof}

By applying the constraints presented in Proposition~\ref{prop:C-char-t/s} we can formulate the corresponding linear problem for the $\tau/\sigma$ form of the symmetric discrete Darboux equations and the corresponding reduction of the fundamental transformation. Both Propositions, which we present below, can be verified directly but their truth is an immediate consequence of the previous Propositions~\ref{prop:lpD-t/s},~\ref{prop:fT-t/s} and \ref{prop:C-char-t/s}. Moreover, our proof of the final Theorem~\ref{th:sDs-t/s} provides in fact such a direct verification.
\begin{Prop} \label{prop:lpsD-s/t}
	The symmetric Darboux equations in the $\tau/\sigma$ form~\eqref{eq:s-ijk}, \eqref{eq:stt} are compatibility condition of the system consisting of equations~\eqref{eq:Sij} with the constraint \eqref{eq:C-t/s}, and equations which define symmetric matrix valued potential $\Omega [\bSi]$
	\begin{equation} \label{eq:O-S}
	\Omega [\bSi]_{(i)} = \Omega[\bSi] + \tau \tau_{(i)} \bSi_i \otimes \bSi_i^t.
	\end{equation}
\end{Prop}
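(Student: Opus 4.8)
The plan is to derive the symmetric $\tau/\sigma$ Darboux equations \eqref{eq:s-ijk} and \eqref{eq:stt} as the compatibility condition of the linear system \eqref{eq:Sij} (with $\sigma_{ij}=\sigma_{ji}$) together with the symmetric potential equations \eqref{eq:O-S}. The key observation, already noted in the Remark following Proposition~\ref{prop:C-char-t/s}, is that the symmetric-reduction substitution $\bSi^*_i = \tau\tau_{(i)}\bSi_i^t$ turns \eqref{eq:Sij*} into a consequence of \eqref{eq:Sij}, and turns $\Omega[\bSi,\bSi^*]$ into $\Omega[\bSi]$. So the strategy is simply to feed this substitution through the proof of Proposition~\ref{prop:lpD-t/s} and check that everything closes consistently, i.e.\ that no new obstruction appears from imposing symmetry on the data.

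First I would verify that equations \eqref{eq:O-S} are indeed compatible, i.e.\ that $\Omega[\bSi]_{(ij)}$ computed in the two orders agrees; this uses \eqref{eq:Sij} to rewrite $\bSi_{i(j)}$ and $\bSi_{j(i)}$ and produces precisely equation \eqref{eq:stt} (the symmetric version of \eqref{eq:s-ij}), exactly as in Proposition~\ref{prop:lpD-t/s} but now with a single vector family $\bSi_i$ and its transpose. Concretely, the cross terms give $\tau\tau_{(i)}\tau_{(j)}\bSi_i\otimes\bSi_i^t$ shifted and its symmetric partner, and matching forces $\sigma_{ij}\sigma_{ji}=\sigma_{ij}^2=\tau_{(i)}\tau_{(j)}-\tau\tau_{(ij)}$. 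Second I would take the consistency of the linear problem \eqref{eq:Sij} itself; as in \eqref{eq:s-ijk-t} this yields
\begin{equation*}
(\tau_{(j)}\tau_{(k)} - \sigma_{jk}\sigma_{kj})\,\sigma_{ij(k)} = \tau_{(jk)}\tau_{(k)}\sigma_{ij} + \tau_{(jk)}\sigma_{ik}\sigma_{kj},
\end{equation*}
and substituting \eqref{eq:stt} (now symmetric) reduces this to \eqref{eq:s-ijk}. Third, and this is the point requiring care, I would check that the symmetric constraint $\sigma_{ij}=\sigma_{ji}$ is itself propagated by the system — equivalently that the symmetry of $\Omega[\bSi]$ is consistent — and that the adjoint equation \eqref{eq:Sij*} obtained from $\bSi_i^*=\tau\tau_{(i)}\bSi_i^t$ holds identically given \eqref{eq:Sij} and \eqref{eq:stt}; this is the computation sketched in the proof of Proposition~\ref{prop:C-char-t/s}, read in reverse.

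I expect the main obstacle to be bookkeeping rather than conceptual: one must be careful that \eqref{eq:O-S} determines $\Omega[\bSi]$ only up to an initial symmetric matrix, so the symmetry \eqref{eq:O-Ot-t/s} is an admissible choice of initial data rather than forced, and correspondingly the constraint \eqref{eq:C-t/s} must be shown to be compatible with — not implied by — the linear evolution. The cleanest route is to argue that Propositions~\ref{prop:lpD-t/s}, \ref{prop:fT-t/s} and \ref{prop:C-char-t/s} already give everything: the unreduced system \eqref{eq:Sij}-\eqref{eq:O-SS*} has \eqref{eq:s-ijk}-\eqref{eq:s-ij} as compatibility condition, Proposition~\ref{prop:C-char-t/s} shows the reduction \eqref{eq:S-S*}-\eqref{eq:O-Ot-t/s} is equivalent to \eqref{eq:C-t/s} and is preserved, and restricting to $\Omega[\bSi]=\Omega[\bSi,\bSi^*]$ with $\bSi^*_i=\tau\tau_{(i)}\bSi_i^t$ turns \eqref{eq:O-SS*} into \eqref{eq:O-S} and \eqref{eq:s-ij} into \eqref{eq:stt}, while \eqref{eq:s-ijk} is unchanged. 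Hence the reduced compatibility condition is precisely \eqref{eq:s-ijk}, \eqref{eq:stt}. I would present the direct computation of Steps one and two explicitly (they are short) and invoke Proposition~\ref{prop:C-char-t/s} for Step three, remarking as the paper does that the full direct verification is subsumed by the proof of the forthcoming Theorem~\ref{th:sDs-t/s}.
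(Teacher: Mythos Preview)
Your proposal is correct and follows essentially the same approach as the paper: the paper states that the result is an immediate consequence of Propositions~\ref{prop:lpD-t/s} and~\ref{prop:C-char-t/s}, and that the direct verification is carried out in part~(1) of the proof of Theorem~\ref{th:sDs-t/s}, which is exactly your Steps one and two (compatibility of~\eqref{eq:O-S} giving~\eqref{eq:stt}, then compatibility of~\eqref{eq:Sij} giving~\eqref{eq:s-ijk} via~\eqref{eq:stt}). Your Step three and the ``cleanest route'' remark also mirror the paper's reasoning.
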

The corresponding form of the fundamental transformation reads as follows.
\begin{Prop} \label{prop:fTsD-t/s}
Given solution of the $\tau/\sigma$ discrete symmetric Darboux equations~\eqref{eq:s-ijk}, \eqref{eq:stt} and the corresponding solution of the linear system~\eqref{eq:Sij}, \eqref{eq:O-S}, and given additional solution $(\bT_i)$ of equations \eqref{eq:Sij}. Then the transformed solution of the $\tau/\sigma$ discrete symmetric Darboux equations and the corresponding transformed solution of the linear system read
\begin{align} \label{eq:T-tau-s}
\tilde{\tau} & = \tau \; \det \Omega[\bT],\\ \label{eq:T-sigma-s}
\tilde{\sigma}_{ij}& =\det \Omega[\bT]\left( \sigma_{ij} - \tau \tau_{(i)}\tau_{(j)}\bT_j^t \, \Omega[\bT]^{-1} \, \bT_i  \right) ,\\ \label{eq:T-Si-s}
\tilde{\bSi}_i & =\left( \bSi_i - \Omega[\bSi,\bT]  \; \Omega[\bT]^{-1} \bT_i \right) / \det \Omega[\bT]_{(i)},\\ 
\Omega[\tilde{\bSi}] & = \Omega[\bSi] - \Omega[\bSi,\bT]\, \Omega[\bT]^{-1} \Omega[\bSi,\bT]^t \label{eq:T-O-s}
\end{align}	
where the matrix valued potential $\Omega[\bSi,\bT]$ is defined by
\begin{equation} \label{eq:O-ST-s}
\Omega [\bSi,\bT]_{(i)} = \Omega[\bSi,\bT] + \tau \tau_{(i)} \bSi_i \otimes \bT_i^t.
\end{equation}
\end{Prop}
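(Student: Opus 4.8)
The plan is to deduce Proposition~\ref{prop:fTsD-t/s} from the unconstrained $\tau/\sigma$ fundamental transformation of Proposition~\ref{prop:fT-t/s} by specializing its data and its auxiliary potentials to the symmetrically reduced situation, in the same spirit in which the reduced fundamental transformation behind Theorem~\ref{th:C-char} was obtained. First I would fix the transformation data: starting from the given solution $(\bT_i)$ of the linear problem~\eqref{eq:Sij}, put $\bT^*_i := \tau\tau_{(i)}\bT_i^t$. By the characterization~\eqref{eq:S-S*} in Proposition~\ref{prop:C-char-t/s}, applied with $(\bT_i)$ playing the role of $(\bSi_i)$, this $(\bT^*_i)$ solves the adjoint problem~\eqref{eq:Sij*}, so the pair $(\bT_i)$, $(\bT^*_i)$ is admissible data for Proposition~\ref{prop:fT-t/s} (and we assume, as usual, that $\Omega[\bT]$ is invertible, so that $\tilde\tau$ is a genuine $\tau$-function).

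Next I would identify the matrix potentials entering~\eqref{eq:T-tau}--\eqref{eq:T-O} with those appearing in the statement. Since $\bT_i\otimes\bT^*_i = \tau\tau_{(i)}\,\bT_i\otimes\bT_i^t$, the defining recursion~\eqref{eq:O-SS*} for $\Omega[\bT,\bT^*]$ is literally the recursion~\eqref{eq:O-S} defining $\Omega[\bT]$; the increments being symmetric rank-one matrices, a symmetric choice of initial datum gives $\Omega[\bT,\bT^*]=\Omega[\bT]=\Omega[\bT]^t$, so in particular $\Omega[\bT]^{-1}$ is symmetric. In the same way, since the background is symmetrically reduced and hence $\bSi_i^*=\tau\tau_{(i)}\bSi_i^t$ by~\eqref{eq:S-S*}, one gets $\Omega[\bSi,\bSi^*]=\Omega[\bSi]$, while comparison of the defining recursions yields $\Omega[\bSi,\bT^*]=\Omega[\bSi,\bT]$ with $\Omega[\bSi,\bT]$ as in~\eqref{eq:O-ST-s}, and $\Omega[\bT,\bSi^*]=\Omega[\bSi,\bT]^t$. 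The one place needing care is precisely this bookkeeping with the initial-condition freedom of the $\Omega$-potentials, so that the transpose relations hold as genuine equalities and not merely along the recursions.

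Then I would substitute. Formula~\eqref{eq:T-tau} becomes~\eqref{eq:T-tau-s} immediately; in~\eqref{eq:T-sigma} the term $\bT^*_j\,\Omega[\bT,\bT^*]^{-1}\bT_i\,\tau_{(i)}$ turns into $\tau\tau_{(i)}\tau_{(j)}\,\bT_j^t\,\Omega[\bT]^{-1}\bT_i$, producing~\eqref{eq:T-sigma-s}; formula~\eqref{eq:T-Si} becomes~\eqref{eq:T-Si-s}; and~\eqref{eq:T-O}, using $\Omega[\bSi,\bSi^*]=\Omega[\bSi]$ and $\Omega[\bT,\bSi^*]=\Omega[\bSi,\bT]^t$, becomes~\eqref{eq:T-O-s}. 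It remains to check that the transform lands again in the symmetric class: $\Omega[\tilde{\bSi}]$ in~\eqref{eq:T-O-s} is visibly symmetric, and from the transpose of~\eqref{eq:T-Si-s} together with~\eqref{eq:T-tau-s} one verifies that $\tilde{\tau}\,\tilde{\tau}_{(i)}\,\tilde{\bSi}_i^t$ coincides with the transform~\eqref{eq:T-Si*} of $\bSi^*_i$. This reproduces~\eqref{eq:S-S*} for the transformed fields, which by Proposition~\ref{prop:C-char-t/s} is exactly the symmetric reduction constraint, so $\tilde\sigma_{ij}=\tilde\sigma_{ji}$ and the transformed $(\tilde\tau,\tilde\sigma_{ij})$ solve~\eqref{eq:s-ijk},~\eqref{eq:stt}. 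The main obstacle is purely the initial-data bookkeeping just mentioned; an equivalent and perhaps more transparent route — the one implicit in the direct proof of Theorem~\ref{th:sDs-t/s} below — is to insert~\eqref{eq:T-tau-s}--\eqref{eq:T-O-s} directly into~\eqref{eq:Sij},~\eqref{eq:O-S},~\eqref{eq:s-ijk},~\eqref{eq:stt} and verify them by the manipulations already used in Propositions~\ref{prop:lpD-t/s} and~\ref{prop:fT-t/s}.
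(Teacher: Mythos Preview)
Your proposal is correct and follows essentially the same route the paper indicates: the paper states that Proposition~\ref{prop:fTsD-t/s} is an immediate consequence of Propositions~\ref{prop:lpD-t/s}, \ref{prop:fT-t/s} and \ref{prop:C-char-t/s} (with the direct verification deferred to the proof of Theorem~\ref{th:sDs-t/s}), and you carry out exactly this specialization --- setting $\bT^*_i=\tau\tau_{(i)}\bT_i^t$ via~\eqref{eq:S-S*}, identifying the reduced potentials, and substituting into~\eqref{eq:T-tau}--\eqref{eq:T-O}. Your explicit bookkeeping of the $\Omega$-potential identifications and the closing check that the transform stays in the symmetric class are more detailed than what the paper spells out, but the argument is the same.
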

\begin{Rem}
	Again we overuse the symbol of a matrix potential $\Omega$ hoping that the proper meaning will be clear to the Reader from the context (and its arguments).
\end{Rem}
\begin{Rem}
	In the additional variable interpretation the constraints on the data $(\bT_i)$ and $\Omega[\bT]$ of the symmetric reduction of fundamental transformation simply read
	\begin{equation}
\sigma_{ia} = \sigma_{ai}, \qquad \sigma_{ab} = \sigma_{ba} .
	\end{equation}
\end{Rem}

To close the Section let us discuss transition to the discrete CKP system~\ref{taud}. By squaring equation \eqref{eq:s-ijk} we get 
\begin{equation} \label{eq:tau-s}
\tau(\tau_{(i)} \tau_{(jk)} + \tau_{(j)} \tau_{(ik)} + \tau_{(k)} \tau_{(ij)} - \tau \tau_{(ijk)}) - 2 \tau_{(i)} \tau_{(j)} \tau_{(k)} = 2 \sigma_{ij} \sigma_{jk} \sigma_{ik},
\end{equation}
which squared once again gives the discrete CKP equation~\eqref{taud}.
	Equations~\eqref{eq:tau-s} together with \eqref{eq:s-ijk} form a legitimate recurrence system in the case of symmetric reduction. In~\cite{Atkinson}, instead of equations \eqref{eq:s-ijk}, the following system
	\begin{equation} \label{eq:tau-s2}
	\tau_{(k)}(\tau_{(k)} \tau_{(ij)} -\tau_{(i)} \tau_{(jk)} - \tau_{(j)} \tau_{(ik)} - \tau \tau_{(ijk)}) + 2 \tau \tau_{(ik)} \tau_{(jk)} = 2 \sigma_{ij(k)} \sigma_{jk} \sigma_{ik},
	\end{equation}
	was considered, which can be obtained by multiplying \eqref{eq:s-ijk}  by $\sigma_{ik}\sigma_{kj}$.
	
	Following~\cite{Atkinson} let us recall the way back. The discriminant, with respect to $\tau_{(ijk)}$, of the (multi)quadratic equation~\eqref{taud} is
	\begin{equation}
	16(\tau_{(i)} \tau_{(j)} - \tau \tau_{(ij)}) (\tau_{(i)} \tau_{(k)} - \tau \tau_{(ik)})(\tau_{(j)} \tau_{(k)} - \tau \tau_{(jk)}),
	\end{equation}
	while the discriminant with respect to $\tau_{(ij)}$
	reads
\begin{equation}
16(\tau_{(i)} \tau_{(j)} - \tau \tau_{(ij)})_{(k)} (\tau_{(i)} \tau_{(k)} - \tau \tau_{(ik)})(\tau_{(j)} \tau_{(k)} - \tau \tau_{(jk)}).
\end{equation}
The discriminants motivate introduction of the functions~$\sigma_{ij}$ by~\eqref{eq:stt}. With such definitions the corresponding formulas expressing $\tau_{(ijk)}$ and $\tau_{(ij)}$ give rise to equations~\eqref{eq:tau-s} and \eqref{eq:tau-s2} respectively, where the eventual sign ambiguities are fixed by demanding multidimensional consistency of the system on four-dimensional hypercube.

\subsection{The $\tau/\sigma$ form of the discrete symmetric Darboux equations with self-consistent sources}
In the present notation Definition~\ref{def:sDs} of discrete symmetric Darboux equations with sources reads as follows (for completeness we repeat some of the equations previously written). 
\begin{Def} \label{def:sDs-t/s}
	The $\tau/\sigma$ form of the symmetric discrete Darboux equations with self-consistent sources is the system consisting of:
	\begin{enumerate}
		\item the corresponding form of the symmetric discrete Darboux equations \eqref{eq:stt} and \eqref{eq:s-ijk},
		\begin{align*} \tag{\ref{eq:stt}}
		\sigma_{ij}^2 & = \tau_{(i)} \tau_{(j)} - \tau\tau_{(ij)},\\	
		\tag{\ref{eq:s-ijk}}	
		\tau \sigma_{ij(k)} & = \tau_{(k)} \sigma_{ij} + \sigma_{ik}\sigma_{kj},
		\end{align*}
		\item definition of the source terms $(\bT_i)$ via the linear problem
		\begin{equation} \label{eq:lp-T-t/s}
		\tau_{(ij)}\bT_{i(j)}  = \tau_{(i)} \bT_i + \sigma_{ij} \bT_j,
		\end{equation}
		\item the additional part of the evolution governed by self-consistent sources
		\begin{align} \label{eq:t-K}
		\tau_{(K)} & = \tau \det \Omega[\bT],\\ \label{eq:s-K}
		\sigma_{ij(K)}& = \left( \sigma_{ij} - \tau \tau_{(i)}\tau_{(j)}\bT_j^t \, \Omega[\bT]^{-1} \, \bT_i  \right) \det \Omega[\bT] ,
		\end{align}
		with the symmetric matrix potential $\Omega[\bT] $ given by 
		\begin{equation}
		\Omega[\bT]_{(i)} = \Omega[\bT]  + \tau \tau_{(i)} \bT_i \otimes \bT^t_i \,.
		\end{equation}.
	\end{enumerate}
\end{Def}
\begin{Th} \label{th:sDs-t/s}
	The $\tau/\sigma$ symmetric discrete Darboux system with sources is the compatibility condition of the linear system consisting of:
	\begin{enumerate}
		\item the linear problem \eqref{eq:Sij} together with existence of the symmetric matrix potential $\Omega[\bSi]$ given by equations~\eqref{eq:O-S}
		\begin{align*}	\tag{\ref{eq:Sij}}
		\tau_{(ij)}\bSi_{i(j)} & = \tau_{(i)} \bSi_i + \sigma_{ij} \bSi_j ,\\
		\tag{\ref{eq:O-S}}
		\Omega [\bSi]_{(i)} & = \Omega[\bSi] + \tau \tau_{(i)} \bSi_i \otimes \bSi_i^t,
		\end{align*}
\item existence of the matrix potential $\Omega[\bSi,\bT]$ given by equations \eqref{eq:O-ST-s}
\begin{equation*} \tag{\ref{eq:O-ST-s}}
\Omega [\bSi,\bT]_{(i)} = \Omega[\bSi,\bT] + \tau \tau_{(i)} \bSi_i \otimes \bT_i^t,
\end{equation*}
\item supplementary evolution of the wave functions $(\bSi_i )$ and of the potential $\Omega[\bSi]$ governed by source terms
\begin{align} \label{eq:Si-K}
\bSi_{i(K)} & =\left( \bSi_i - \Omega[\bSi,\bT]  \; \Omega[\bT]^{-1} \bT_i \right) / \det \Omega[\bT]_{(i)},\\ \label{eq:O-K}
\Omega[\bSi]_{(K)} & = \Omega[\bSi] - \Omega[\bSi,\bT]\, \Omega[\bT]^{-1} \Omega[\bSi,\bT]^t .
\end{align}
\end{enumerate}
\end{Th}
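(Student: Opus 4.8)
The plan is to verify the compatibility of the extended linear system by treating the ordinary directions and the source direction separately, exactly as in the unsourced case, and then to check the cross-compatibility between the two. First I would note that the ordinary part — the linear problem \eqref{eq:Sij} together with the defining equations \eqref{eq:O-S} and \eqref{eq:O-ST-s} for the potentials — already yields the ordinary part of the $\tau/\sigma$ symmetric Darboux system by Propositions~\ref{prop:lpD-t/s} and~\ref{prop:lpsD-s/t}; in particular consistency of \eqref{eq:O-ST-s} forces $(\bT_i)$ to satisfy the same linear problem \eqref{eq:Sij}, and consistency of \eqref{eq:O-S} reproduces \eqref{eq:stt}. This reduces the real work to the cross-compatibility of an ordinary shift $(j)$ with the source shift $(K)$ acting on $\bSi_i$, on $\Omega[\bSi]$, on $\Omega[\bSi,\bT]$, and — through those — on $\tau$ and $\sigma_{ij}$.

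The main computation is to impose $(\bSi_{i(K)})_{(j)} = (\bSi_{i(j)})_{(K)}$ using \eqref{eq:Si-K} and \eqref{eq:Sij}. I would substitute \eqref{eq:Si-K} on both sides, shift $\Omega[\bT]$, $\Omega[\bSi,\bT]$, $\tau$, $\sigma_{ij}$ in the $(j)$ direction using \eqref{eq:O-ST-s}, \eqref{eq:O-S} (with $\bT$ in place of $\bSi$), \eqref{eq:t-K}, \eqref{eq:s-K}, and expand. Collecting the terms that multiply $\Omega[\bSi,\bT]$ (in its $(j)$-shifted and unshifted forms) should produce precisely the linear problem \eqref{eq:lp-T-t/s} for $(\bT_i)$ together with the relation $\tilde{\bP}_i = \Omega[\bT]^{-1}\bT_i / \det\Omega[\bT]_{(i)}$, i.e. the $\tau/\sigma$ avatar of equation~\eqref{eq:Om-pi-Y}; the terms that multiply $\bT_j$ (through the rank-one update of $\Omega[\bSi,\bT]$) should reproduce the source evolution \eqref{eq:s-K} of $\sigma_{ij}$; and the scalar factors $\det\Omega[\bT]$ encountered along the way must match \eqref{eq:t-K}. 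The analogous but shorter check for $\Omega[\bSi]$, comparing $(\Omega[\bSi]_{(K)})_{(i)}$ via \eqref{eq:O-K} and \eqref{eq:O-S} with $(\Omega[\bSi]_{(i)})_{(K)}$, closes the system; it is essentially the matrix-potential transformation rule \eqref{eq:T-O-s} read as an evolution.

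A cleaner route, which I would actually follow to keep the algebra under control, is to invoke Proposition~\ref{prop:fTsD-t/s}: the right-hand sides of \eqref{eq:t-K}, \eqref{eq:s-K}, \eqref{eq:Si-K}, \eqref{eq:O-K} are \emph{defined} to be the fundamental transforms $\tilde\tau$, $\tilde\sigma_{ij}$, $\tilde{\bSi}_i$, $\Omega[\tilde{\bSi}]$ of Proposition~\ref{prop:fTsD-t/s}, with $(\bT_i)$ playing the role of the transformation data. Hence the pair $(\tilde\tau,\tilde\sigma_{ij})$ automatically solves \eqref{eq:s-ijk}--\eqref{eq:stt}, and $(\tilde{\bSi}_i,\Omega[\tilde{\bSi}])$ automatically solves \eqref{eq:Sij},\eqref{eq:O-S} with the transformed coefficients. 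It then remains only to identify the ``$\ \widetilde{}\ $'' operation with the $(K)$-shift, i.e. to check that \eqref{eq:Sij} holds in mixed form relating the unshifted and $(K)$-shifted quantities — equivalently that $\bT_i$ and $\bP_i$ are matched by $\Omega[\bT]\bP_{i(K)} = \bT_i$ — which is exactly the content of the remark following the definition of $(\bP_i)$ and of Proposition~\ref{prop:Y-O} transported to the $\tau/\sigma$ setting.

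The hard part is bookkeeping rather than conceptual: keeping straight which potentials are shifted in $(i)$ versus $(j)$ versus $(K)$, and correctly handling the determinant prefactors $\det\Omega[\bT]$, $\det\Omega[\bT]_{(i)}$, $\det\Omega[\bT]_{(j)}$, $\det\Omega[\bT]_{(ij)}$ that appear because of \eqref{eq:t-K} — these must telescope exactly so that the normalized wave function $\bSi_i = \bX_i/\tau_{(i)}$ and the source $\bT_i$ retain their defining normalizations under the $(K)$-shift. Once the matching identity $\Omega[\bT]\bP_{i(K)} = \bT_i$ is in hand, everything else reduces to the already-established Propositions~\ref{prop:fTsD-t/s} and~\ref{prop:lpsD-s/t}, and the non-reduced expressions in the Remark after Proposition~\ref{prop:fT-t/s} are the convenient form in which to run the verification.
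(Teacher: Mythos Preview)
Your plan is essentially the paper's own, with one organizational inversion worth flagging. The paper also splits the work into (i) ordinary-direction compatibility of \eqref{eq:Sij}+\eqref{eq:O-S} yielding \eqref{eq:stt} and then \eqref{eq:s-ijk}, (ii) compatibility of \eqref{eq:O-ST-s} yielding the linear problem \eqref{eq:lp-T-t/s} for $(\bT_i)$, and (iii) the cross-compatibility with the $(K)$-shift. The difference is in step (iii): the paper extracts $\tau_{(K)}=\tau\det\Omega[\bT]$ \emph{first}, from the $\Omega[\bSi]$ compatibility (comparing \eqref{eq:O-K} shifted in $(i)$ with \eqref{eq:O-S} shifted in $(K)$, using the bordered-determinant identity $\det\Omega[\bT]_{(i)}=\det\Omega[\bT]\,(1+\tau\tau_{(i)}\bT_i^t\Omega[\bT]^{-1}\bT_i)$), and only \emph{then} reads off $\sigma_{ij(K)}$ from the $\bSi$ cross-compatibility. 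You propose the reverse order, calling the $\Omega[\bSi]$ check ``analogous but shorter'' and expecting $\tau_{(K)}$ to fall out of the determinant prefactors in the $\bSi$ computation. That does work, but the $\bSi$ equation~\eqref{eq:Sij} only sees $\tau_{(i)},\tau_{(ij)}$, not $\tau$ itself, so the $\Omega[\bSi]$ step is where the identification \eqref{eq:t-K} is cleanest; doing it your way you will find yourself needing exactly that bordered-determinant identity anyway to make the prefactors telescope. Your ``cleaner route'' via Proposition~\ref{prop:fTsD-t/s} is also the paper's observation: it remarks explicitly that part (iii) of the proof amounts to a direct verification of that Proposition.
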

\begin{proof}
	Each of the parts of the linear problem will give the corresponding part of the $\tau/\sigma$ symmetric discrete Darboux equations with sources. Notice moreover, that the first part of the proof gives direct verification of Proposition~\ref{prop:lpsD-s/t}, while the third part gives direct verification of Proposition~\ref{prop:fTsD-t/s}; compare equations~\eqref{eq:t-K}-\eqref{eq:s-K} and~\eqref{eq:Si-K}-\eqref{eq:O-K} with transformations formulas~\eqref{eq:T-tau-s}-\eqref{eq:T-O-s}.
	
	(1) Compatibility of the system~\eqref{eq:O-S} in view of equations~\eqref{eq:Sij} gives equations~\eqref{eq:stt}. Then compatibility of the system~\eqref{eq:Sij} leads to
	\begin{equation*}
	(\tau_{(j)} \tau_{(k)} - \sigma_{jk}^2) \sigma_{ij(k)} = \tau_{(jk)} \tau_{(k)} \sigma_{ij} + \tau_{(jk)} \sigma_{ik} \sigma_{kj},
	\end{equation*} 
	which, by just shown equations~\eqref{eq:stt}, gives~\eqref{eq:s-ijk}.  
	
	(2) Compatibility of the system~\eqref{eq:O-ST-s} in view of equations~\eqref{eq:Sij} leads to
	\begin{equation*}
	(\tau_{(i)} \tau_{(j)} - \sigma_{ij}^2) \bT_{i(j)} = \tau \left( \tau_{(i)} \bT_{i} + \sigma_{ij}\bT_j \right),
	\end{equation*} 
	which, by equations~\eqref{eq:stt}, gives the linear problem~\eqref{eq:lp-T-t/s}. 
	
	(3) This implies existence of the symmetric matrix potential $\Omega[\bT] $. In consequence, by the standard technique of bordered matrices~\cite{Hirota-book}, we have the following evolution of its determinant
	\begin{equation}
	\det \Omega[\bT]_{(i)} = \det \Omega[\bT]  \left( 1 + \tau \tau_{(i)} \bT^t_i \,  \Omega[\bT]^{-1} \bT_i \right).
	\end{equation}
	Then, equation~\eqref{eq:Si-K} and its equivalent version
	\begin{equation}
	\bSi_{i(K)}  =\left( \bSi_i - \Omega[\bSi,\bT]_{(i)}  \; \Omega[\bT]^{-1}_{(i)} \bT_i \right) / \det \Omega[\bT],
	\end{equation}
	imply, together with definitions of the potentials $\Omega[\bSi,\bT]$ and $\Omega[\bT]$, that
	\begin{gather*}
(\tau \det \Omega[\bT]) \bSi_{i(K)} \otimes (\tau \det \Omega[\bT])_{(i)} \bSi_{i(K)}^t =\\=
\left(\Omega[\bSi] - \Omega[\bSi,\bT]\, \Omega[\bT]^{-1} \Omega[\bSi,\bT]^t \right)_{(i)} - \left(\Omega[\bSi] - \Omega[\bSi,\bT]\, \Omega[\bT]^{-1} \Omega[\bSi,\bT]^t \right) = 
\Omega[\bSi]_{(Ki)} - \Omega[\bSi]_{(K)}.
	\end{gather*}
Its compatibility with equation~\eqref{eq:O-S} shifted in the source direction
\begin{equation*}
\Omega [\bSi]_{(iK)} - \Omega[\bSi]_{(K)} = \tau_{(K)} \tau_{(iK)} \bSi_{i(K)} \otimes \bSi_{i(K)}^t,
\end{equation*}
gives the corresponding evolution~\eqref{eq:t-K} of the $\tau$-function. Finally, compatibility of linear equations~\eqref{eq:Sij} and~\eqref{eq:Si-K} gives, after a few line calculation, the source direction evolution equation~\eqref{eq:s-K}.
\end{proof}
\begin{Rem}
	It would be instructive to reformulate the $\tau/\sigma$ form of discrete symmetric Darboux equations with sources to the $\tau$ form, analogous to the discrete CKP equation~\eqref{taud} version of the symmetric discrete Darboux equations. However, one should expect that it cannot be written as a legitimate recurrence, and the transition back to the  $\tau/\sigma$ form will be necessary, like in the case of equations without sources~\cite{Atkinson}. 
\end{Rem}

 \section{Darboux equations with self-consistent sources and the symmetric reduction}
 \label{sec:cont}
In analogy to the discrete case, we can also propose the version with self-consistent sources of the standard Darboux equations and of their symmetric reduction (first considered by Darboux himself~\cite{Darboux-OS}). We will keep the notation of the previous Sections for solutions of the linear problems and matrix-valued potentials. We denote rotation coefficients using standard symbols $\beta_{ij}$ as in classical books od Darboux or Bianchi~\cite{Bianchi}. By $\partial_i$ we denote partial derivative with respect to variable $x_i$, $i=1,\dots , N$, and $\partial_K$ is the derivative with respect to the source variable. As a rule in the "natural continuous limit" procedure we change differences into derivatives neglecting other shifts. Instead of taking the limit we decided to write down the equations and study them directly.

\begin{Th} \label{th:cDs-s}
The Darboux equations with sources
\begin{align} \label{eq:cD}
\partial_k \beta_{ij} & = \beta_{ik} \beta_{kj}, \qquad k\neq i,j,\\
\label{eq:lc-Y*}
\partial_i \bY_j^* & = \beta_{ij} \bY^*_i, \\
\label{eq:lc-pi}
\partial_j \bY_i & = \beta_{ij} \bY_j , \qquad j\neq i, \\
\label{eq:cDs}
\partial_K \beta_{ij} & = -\bY^*_j \bY_i, 
\end{align}
are compatibility condition of the linear problem
\begin{align} 
\label{eq:lcD}
\partial_j \bX_i & = \beta_{ij} \bX_j,\\
\label{eq:cO-XY*}
\partial_i \Omega[\bX,\bY^*] & = \bX_i \otimes \bY_i^*, \\
\label{eq:lcs-X}
\partial_K  \bX_i & = -\Omega[\bX,\bY^*] \bY_i .
\end{align}
\end{Th}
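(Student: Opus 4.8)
The plan is the standard compatibility computation. For each pair of equations in the linear system \eqref{eq:lcD}--\eqref{eq:lcs-X} I would equate the two orders of the relevant mixed partial derivative, use the linear equations themselves to rewrite the inner first derivatives, and then compare the coefficients of the (generically independent) quantities $\bX_i$ and $\Omega[\bX,\bY^*]$; each such identity reproduces exactly one of the nonlinear equations \eqref{eq:cD}--\eqref{eq:cDs}. This is the continuous, infinitesimal counterpart of the argument already used in the discrete case around \eqref{eq:Q-K}--\eqref{eq:lin-X-K}, so only the order of the steps needs to be spelled out, with the understanding that one of the cross-conditions will produce two of the nonlinear equations at once, which accounts for the count.

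First I would treat the purely spatial sector. Differentiating \eqref{eq:cO-XY*} and inserting \eqref{eq:lcD} gives $\partial_j\partial_i\Omega[\bX,\bY^*] = \beta_{ij}\,\bX_j\otimes\bY^*_i + \bX_i\otimes\partial_j\bY^*_i$; imposing symmetry under $i\leftrightarrow j$ and reading off the coefficients of $\bX_j$ and $\bX_i$ yields $\partial_i\bY^*_j = \beta_{ij}\bY^*_i$, i.e.\ \eqref{eq:lc-Y*}. Then, for distinct $i,j,k$, differentiating \eqref{eq:lcD} and using it again gives $\partial_k\partial_j\bX_i = (\partial_k\beta_{ij})\,\bX_j + \beta_{ij}\beta_{jk}\,\bX_k$, whose symmetrization in $j\leftrightarrow k$ produces $\partial_k\beta_{ij} = \beta_{ik}\beta_{kj}$, i.e.\ \eqref{eq:cD}.

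Next I would handle the source direction by cross-differentiating \eqref{eq:lcD} with \eqref{eq:lcs-X}. Using \eqref{eq:lcs-X} on one side gives $\partial_K\partial_j\bX_i = (\partial_K\beta_{ij})\,\bX_j - \beta_{ij}\,\Omega[\bX,\bY^*]\bY_j$, while using \eqref{eq:cO-XY*} on the other gives $\partial_j\partial_K\bX_i = -\bX_j(\bY^*_j\bY_i) - \Omega[\bX,\bY^*]\,\partial_j\bY_i$. Equating the two and separating the terms proportional to $\bX_j$ from those of the form $\Omega[\bX,\bY^*](\,\cdot\,)$ gives, respectively, $\partial_K\beta_{ij} = -\bY^*_j\bY_i$, i.e.\ \eqref{eq:cDs}, and $\partial_j\bY_i = \beta_{ij}\bY_j$, i.e.\ \eqref{eq:lc-pi}. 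Since every manipulation above is an identity, the computation also runs backwards: assuming \eqref{eq:cD}--\eqref{eq:cDs}, all the mixed partials of $\bX_i$ and of $\Omega[\bX,\bY^*]$ coincide, which is exactly the compatibility asserted.

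What needs genuine care is less the computation than two side points. Equation \eqref{eq:cO-XY*} fixes only the spatial derivatives of $\Omega[\bX,\bY^*]$, so the $x_K$-evolution of $\bY^*_i$ — and with it that of $\Omega[\bX,\bY^*]$ — is free; it may be prescribed consistently by integrating the derived relation $\partial_K\partial_i\Omega[\bX,\bY^*] = -\Omega[\bX,\bY^*]\,\bY_i\otimes\bY^*_i + \bX_i\otimes\partial_K\bY^*_i$, which is precisely the continuous echo of the freedom in the source direction observed in the discrete case. Similarly the diagonal derivatives $\partial_i\bX_i$ and $\partial_i\bY_i$ remain free, as in the classical Darboux system, so "compatibility" is understood for the overdetermined off-diagonal part only. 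The one genuinely delicate point is the legitimacy of comparing coefficients, i.e.\ the implicit genericity assumption on the data; this can be avoided altogether by reading the construction in the opposite direction, taking \eqref{eq:cD}--\eqref{eq:cDs} as given and verifying directly that the three flows \eqref{eq:lcD}, \eqref{eq:cO-XY*}, \eqref{eq:lcs-X} commute, at the cost of repeating the same routine calculations.
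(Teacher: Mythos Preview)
Your proof is correct and follows essentially the same approach as the paper: the paper's proof also derives \eqref{eq:cD} from the compatibility of \eqref{eq:lcD}, derives \eqref{eq:lc-Y*} from the cross-differentiation of \eqref{eq:cO-XY*}, and obtains both \eqref{eq:lc-pi} and \eqref{eq:cDs} simultaneously from the compatibility of \eqref{eq:lcD} with \eqref{eq:lcs-X} by separating the coefficients of $\Omega[\bX,\bY^*]$ and $\bX_j$. Your version is simply more explicit and adds some useful remarks about the residual freedom in the source direction and the genericity assumption behind comparing coefficients, points that the paper leaves implicit.
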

\begin{proof}
	The first part~\eqref{eq:lcD} of the linear problem gives the standard Darboux equations~\eqref{eq:cD}. Cross-differentiation of equation~\eqref{eq:cO-XY*} implies the linear system~\eqref{eq:lc-Y*} for the sources terms $\bY_i^*$. Compatibility of equations~\eqref{eq:lcs-X} with~\eqref{eq:lcD} gives the linear system~\eqref{eq:lc-pi} for the sources $\bpi_i$ (terms at $\Omega[\bX,\bY^*]$) and the evolution~\eqref{eq:cDs} of the rotation coefficients in the source direction (terms at $\bX_j$).
\end{proof}
\begin{Cor}
	Darboux equations with sources~\eqref{eq:cD}-\eqref{eq:cDs} can be equivalently obtained as the compatibility condition of the adjoint linear problem
	\begin{align} \label{eq:lcD*}
	\partial_i \bX_j^* & = \beta_{ij}X_i^*,\\
	\partial_i \Omega[\bY,\bX^*] & = \bY_i \otimes \bX_i^*, \\
	\label{eq:lcs-X*}
	\partial_K  \bX_i^* & = -\bY_i \Omega[\bY,\bX^*]  .
	\end{align}
\end{Cor}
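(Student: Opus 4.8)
The statement is the ``adjoint'' counterpart of Theorem~\ref{th:cDs-s}, so the plan is to mirror the three-step argument of that proof, simply interchanging the roles of the linear problem and its adjoint. First I would read off from the adjoint ordinary part~\eqref{eq:lcD*} the standard Darboux equations~\eqref{eq:cD}: cross-differentiating $\partial_i \bX^*_j = \beta_{ij}\bX^*_i$ with respect to a third variable $x_k$ and using the same equation again produces $\partial_k\beta_{ij}=\beta_{ik}\beta_{kj}$ (this is the continuous limit of~\eqref{d-ad} implying~\eqref{d}). Second, cross-differentiating the defining equation $\partial_i\Omega[\bY,\bX^*]=\bY_i\otimes\bX^*_i$ with respect to $x_j$, $j\neq i$, and comparing with the $x_i$-derivative of $\partial_j\Omega[\bY,\bX^*]=\bY_j\otimes\bX^*_j$, one obtains after using~\eqref{eq:lcD*} that the source terms $\bY_i$ satisfy the linear system~\eqref{eq:lc-pi}, $\partial_j\bY_i=\beta_{ij}\bY_j$; this is exactly the continuous analogue of the compatibility computation in the proof of the Proposition following Definition~\ref{def:pi}.

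Third, and this is the step carrying the actual content, I would impose compatibility of the supplementary evolution~\eqref{eq:lcs-X*} with the adjoint ordinary part~\eqref{eq:lcD*}. Computing $\partial_K\partial_i\bX^*_j$ in two ways: on one hand $\partial_i(-\bY_j\Omega[\bY,\bX^*]) = -(\partial_i\bY_j)\Omega[\bY,\bX^*] - \bY_j(\partial_i\Omega[\bY,\bX^*])$, and on the other hand $\partial_K(\beta_{ij}\bX^*_i) = (\partial_K\beta_{ij})\bX^*_i + \beta_{ij}\partial_K\bX^*_i$. Substituting $\partial_i\bY_j = \beta_{ij}\bY_i$ (already derived), $\partial_i\Omega[\bY,\bX^*]=\bY_i\otimes\bX^*_i$, and $\partial_K\bX^*_i = -\bY_i\Omega[\bY,\bX^*]$, and then collecting the coefficients of the independent quantities: the terms proportional to $\bY_i\Omega[\bY,\bX^*]$ cancel automatically, while the terms proportional to $\bX^*_i$ yield precisely $\partial_K\beta_{ij} = -\bY^*_j\bY_i$. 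Here one must also recognize that the consistency of~\eqref{eq:lcD*} with its own cross-derivatives, combined with the defining equation for $\Omega[\bY,\bX^*]$, forces $\bY^*_i$ to satisfy~\eqref{eq:lc-Y*}; this closes the system and shows the source terms $(\bY_i,\bY^*_i)$ entering~\eqref{eq:cDs} are the same objects as in Theorem~\ref{th:cDs-s}.

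The main obstacle is essentially bookkeeping: one has to check that the $\bX^*$-dependent terms and the $\Omega[\bY,\bX^*]$-dependent terms separate cleanly so that the two equations~\eqref{eq:lc-pi} and~\eqref{eq:cDs} are extracted without residual constraints, and that the orientation of tensor products (row versus column vectors, $\bY_i\Omega$ versus $\Omega\bY_i$) is handled consistently with the convention $\Omega[\bY,\bX^*]\in\mathrm{L}(\VV,\WW)$ already fixed in~\eqref{eq:O-YX*}. Since this is exactly the transpose/adjoint of the computation already carried out in the proof of Theorem~\ref{th:cDs-s}, no new idea is needed, and I would simply indicate that the verification is parallel, leaving the routine cross-differentiation to the reader as the paper does for the analogous discrete Corollaries.
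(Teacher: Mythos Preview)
Your overall strategy---mirror the proof of Theorem~\ref{th:cDs-s} by swapping the roles of the linear problem and its adjoint---is exactly what the paper intends; the Corollary is stated without proof precisely because the argument is parallel. However, your execution of the third step contains a genuine error that happens to compensate for a typographical slip in the paper's statement of~\eqref{eq:lcs-X*}.

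A type check shows that $\bY_i\,\Omega[\bY,\bX^*]$ is ill-formed: $\bY_i\in\WW$ is a column vector while $\Omega[\bY,\bX^*]\in\mathrm{L}(\VV,\WW)$, so the product does not make sense. Comparison with the discrete analogue~\eqref{eq:lin-X*-K} confirms that~\eqref{eq:lcs-X*} should read $\partial_K\bX_i^*=-\bY_i^*\,\Omega[\bY,\bX^*]$ with $\bY_i^*\in\WW^*$. Separately, your substitution ``$\partial_i\bY_j=\beta_{ij}\bY_i$ (already derived)'' is wrong: from~\eqref{eq:lc-pi} one has $\partial_i\bY_j=\beta_{ji}\bY_i$. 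With the correct index and the uncorrected~\eqref{eq:lcs-X*}, the $\Omega$-terms do \emph{not} cancel automatically; they would force $\beta_{ij}=\beta_{ji}$, i.e.\ the symmetric constraint, not the generic case. Your two errors cancelled each other, which is why the computation appeared to close.

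Once~\eqref{eq:lcs-X*} is corrected, the argument is genuinely parallel to Theorem~\ref{th:cDs-s}: compatibility of $\partial_K\bX_j^*=-\bY_j^*\,\Omega[\bY,\bX^*]$ with~\eqref{eq:lcD*} gives, from the terms proportional to $\Omega[\bY,\bX^*]$, the adjoint linear system~\eqref{eq:lc-Y*} for $\bY_i^*$, and from the terms proportional to $\bX_i^*$, the source evolution~\eqref{eq:cDs}. In particular~\eqref{eq:lc-Y*} emerges at this step, not from ``consistency of~\eqref{eq:lcD*} with its own cross-derivatives'' as you suggest---that cross-differentiation yields only~\eqref{eq:cD}, and $\bY_i^*$ does not appear anywhere in the system until the corrected~\eqref{eq:lcs-X*} introduces it.
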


\begin{Rem}
	Let us modify the source direction by translation in ordinary variable $x_\ell$, and define the corresponding flow $\partial_L = \partial_\ell + \partial_K$. Then the source direction evolution equation \eqref{eq:cDs} gets modified to
\begin{equation}
\partial_L \beta_{ij} = \beta_{i\ell} \beta_{\ell j} - \bY^*_j \bY_i,
\end{equation}
which is the form of Darboux equation with sources obtained in~\cite{HuHuTam}. 
Accordingly, the corresponding parts \eqref{eq:lcs-X} or \eqref{eq:lcs-X*} of the linear problems should be changed to
\begin{align}
\partial_L  \bX_i & = \beta_{i\ell} \bX_\ell -\Omega[\bX,\bY^*] \bY_i ,\\
\partial_L  \bX_i^* & = \beta_{\ell i} \bX_\ell^* -\bY_i^* \Omega[\bY,\bX^*]  .
\end{align}
\end{Rem}

Now one can consider the symmetric reduction of the Darboux equations with sources. Recall that the symmetric reduction 
\begin{equation}
\label{eq:cD-sym-c}
\beta_{ij}=\beta_{ji}
\end{equation} 
of the Darboux equations was considered in~\cite{Darboux-OS}. In that case the linear problem~\eqref{eq:lcD} coincides with the adjoint linear problem~\eqref{eq:lcD*}. Therefore the constraint
\begin{equation}
\label{eq:cD-sym-cY}
\bY^*_i = \bY^t_i,
\end{equation}
imposed on the sources gives, together with~\eqref{eq:cD-sym-c}, the good candidate for the symmetric Darboux equations with sources. Integrability of the reduction follows from the limiting version of Proposition~\ref{prop:symetr-prop}.
\begin{Prop}
	The above constraints imposed on Darboux equations with sources are preserved during the evolution.
\end{Prop}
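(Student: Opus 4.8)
The plan is to transcribe the proof of Proposition~\ref{prop:symetr-prop} to the continuous level, as the sentence preceding the statement already anticipates. I would measure the failure of the two reduction constraints \eqref{eq:cD-sym-c} and \eqref{eq:cD-sym-cY} by the \emph{defect functions}: the scalar $C_{ij}=\beta_{ij}-\beta_{ji}$, the row‑vector valued $\bC_i=\bY_i^t-\bY_i^*$, and -- exactly as in the discrete case -- the antisymmetric matrix defect $\bC=\Omega[\bY,\bY^*]-\Omega[\bY,\bY^*]^t$ of the potential defined by $\partial_i\Omega[\bY,\bY^*]=\bY_i\otimes\bY_i^*$. The goal is to show that $(C_{ij},\bC_i,\bC)$ satisfies a first‑order system whose right‑hand sides are homogeneous in these quantities; since the Darboux equations with sources of Theorem~\ref{th:cDs-s} are compatible, this induced system is compatible as well, and a compatible homogeneous system whose unknowns vanish on the initial data has only the zero solution. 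Hence the constraints, once imposed on the data, propagate.

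First I would treat the ordinary directions. From \eqref{eq:cD} alone, using the identity $\beta_{ik}\beta_{kj}-\beta_{jk}\beta_{ki}=C_{ik}\beta_{kj}+\beta_{ki}C_{kj}$, one obtains for $k\neq i,j$
\begin{equation*}
\partial_k C_{ij}=C_{ik}\beta_{kj}-\beta_{ki}C_{jk},
\end{equation*}
homogeneous in the $C$'s. Differentiating $\bC_i$ along $x_j$ ($j\neq i$) with the source linear problems \eqref{eq:lc-Y*}--\eqref{eq:lc-pi} (transposing the latter) gives
\begin{equation*}
\partial_j\bC_i=\beta_{ij}\bC_j+C_{ij}\bY_j^*.
\end{equation*}
For the source direction I would use \eqref{eq:cDs}; since $\bY_j^t\bY_i=\bY_i^t\bY_j$ is symmetric, the two bilinear terms cancel up to defect contributions, leaving
\begin{equation*}
\partial_K C_{ij}=\bC_j\bY_i-\bC_i\bY_j.
\end{equation*}
It remains to produce $\partial_K\bC_i$ and $\partial_K\bC$. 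The $x_K$-evolution of the source terms, forced by compatibility of \eqref{eq:lc-Y*}--\eqref{eq:lc-pi} with \eqref{eq:cDs}, is $\partial_K\bY_i=-\Omega[\bY,\bY^*]\bY_i$ and $\partial_K\bY_i^*=-\bY_i^*\Omega[\bY,\bY^*]$; with this one computes $\partial_K\bC_i=\bY_i^t\bC-\bC_i\Omega[\bY,\bY^*]$, and a similar computation for $\partial_K\Omega[\bY,\bY^*]$ shows that $\partial_K\bC$ is again an expression vanishing on the locus $\bC=0$. Thus all evolution equations are homogeneous in $(C_{ij},\bC_i,\bC)$, and the conclusion follows by uniqueness for compatible linear systems, precisely as in Proposition~\ref{prop:symetr-prop}.

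The main obstacle is this last step: in the discrete proof the non‑routine input was the list of consequences \eqref{eq:rho-K}--\eqref{id-2} of the equations with sources, and here one must manufacture their continuous counterparts -- in particular the correct $x_K$-flows of $\bY_i$, $\bY_i^*$ and of the potential $\Omega[\bY,\bY^*]$, together with the choice of integration constants that matches the "freedom in the source variable direction" -- and keep strict track of the transposition conventions between $\WW$, $\WW^*$ and the operator space $\mathrm{L}(\WW)$, deciding which matching condition between the source potentials must be imposed so that the symmetric reduction of $\Omega[\bX,\bY^*]$ is respected. Everything outside that bookkeeping is routine computation.
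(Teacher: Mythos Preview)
Your defect-function strategy is exactly the paper's, and the three identities you write down for $\partial_k C_{ij}$, $\partial_j\bC_i$ and $\partial_K C_{ij}$ are (up to sign conventions and terms quadratic in the defects) precisely what the paper records. However, you are carrying more machinery than the continuous problem needs, and the ``main obstacle'' you flag is in fact empty.

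Two observations suffice. First, in contrast with the discrete Proposition~\ref{prop:symetr-prop}, the source-direction law \eqref{eq:cDs} for $\beta_{ij}$ involves only $\bY^*_j\bY_i$ and \emph{not} the potential $\Omega[\bY,\bY^*]$. Hence $\partial_K C_{ij}$ is already homogeneous in the $\bC_i$'s alone --- as your own computation shows --- and the antisymmetric matrix defect $\bC$ never appears. The paper's proof accordingly drops $\bC$ altogether and checks only the three derivatives $\partial_k C_{ij}$, $\partial_K C_{ij}$, $\partial_j C_i$.

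Second, the system \eqref{eq:cD}--\eqref{eq:cDs} prescribes \emph{no} evolution of $\bY_i$ or $\bY^*_i$ in the $x_K$-direction; this is exactly the ``freedom in the source variable direction'' noted after Definition~\ref{def:pi}. So the laws $\partial_K\bY_i=-\Omega\,\bY_i$ and $\partial_K\bY_i^*=-\bY_i^*\,\Omega$ that you say are ``forced by compatibility'' are merely one admissible choice, not a consequence, and there is nothing to verify about $\partial_K\bC_i$ or $\partial_K\bC$: one may simply declare $\partial_K\bY^*_i=(\partial_K\bY_i)^t$. With these two points the entire last paragraph of difficulties evaporates, and the proof reduces to the three routine identities you already have --- which is exactly how the paper presents it.
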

\begin{proof}
	Indeed, simple calculation shows that the functions
	\begin{equation}
	C_{ij} = \beta_{ij} - \beta_{ji}, \qquad C_i = \bY^*_i - \bY^t_i,
	\end{equation} 
	evolve according to
	\begin{align*}
	\partial_k C_{ij} & = C_{ik} \beta_{kj} - C_{jk} \beta_{ki},\\
	\partial_K C_{ij} & = C_i \bY^*_j - C_j \bY^*_i, \\
	\partial_j C_i & = C_j \beta_{ji} - C_{ij} \bY^t_j.
	\end{align*}
\end{proof}
Finally, let us write down the equations together with the corresponding linear problem. The following result can be easily checked along lines of the proof of Theorem~\ref{th:cDs-s}.
\begin{Th}
The symmetric Darboux equations with sources (here we assume that the condition~\eqref{eq:cD-sym-c} holds)
\begin{align} \label{eq:cD-sym}
\partial_k \beta_{ij} & = \beta_{ik} \beta_{kj}, \qquad \\
\label{eq:lc-Y-sym}
\partial_j\bY_i & = \beta_{ij} \bY_j, \\
\label{eq:cDs-sym}
\partial_K \beta_{ij} & = -\bY^t_j \bY_i, 
\end{align}
are compatibility condition of the linear system
\begin{align} 
\tag{\eqref{eq:lcD}}
\partial_j \bX_i & = \beta_{ij} \bX_j,\\
\label{eq:cO-XY-sym}
\partial_i \Omega[\bX,\bY] & = \bX_i \otimes \bY_i^t, \\
\label{eq:lcs-X-sym}
\partial_K  \bX_i & = -\Omega[\bX,\bY] \bY_i.
\end{align}
\end{Th}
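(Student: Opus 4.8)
The plan is to verify that the compatibility of the linear system \eqref{eq:lcD}, \eqref{eq:cO-XY-sym}, \eqref{eq:lcs-X-sym} produces exactly equations \eqref{eq:cD-sym}--\eqref{eq:cDs-sym}, following the same pattern as the proof of Theorem~\ref{th:cDs-s} but now respecting the symmetric constraint $\beta_{ij}=\beta_{ji}$ and $\bY^*_i=\bY^t_i$. Since this is the symmetric reduction of the system in Theorem~\ref{th:cDs-s}, the cleanest route is to observe that equations \eqref{eq:cO-XY-sym} and \eqref{eq:lcs-X-sym} are precisely \eqref{eq:cO-XY*} and \eqref{eq:lcs-X} under the substitution $\bY_i^*\mapsto \bY_i^t$, and that the preceding Proposition guarantees the constraints $\beta_{ij}=\beta_{ji}$, $\bY^*_i=\bY^t_i$ are preserved by the flow; hence the compatibility computation is the same one, just specialized.

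Concretely, I would proceed in four short steps. First, cross-differentiating \eqref{eq:lcD} with respect to a third variable $x_k$ (distinct from $i,j$) and comparing $\partial_k\partial_j\bX_i$ with $\partial_j\partial_k\bX_i$ yields, in the usual way, the Darboux equations \eqref{eq:cD-sym}; this uses only the ordinary part of the linear problem. Second, cross-differentiating the defining relation \eqref{eq:cO-XY-sym} of the potential $\Omega[\bX,\bY]$ — that is, requiring $\partial_j\partial_i\Omega[\bX,\bY]=\partial_i\partial_j\Omega[\bX,\bY]$ and using \eqref{eq:lcD} for $\partial_j\bX_i$ — forces $\bX_i\otimes(\partial_j\bY_i^t)=\bX_i\otimes(\beta_{ij}\bY_j^t)$ after collecting terms, giving the linear system \eqref{eq:lc-Y-sym} for the source terms $\bY_i$ (here the symmetry $\beta_{ij}=\beta_{ji}$ is what makes the adjoint linear problem for $\bY^t$ coincide with \eqref{eq:lc-Y-sym}). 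Third, compute $\partial_j\partial_K\bX_i$ from \eqref{eq:lcs-X-sym} and compare with $\partial_K\partial_j\bX_i$ computed from \eqref{eq:lcD}: using $\partial_j\Omega[\bX,\bY]=\bX_j\otimes\bY_j^t$ (the $j$-component of \eqref{eq:cO-XY-sym}) and \eqref{eq:lc-Y-sym} for $\partial_j\bY_i$, the terms proportional to $\Omega[\bX,\bY]$ reproduce \eqref{eq:lc-Y-sym} once more (a consistency check), while the terms proportional to $\bX_j$ give $\partial_K\beta_{ij}=-\bY_j^t\bY_i$, which is \eqref{eq:cDs-sym}. Fourth, I would note that the sign conventions match: $\partial_K\bX_i=-\Omega[\bX,\bY]\bY_i$ mirrors the discrete $\bX_{i(K)}=\bX_i-\Omega[\bX,\bY^*]\bpi_{i(K)}$ and the Darboux-transformation formula \eqref{eq:DT-X}.

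Since the statement itself says ``can be easily checked along lines of the proof of Theorem~\ref{th:cDs-s}'', I expect no genuine obstacle; the only point requiring care is bookkeeping in the cross-differentiation of \eqref{eq:lcs-X-sym}, namely correctly splitting the resulting identity into its $\Omega[\bX,\bY]$-coefficient part (which must be automatically satisfied, i.e.\ it reduces to \eqref{eq:lc-Y-sym}) and its $\bX_j$-coefficient part (which yields the new source equation \eqref{eq:cDs-sym}). One should also verify that the choice of initial data for $\Omega[\bX,\bY]$ is compatible with the symmetry $\Omega[\bX,\bY]^t=\Omega[\bX,\bY]$ inherited from the reduction, in parallel with condition \eqref{eq:O-Ot} and its $\tau/\sigma$ analogue \eqref{eq:O-Ot-t/s}; this is the continuous limit of the constraint discussed after Theorem~\ref{th:C-char}, and it is preserved by the flow by the same argument as in the Proposition immediately preceding the statement. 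With these remarks the verification is complete.
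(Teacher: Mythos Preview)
Your proposal is correct and follows exactly the approach the paper indicates (``along lines of the proof of Theorem~\ref{th:cDs-s}''): the paper gives no separate proof and simply points back to that three-step compatibility computation, which is precisely your steps one through three. One small inaccuracy in your closing remark: the potential $\Omega[\bX,\bY]$ with $\bX\neq\bY$ is \emph{not} symmetric; the continuous analogue of the cross-constraint after Theorem~\ref{th:C-char} is $\Omega[\bX,\bY]^t=\Omega[\bY,\bX]$, not $\Omega[\bX,\bY]^t=\Omega[\bX,\bY]$, and in any case this plays no role in the compatibility argument for the theorem as stated.
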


\section{Conclusion and open problems}
We presented construction of the non-commutative discrete Darboux system with  self-consistent sources together with its linear problem and its adjoint. We used the methods proposed in \cite{doliwakp} applied there to Hirota's discrete KP equation. Our main aim however, was to construct the C-symmetric reduction of the discrete Darboux equations with sources. It can be done by application of the methods of~\cite{doliwakp} to the discrete version of the symmetric constraint~\cite{DS-sym} and the corresponding reduction of the fundamental transformation~\cite{MM}. To simplify the resulting equations we reformulated the (commutative) discrete Darboux equations and the fundamental transformation to the $\tau/\sigma$ form. As a by-product we gave also symmetric Darboux equations with self-consistent sources together with the corresponding linear problem. 

This paper presents another example to claim that the method proposed in \cite{doliwakp}, by means of binary Darboux transformations and the technique of introducing additional variables, is a general one to construct discrete integrable systems with sources. It would be instructive to apply the method to other integrable systems, in particular to Miwa's discrete BKP equation~\cite{Miwa} --- the corresponding reductions of the discrete Darboux equations and of the fundamental transformation are given in~\cite{BQL}. We stress that the simplicity of our method is visible on the level of discrete systems~\cite{DSI}, especially if we use their incidence-geometric interpretation~\cite{MQL,CDS,TQL,CQL,BQL,Dol-Des}. 

Let us discuss finally another generalization of our approach. In our method we used only one collective variable (source variable) obtained by simultaneous shift in $K$ additional "small" variables, leaving other $N$~discrete variables untouched. This corresponds to partition $(1^N,K)$ of $N+K$. However other partitions are possible, and one can imagine integrable discrete equations obtained from others by collective shifts in several groups of "small" variables. In fact, the transition from Hirota's discrete KP system to the discrete Darboux equations can be done in such a way~\cite{Dol-Des,Dol-Tampa}.

\section*{Acknowledgments}
A.D. was supported by National Science Centre, Poland, under grant 2015/19/B/ST2/03575 \emph{Discrete integrable systems -- theory and applications}.
R.L. was supported by the National Natural Science Foundation of China (11471182).


\bibliographystyle{amsplain}

\end{document}